\setlist[enumerate,1]{label={(\roman*)}}
\definecolor{FigRed}{RGB}{046,037,133}
\definecolor{FigBlue}{RGB}{051,117,056}
\definecolor{FigGreen}{RGB}{148,203,236}
\definecolor{linkGreen}{RGB}{175,0,0}
\definecolor{linkblue}{RGB}{0, 102, 204} 
\Crefname{figure}{Figure}{Figures}
\Crefname{claim}{Claim}{Claims}
\Crefname{algorithm}{Algorithm}{Algorithms}
\newtheorem{theorem}{Theorem}[section]
\newtheorem{proposition}[theorem]{Proposition}
\newtheorem{corollary}[theorem]{Corollary}
\newtheorem{observation}[theorem]{Observation}
\newtheorem{conjecture}[theorem]{Conjecture}
\newtheorem{lemma}[theorem]{Lemma}
\theoremstyle{definition}
\newtheorem{definition}[theorem]{Definition}
\theoremstyle{remark}
\newtheorem{remark}[theorem]{Remark}
\newtheorem{claim}[theorem]{Claim}
\newenvironment{poc}{\begin{proof}[Proof of {C}laim.]}{\end{proof}}
\newcommand{\whp}{w.h.p.\xspace}
\newcommand{\pr}{\ensuremath{\mathbb{P}}}
\newcommand{\OF}{D} 
\newcommand{\Flood}{R}
\renewcommand{\Pr}[1]{\mathbb{P}\!\left[\,#1\,\right]}
\newcommand{\Ex}[1]{\mathbb{E} \left[\,#1\,\right]}
\newcommand{\geo}[1]{\operatorname{Geo}\!\left(#1\right)}
\newcommand{\ber}[1]{\operatorname{Ber}\!\left(#1\right)}
\newcommand{\cA}{\ensuremath{\mathcal{A}}}
\newcommand{\cB}{\ensuremath{\mathcal{B}}}
\newcommand{\cD}{\ensuremath{\mathcal{D}}}
\newcommand{\cG}{\ensuremath{\mathcal{G}}}
\newcommand{\cH}{\ensuremath{\mathcal{H}}}
\newcommand{\cS}{\ensuremath{\mathcal{S}}}
\newcommand{\cT}{\ensuremath{\mathcal{T}}}
\newcommand{\cX}{\ensuremath{\mathcal{X}}}
\newcommand{\bE}{\ensuremath{\mathbb{E}}}
\newcommand{\bN}{\ensuremath{\mathbb{N}}}
\DeclareMathOperator{\texp}{\text{\textsc{Texp}}}
\DeclareMathOperator{\Dexp}{\text{\textsc{Oexp}}}
\DeclareMathOperator{\unif}{\mathsf{unif}}
\newcommand{\SuggestChange}[2]{{\color{red} \relax\ifmmode\text{\st{$#1$}}\else \st{#1}\fi}\ {\color{blue} #2}}
\definecolor{samplingboxcolor}{HTML}{FCF5E5} 
\definecolor{titleboxcolor}{HTML}{EAEAEA}   
\definecolor{bordercolor}{rgb}{0.7, 0.7, 0.7} 
\newtcolorbox[auto counter, number within=section]{boxsampling}[1][]{
	colback=samplingboxcolor,  
	colframe=bordercolor,      
	coltitle=black,            
	colbacktitle=titleboxcolor,
	fonttitle=\bfseries\large, 
	title=#1,                  
	rounded corners,           
	boxrule=0.75pt,            
	enhanced                   
}
\newtcolorbox[auto counter, number within=section]{boxproblem}[1][]{
	colback=samplingboxcolor,  
	colframe=bordercolor,      
	coltitle=black,            
	colbacktitle=titleboxcolor,
	fonttitle=\scshape\large, 
	title=#1,                  
	rounded corners,           
	boxrule=0.75pt,            
	enhanced                   
}
\newcommand{\rst}{RST\xspace}
\newlist{todolist}{itemize}{2}
\setlist[todolist]{label=$\square$}
\title{Temporal Exploration of Random Spanning Tree Models}
	\author{Anonymous author(s)}
\author{Samuel Baguley\thanks{Hasso Plattner Institute, University of Potsdam, Germany, \texttt{samuel.baguley@hpi.de}, \orcidlink{0000-0003-1090-0267}
} \and Andreas G{\"o}bel\thanks{Hasso Plattner Institute, University of Potsdam, Germany, \texttt{andreas.goebel@hpi.de}, 
\orcidlink{0000-0002-5180-7205}
}  \and Nicolas Klodt\thanks{Hasso Plattner Institute, University of Potsdam, Germany, \texttt{nicolas.klodt@hpi.de}, 
\orcidlink{0000-0002-5406-7441}
} \and George Skretas\thanks{Hasso Plattner Institute, University of Potsdam, Germany, \texttt{georgios.skretas@hpi.de}, \orcidlink{0000-0003-2514-8004}
} \and    John Sylvester\thanks{Department of Computer Science, University of Liverpool, UK, \texttt{john.sylvester@liverpool.ac.uk}, \orcidlink{0000-0002-6543-2934}
}
	\and
	Viktor Zamaraev\thanks{Department of Computer Science, University of Liverpool, UK, \texttt{viktor.zamaraev@liverpool.ac.uk}, \orcidlink{0000-0001-5755-4141}
 }}
\date{}
\begin{document}

\maketitle
\thispagestyle{empty}
\setcounter{page}{0}

\begin{abstract}
The \emph{Temporal Graph Exploration problem} (TEXP) takes as input a \emph{temporal graph}, i.e., a sequence of graphs $(G_i)_{i\in \bN}$ 
on the same vertex set, and asks for a walk of shortest length visiting all vertices, where the $i$-th step uses an edge from $G_i$.   
If each such $G_i$ is connected, then an exploration of length $n^2$ exists, and this is known to be the best possible up to a constant. More fine-grained lower and upper bounds have been obtained for restricted temporal graph classes, however, for several fundamental classes, a large gap persists between known bounds, and it remains unclear which properties of a temporal graph make it inherently difficult to explore.

Motivated by this limited understanding  and the central role of the Temporal Graph Exploration
problem in temporal graph theory, we study the problem in a \emph{randomised setting}. We introduce the \emph{Random Spanning Tree} (RST) model, which consists of a set of $n$-vertex trees together with an arbitrary probability distribution $\mu$ over this set. 
A random temporal graph generated by the RST model is a sequence of independent samples drawn from $\mu$.

We initiate a systematic study of the Temporal Graph Exploration problem in such
random temporal graphs and establish tight general bounds on exploration time.
Our first main result proves that any RST model can, with high probability\footnote{In this paper ``with high probability'' refers to $1-o(1)$, where $o(1)$ is with respect to $n$.} (\whp), be explored in $O(n^{3/2})$ time, and we show that this bound is tight up to a constant factor.  
This demonstrates a fundamental difference between the adversarial and random settings.
Our second main result shows that if all trees of an RST are subgraphs of a fixed graph with $m$ edges then, \whp, it can be explored in $O(m)$ time.

\end{abstract}

\newpage

\section{Introduction}
Many real-world networks, such as transportation systems, social interactions, and communication infrastructures, are inherently dynamic: their topologies evolve over time as connections appear or disappear \cite{New18}. Temporal graphs (also known as evolving or time-varying graphs) offer a natural and expressive framework for modelling such networks \cite{HS19}. A temporal graph is a finite or infinite sequence of static graphs over the same vertex set, where each static graph, called a snapshot, represents the state of the network at a particular time step. Unlike static graphs, temporal graphs can capture crucial aspects of reachability and interaction patterns that depend on the time ordering of edges. This temporal dimension introduces new algorithmic, structural, and combinatorial challenges \cite{KempeConnectivity2000}, which have, in recent years, driven the rapid development of temporal graph theory towards various directions.

A central theme in this development consists of path-related problems: reachability in temporal graphs is defined via time-respecting paths (also called temporal paths), which must traverse edges in chronological order. This makes the reachability relation in temporal graphs very different from the one used in static graphs (in particular, temporal reachability is neither symmetric, nor transitive), resulting in a rich landscape of natural temporal graph problems, including, connectivity \cite{KempeConnectivity2000,BCCKRRZ23}, reachability \cite{EMMZ21,CRRZ24}, temporal spanners \cite{AF16,CPS21,ABFGHKLSSW24}, and explorability \cite{MS16,ErlebachOnTemporal2021}.
The latter is the main focus of this work. 

\sloppy
\paragraph{Temporal Graph Exploration.}
The goal of the \emph{Temporal Graph Exploration problem} (TEXP) is to determine how quickly one can visit all nodes in a temporal graph starting from a given node using a temporal walk, i.e., using only the edges available at each time step. Note that the temporal explorer is `clairvoyant' as they have knowledge of the snapshot of each time-step. This contrasts work on dynamic networks, where the agent only has information of the past and present set of edges, but not the future. TEXP was introduced by Michail and Spirakis~\cite{MS16} as a temporal analogue of the classic travelling salesperson problem. 

The decision version of \textsc{TEXP}, in which one has to decide if at least one exploration schedule exists,
is an NP-complete problem \cite{MS16}. In fact, this decision problem remains NP-complete even if the underlying graph (i.e., the static graph consisting of all edges that appear in at least one snapshot) has pathwidth 2 and every snapshot is a tree \cite{BodlaenderOnexploring2019}, 
or even if the underlying graph is a star and the exploration has to start and end at the centre of the star \cite{AkridaThetemporal2021}.

Michail and Spirakis \cite{MS16} proved that \textsc{TEXP} admits no $(2- \varepsilon)$-approximation algorithm for any $\varepsilon > 0$, unless P=NP.
In other words, there is no polynomial time algorithm that outputs an exploration schedule whose arrival time is at most $(2- \varepsilon)$ times the arrival time of an optimal exploration schedule.
This was substantially strengthened by Erlebach, Hoffman, and Kammer \cite{ErlebachOnTemporal2021} who established NP-hardness of $n^{1 - \varepsilon}$-approximation for any $\varepsilon > 0$.
In fact, the result was shown for \emph{always-connected} temporal graphs, i.e., temporal graphs in which every snapshot is a connected graph. 
This connectedness assumption makes the inapproximability result tight as any always-connected temporal graph can be explored in at most $n^2$ time steps \cite{MS16}, and clearly any exploration takes at least $n-1$ time steps.

The strong inapproximability result for \textsc{TEXP} on always-connected temporal graphs has motivated further study of bounds on the length of the fastest exploration schedules in such temporal graphs. For convenience, unless stated otherwise, we assume throughout that every temporal graph is always-connected and has at least~$n^2$ snapshots, and therefore can be explored.
Erlebach, Hoffman, and Kammer~\cite{ErlebachOnTemporal2021} showed that, for certain temporal graphs, any exploration requires time $\Omega(n^2)$, thereby proving that the $O(n^2)$ upper bound of~\cite{MS16} is asymptotically tight. 
Subquadratic bounds on the exploration time are available for temporal graphs with the underlying graphs being planar, bounded treewidth, or a cycle with a constant number of chords \cite{ErlebachOnTemporal2021, AdamsonFaster2022,taghian2020exploring}.
Furthermore, temporal graphs with snapshots of bounded degree \cite{ES18,ErlebachTwo2019},
$k$-edge-deficient temporal graphs, where each snapshot is obtained from the underlying graph by removing at most $k$ edges \cite{ErlebachExploration2022}, temporal graphs with regularly and probabilistically present edges \cite{ErlebachOnTemporal2021}, word-representable temporal graphs \cite{Ada25b}, are all known to admit subquadratic exploration time.

The aforementioned bounds on temporal exploration consider a worse-case analysis, in the sense that they assume that the snapshots are drawn adversarially. Despite the significant attention this setting has received, and the extensive progress made, several core cases remain unresolved and there are huge gaps between many of the best known lower and upper bounds. Arguably the most fundamental case is that of temporal graphs with bounded-degree snapshots, where the best known lower bound is $\Omega(n \log n)$ \cite{ErlebachOnTemporal2021}, while the best known upper bound is $O(n^{1.75})$ \cite{ErlebachTwo2019}---even when the underlying graph itself has bounded degree. Even some highly structured examples are not fully understood, for example bounds on exploration of a temporal $2 \times n$ grid differ by an $\Omega(\log^3 n)$ factor \cite{ErlebachOnTemporal2021}.

In discrete mathematics and theoretical computer science, it is common to study stubborn algorithmic, structural, and combinatorial problems on random instances in order to better understand them. Some such examples are $k$-SAT \cite{RandSat,ChvatalR92,Coja-Oghlan10},  Graph Isomorphism \cite{BES80}, graph width parameters \cite{LeeLO12,HendreyNST25}, Ramsey theory \cite{LuczakRV92}, but there are many more, see for example also \cite{CGsparse,Schacht}. 

Motivated by the limited understanding of the Temporal Graph Exploration problem in the adversarial setting, we study the problem on random instances to shed light on its average-case behaviour. Following this line of inquiry, our central question is whether the quadratic bounds for temporal exploration also hold in the average case. To address this, one requires a random model for temporal graphs, and we propose the following.
\begin{boxsampling}[\small Random Spanning Tree (RST) model]
\small
    A \emph{Random Spanning Tree model} is a pair $(\cT,\mu)$, where $\cT$ is a set of trees with vertex set $[n]$, and $\mu$ is a probability distribution over $\cT$.
    
    \medskip
    A random temporal graph $\cG = (G_i)_{i \in \mathbb{N}}$ is said to be generated by the Random Spanning Tree model, denoted $\cG \sim (\cT,\mu)$, if each $G_i$ is a tree independently drawn from $\cT$ according to $\mu$.
\end{boxsampling}

We note that the focus on trees, rather than arbitrary always-connected graphs, is deliberate: trees are minimally connected graphs, and having more edges in snapshots can only facilitate exploration in temporal graphs. Thus, on the one hand, the model is minimal; on the other hand, it captures the core difficulty of temporal exploration.

The main goal of this paper is to comprehensively study the exploration time in random temporal graphs generated by Random Spanning Tree models.

\subsection{Our results}

We begin our investigation of the exploration time of random temporal graphs generated by RSTs by examining existing deterministic lower bound constructions and adapting them into suitable RSTs to see if comparable lower bounds can be achieved.

A natural starting point is the deterministic construction yielding a lower bound of $\Omega(n^2)$ by Erlebach, Hoffmann, and Kammer \cite[Lemma~3.1]{ErlebachOnTemporal2021}. In this construction~\cite[Figure~1]{ErlebachOnTemporal2021}, the temporal graph consists of an even number of vertices partitioned into two equal parts. Each snapshot forms a star, and the centre of the star changes in a round-robin fashion, cycling through the vertices in only one of the two parts.
From this construction, we obtain an RST $(\mathcal{S}, \mu)$ by taking $\mathcal{S}$ to be the set of all possible stars with the centre restricted to the first half of $[n]$, and letting $\mu$ be the uniform distribution over $\mathcal{S}$. We show (\cref{cor:urst-half-stars}) that random temporal graphs generated according to $(\mathcal{S}, \mu)$ require, \whp, $\Omega(n^{3/2})$ steps to be explored. Intuitively, this is because exploring a vertex from the second half requires a star to be picked twice, which happens after roughly $\sqrt{n}$ steps. This is in contrast to the $n$ steps of the deterministic setting.
Furthermore, we show that by varying the number of stars in $\cS$ from 1 to $n$, no larger lower bound can be achieved (\cref{thm:urst-some-stars}).

Is this $\Omega(n^{3/2})$ lower bound the best possible? Could we modify $\mathcal{S}$ and/or $\mu$ to obtain RSTs that are even "harder" to explore?

Our first main result answers these questions by showing that one cannot do worse than this example. Denoting by $\texp(\mathcal{G})$ the minimum number of time steps required by any exploration schedule of $\mathcal{G}$ starting from any vertex (worst case), the result is formally stated as follows:

\begin{restatable}{theorem}{genupper}\label{thm:general-upper}
    There exists a constant $C\in \mathbb{R}_+$ such that the following holds. Let $n \in \bN$ be sufficiently large, $\cT$ be a set of trees on vertex set $[n]$, $\mu$ be a probability distribution on $\cT$, and $\cG \sim (\cT, \mu)$.
    Then
    \[
        \Pr{\texp(\cG) \leq  C\cdot n^{3/2} } \geq 1-  e^{-n}.
    \]
\end{restatable}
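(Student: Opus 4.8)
The plan is to exhibit, for every "typical'' realisation of $\cG$, an explicit exploration schedule of length $O(n^{3/2})$, exploiting that the explorer is clairvoyant. Write the current vertex as $v$ and the set of already–visited vertices as $S$, and cut the time axis into consecutive \emph{windows} of length $m:=\lceil c_0\sqrt n\rceil$ for a large absolute constant $c_0$. Run the following greedy schedule: at the start of a window with $S\neq[n]$, let $R$ be the set of vertices reachable from $v$ within the $m$ snapshots of that window (the explorer being allowed to wait in place); if $R\not\subseteq S$, walk to a vertex of $R\setminus S$ and call the window \emph{productive}; if $R\subseteq S$, instead walk to the vertex of $R$ that is "closest'' to $[n]\setminus S$ and call the window a \emph{repositioning} window. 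Since each productive window adds at least one vertex to $S$, there are at most $n$ of them, so it suffices to show that with probability $1-e^{-n}$ the total number of windows used is $O(n)$, i.e. that repositioning windows are few.

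For the productive side, the key probabilistic input is that reachable sets grow fast enough that escaping $S$ within a window is likely \emph{unless} the explorer is genuinely trapped deep inside $S$. Because every snapshot is a spanning tree (hence connected), $R$ gains at least one vertex per snapshot, so $|R|\ge m=\Theta(\sqrt n)$; crucially, the explorer can \emph{park} at any vertex it has already reached and wait there. A birthday-type computation then shows that, over $m$ fresh independent trees, the probability that none of them attaches to $R$ a vertex that was already reachable/parked-at — which would let the explorer step to a vertex outside $S$ — is at most $\delta<\tfrac12$ once $c_0$ is a large enough constant, \emph{provided} $[n]\setminus S$ is not "screened off'' from $R$ by the shape of $S$ itself (the path on $[n]$ with $S$ an interval being the prototypical obstruction). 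Thus, from any reachable state in which the explorer is not structurally trapped, a window is productive with probability at least $1-\delta$, conditionally on the past. The number of "structurally trapped'' repositioning windows I would bound by a depth-first-search style charging argument — each such window is charged to $\Theta(\sqrt n)$ distinct vertices that the schedule must "retreat'' past, and each vertex is charged $O(1)$ times — giving $O(\sqrt n)$ of them in total. Writing $T$ for the total number of windows, these facts give $T\le n+O(\sqrt n)+(\text{non-trapped, non-productive windows})$, and the last term is at most $2\delta T$ with probability $1-e^{-\Omega(T)}$ by a Chernoff bound over the windows (the "bad luck'' indicator of each window dominates a $\mathrm{Ber}(\delta)$ variable conditionally on the past); rearranging gives $T=O(n)$, hence total length $Tm=O(n^{3/2})$, and tuning the constants so the exponent $\Omega(T)=\Omega(n)$ exceeds $n$ yields the stated probability $1-e^{-n}$.

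The main obstacle is the progress lemma for general trees: one must quantify how fast the reachable set $R$ — together with the "park anywhere in $R$'' freedom — escapes an arbitrarily (possibly adversarially) shaped explored set $S$ across $\Theta(\sqrt n)$ fresh trees drawn from an arbitrary $\mu$. Stars make this transparent (the explorer can always sit at the current centre, so escaping $S$ reduces precisely to a value repeating among $\Theta(\sqrt n)$ uniform centres, a birthday collision), but for general trees the attachment point of each fresh tree to $R$ is itself a random vertex whose law depends on both $\mu$ and $R$, so the collision has to be engineered through a second-moment / union-bound estimate that is uniform over all triples $(\mu,S,R)$. Isolating the "structurally trapped'' regime cleanly enough to feed it to the deterministic charging argument, and verifying that the two regimes together account for every window, is where the real work lies.
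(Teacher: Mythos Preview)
Your proposal has a genuine gap, and you identify it yourself: the ``progress lemma'' that a window is productive with probability at least $1-\delta$ (unless ``structurally trapped'') is asserted but not proved, and the birthday heuristic you invoke is specific to stars. For an arbitrary $\mu$, the reachable set $R$ after $\Theta(\sqrt n)$ steps is governed by the interaction between $\mu$ and the history, and there is no evident collision statistic to appeal to. Equally unspecified is the ``structurally trapped'' regime: you need a precise definition that simultaneously (a) makes the progress lemma true outside it, and (b) admits the charging argument inside it. Neither is supplied, and it is not clear such a clean dichotomy exists uniformly over $(\mu,S,v)$.

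The paper avoids this difficulty by \emph{not} trying to escape an arbitrary explored set $S$. Instead it proves a structural statement about the model itself (\cref{th:sqrt-close-vertices}): every vertex $v$ has at least $\sqrt n$ \emph{specific} vertices that are $(700\sqrt n,1/9)$-close to it. This is established by contradiction, via a ``temporal DFS'' process (\cref{sec:temporalDFS}) and a coupling argument (\cref{lem:stable-diffuse}) comparing two independent runs from the same start; if some $w$ had fewer than $\sqrt n$ close vertices, one can exhibit $\Theta(\sqrt n)$ far vertices each of which privately discovers $\Theta(\sqrt n)$ fresh vertices, overcounting beyond $n$. Given the structural result, the exploration is planned \emph{offline} on a meta-graph whose edges are ``close'' pairs: every component has size $\ge\sqrt n$, hence there are $\le\sqrt n$ components; one does an Euler tour inside each (each meta-edge costs time $\preceq 700\sqrt n\cdot\mathrm{Geo}(1/9)$) and uses \cref{lem:EHKwalk} to hop between components in time $n$. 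A Janson tail bound on the sum of $O(n)$ geometrics gives the $1-e^{-n}$ probability.

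The conceptual difference is that the paper never asks ``can I leave $S$ from where I am?''; it asks ``which pairs $(u,w)$ are mutually reachable in $\Theta(\sqrt n)$ steps with constant probability?'' and then traverses only along such pairs. Your window-by-window scheme conflates these, which forces you to control escape from adversarial $S$---precisely the obstacle you flag in your final paragraph.
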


\cref{thm:general-upper} is a random analogue to the tight bound of $O(n^2)$ for adversarial always-connected temporal graphs \cite{MS16}. This result is our main technical contribution and although individual snapshots are independent in our model, what makes this result difficult to prove is the non-independence of partially explored/reachable sets of vertices once more than one time step has been revealed. This loss of independence by exposing many time steps together is however necessary, since we show in \cref{prop:online-stars} that for the RST defined by a set of $n/2$ stars and the uniform distribution, with high probability, no online explorer can visit all vertices in less than $\Omega(n^2)$ time steps.

We complement this with our second main result, which shows that any \rst can be explored in time linear in the number of edges of the underlying graph with high probability. 

\begin{restatable}{theorem}{orderm}\label{thm:orderm}
    Let $n$ and $m$ be sufficiently  large.
    Let $G$ be a connected graph on vertex set $[n]$ with $m$ edges and let $\cT$ be the set of spanning trees of $G$. Let $\mu$ be any probability distribution on $\cT$ and $\cG \sim (\cT, \mu)$. Then
    \[
        \mathbb{P}\big[ \texp(\cG) \leq  54m \big]\geq 1-  e^{-\Omega(m^{1/6})}.
    \]
\end{restatable}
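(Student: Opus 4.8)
The plan is to exhibit, for each possible start vertex $s$, an explicit exploration strategy (adaptive, but depending only on past and present snapshots) and show it finishes within $54m$ steps with probability $1-e^{-\Omega(m^{1/6})}$; since $\log n\le\log(m+1)=o(m^{1/6})$, a union bound over the $n\le m+1$ choices of $s$ then controls $\texp(\cG)$. The strategy maintains the set $R$ of visited vertices (initially $\{s\}$); at time $t$, from its vertex $v$, the explorer looks at $G_t$ (a spanning tree of $G$): if $v$ has a $G_t$-neighbour outside $R$ it moves there — a \emph{productive} step — and otherwise it moves one step along $G_t$ towards the nearest vertex still having an unvisited $G$-neighbour — a \emph{travel} step. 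To avoid being trapped fighting a hard-to-cross cut, once the number of consecutive travel steps exceeds a threshold the strategy instead commits to reaching a fixed frontier vertex using the generic navigation that always-connectedness provides: since every snapshot is connected and spanning, the set of vertices reachable from $(v,t)$ in the time-expanded graph grows by at least one per step until it is all of $[n]$, so any target is reachable within $n-1$ steps. As there are exactly $n-1$ productive steps, it suffices to bound the number of travel steps by roughly $50m$.

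The combinatorial engine is the following cut inequality: writing $p_e=\Pr{e\in T}$ for $T\sim\mu$, for every $\emptyset\neq R\subsetneq[n]$ we have $\sum_{e\in\partial_G R}p_e\ge 1$, because every spanning tree of $G$ meets the cut $(R,[n]\setminus R)$ in at least one edge, so $1=\Pr{|T\cap\partial_G R|\ge1}\le\E{|T\cap\partial_G R|}=\sum_{e\in\partial_G R}p_e$. Equivalently, with $\pi(v,R)=\Pr{v\text{ has a }T\text{-neighbour outside }R}$ we get $\sum_{v\in R}\pi(v,R)\ge1$, so the explorer, when sitting at a well-chosen frontier vertex, makes a productive step with non-negligible probability. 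Alongside this we use the edge-budget identity $\sum_{e\in E(G)}(1-p_e)=m-(n-1)$, which follows from $\sum_e p_e=\E{|T|}=n-1$: a single snapshot omits at most $m-(n-1)$ edges of $G$, and this is ultimately what keeps the total travel cost at $O(m)$ rather than $O(nm)$, since each wasted step (where a desired edge is temporarily absent, forcing a reroute) can be charged against these at most $m-n+1$ missing edges.

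For the quantitative bound the plan is to partition time into blocks of $\ell=\Theta(m^{1/3})$ consecutive snapshots and to argue that a block beginning in a configuration $(v,R)$ with $|R|<n$ expands $R$ except with tiny probability: either the threshold fires and the generic navigation forces progress within the block, or it does not, in which case the fresh independent snapshots of the block give the explorer repeated independent chances of a productive step at frontier vertices whose aggregated success probability is at least of order $\ell^{-1/2}$, so the block fails to make progress with probability at most $(1-\ell^{-1/2})^{\ell}\le e^{-\sqrt{\ell}}=e^{-\Theta(m^{1/6})}$. Chaining such block-level estimates over the at most $n\le m+1$ rounds (a round being the maximal interval on which $|R|$ is constant), bounding the total number of blocks by combining the cut inequality with the edge-budget identity, and absorbing rare block-failures via a geometric tail, should yield that the whole exploration lasts at most $54m$ steps except with probability $e^{-\Omega(m^{1/6})}$.

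The main obstacle — the same one flagged for \cref{thm:general-upper} — is the loss of independence: although $G_1,G_2,\dots$ are independent, the pair (current position, visited set) is a deterministic function of the revealed prefix, so consecutive ``chances to expand $R$'' are not independent, and one must rule out the explorer drifting indefinitely among low-$\pi(\cdot,R)$ interior vertices. The heart of the proof is to show that the greedy-towards-the-frontier rule is self-correcting — that stuck episodes are short in expectation and that all rerouting can be charged to the cyclomatic slack $m-n+1$ — and then to upgrade the resulting in-expectation bound to a high-probability one despite the per-round costs being heavy-tailed. The block length $\ell=\Theta(m^{1/3})$ is precisely the choice balancing the two competing error terms $e^{-\sqrt{\ell}}$ and (roughly) $e^{-m/\ell}$, which is where the exponent $m^{1/6}$ originates.
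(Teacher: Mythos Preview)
Your approach cannot work, for a reason the paper itself proves: your strategy is \emph{online} (``depending only on past and present snapshots''), and \cref{prop:online-lower-bound} exhibits an RST model on a sparse graph ($m=3n-2$) for which every randomised online explorer satisfies $\Pr{\Dexp_A(\cG)\ge \tfrac{n\ln(1/q)}{4}}\ge q$. Taking $q=e^{-\Omega(m^{1/6})}$ forces any online schedule to use $\Omega(m^{7/6})\gg 54m$ time. So the claim that your greedy-towards-the-frontier rule explores in $54m$ steps with probability $1-e^{-\Omega(m^{1/6})}$ is simply false on this instance; no amount of ``self-correcting'' charging argument can rescue it. Concretely, in the ladder the explorer is at a frontier vertex with $\pi(v,R)=2/n$ in every step, so your asserted block-success probability ``of order $\ell^{-1/2}$'' is unjustified (you never say where $\ell^{-1/2}$ comes from), and indeed the correct per-step success probability is $\Theta(1/n)$, giving block-failure probability $(1-\Theta(1/n))^\ell \approx 1$ for $\ell=\Theta(m^{1/3})$.

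The paper's proof works precisely because it is \emph{not} online: it fixes the minimum-weight spanning tree $T_{\mathsf{min}}$ of $G$ (with edge weights $w_e=1/p_e$), removes its few heavy edges to obtain a backbone forest, further refines this into $O(m^{1/3})$ ``fast components'' of small weighted diameter, and then \emph{looks into the future} to identify, for any bipartition of fast components, a time-edge crossing the cut within a window of length $m/\alpha$. These future-dependent ``meta-edges'' are what replace the heavy edges of $T_{\mathsf{min}}$ and bring the high-probability exploration time down from the online $O(m\log n)$ of \cite{ErlebachOnTemporal2021} to $O(m)$. The $m^{1/6}$ exponent arises from the threshold $\alpha=m^{1/6}$ governing which edges are ``heavy'', combined with Janson's tail bound for sums of geometrics---not from any block-length balancing as you suggest.
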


\cref{thm:orderm} relates to a result of Erlebach, Hoffmann, and Kammer \cite{ErlebachOnTemporal2021} about exploration of temporal graphs with probabilistically present edges.
In such temporal graphs, every edge $e$ of the underlying graph $G$ appears in each time step independently with its own probability $p_e$, with the extra assumption that the total sum of the probabilities of the edges over each cut of $G$ is lower-bounded by some fixed constant. Due to being always-connected, temporal graphs generated by an RST satisfy  this assumption and therefore any result about temporal graphs with probabilistically present edges applies to our model too. 

In particular, due to \cite[Theorem 6.1]{ErlebachOnTemporal2021}\footnote{The $O(m)$ in expectation bound can be found in the proof of \cite[Theorem 6.1]{ErlebachOnTemporal2021}, the Theorem statement instead gives an $O(m\log n)$ bound \whp}, if $\cT$ is a set of spanning trees of a graph $G$ with $m$ edges, and $\mu$ is an arbitrary distribution over $\cT$, then, \emph{in expectation}, $\cG \sim (\cT,\mu)$ can be explored in $O(m)$ time steps.

This bound is order-optimal for sparse graphs, i.e., graphs with $O(n)$ edges, where $n$ is the number of vertices. Moreover, the corresponding exploration schedule is \emph{online}, meaning that the explorer does not require knowledge of future snapshots to make their decision in the current one.

The main limitation of this result from \cite{ErlebachOnTemporal2021} is that it bounds the exploration time in expectation. It was shown in \cite{ErlebachOnTemporal2021}, that the exploration time can be bounded with high probability at the expense of worsening the bound. Namely, for every $d \geq 1$, there exists an online exploration strategy that can be completed in $O_{d}(m \cdot \log n)$ time steps with probability at least $1-1/n^{d}$ \cite{ErlebachOnTemporal2021}.

\cref{thm:orderm} shows that $\cG$ can be explored in $O(m)$ steps \whp~if the explorer can see the future snapshots. Furthermore, we show in \cref{prop:online-lower-bound} that this linear bound cannot be achieved without knowledge of the future. In particular, we show that any online schedule that succeeds with probability at least $1 - n^{-\Omega(1)}$ requires $\Omega(n\log n)$ time steps, matching the upper bound of \cite{ErlebachOnTemporal2021} in the case of sparse graphs.

\subsection{Techniques \& Proof outlines}
In this section we give a high-level outline of the proof strategies used in our main results. Note that some constants are simplified/neglected for ease of presentation.

\paragraph{Exploration in $O(n^{3/2})$ time (\cref{thm:general-upper}).}

We start by outlining our proof of explorability of $(\cT,\mu)$ in time $O(n^{3/2})$ for arbitrary set of trees $\cT$ over $[n]$ and arbitrary distribution $\mu$ over $\cT$.

The key property (\cref{th:sqrt-close-vertices}) of RST that allows us to establish this upper bound is that
\begin{adjustwidth}{2em}{0em}
    ($\star$) every vertex $v \in [n]$ has a fixed set of at least $\sqrt{n}$ \emph{close} vertices, i.e., vertices that $v$ can reach in $\Theta(\sqrt{n})$ time steps with probability at least 1/9.
\end{adjustwidth}

This property is sufficient to derive the upper bound by constructing a straightforward exploration schedule in which, on average, $\sqrt{n}$ time steps are spent to visit each new vertex.
This is done as follows:
\begin{enumerate}[label=\arabic*.]
	\item First we build an auxiliary \emph{meta-graph} on $[n]$ where two vertices are connected if and only if they are close to each other. 
	
	\item Due to ($\star$), each vertex is in a connected component with at least $\sqrt{n}$ other vertices, and therefore there are at most $\sqrt{n}$ connected components in the meta-graph. 
	
	\item 
    Each connected component $C$ of the meta-graph can be explored in $\Theta(|C|\sqrt{n})$ time steps, where $|C|$ is the number of vertices in $C$, by first fixing an arbitrary spanning tree of $C$ and then traversing it along an Eulerian tour, spending $\Theta(\sqrt{n})$ time steps for each edge of the tour.

	\item After exploring the current component, we go to an unexplored component, which can always be done in at most $n$ steps (see \cref{lem:EHKwalk}).
	
	\item Summing over all connected components, the total time of this schedule is 
	\[
	       \sum_{C}    \Big(O(|C|\sqrt{n}) + n\Big) 
           =
           O(\sqrt{n}) \cdot \sum_{C} |C| + \sum_{C} n =
           O(n^{3/2}).
	\]
\end{enumerate}

The proof of ($\star$) forms the technical core of this upper bound. It proceeds by assuming the existence of a vertex~$w$ with fewer than $\sqrt{n}$ close vertices and arriving at a contradiction by demonstrating the existence of a set of vertices that, in expectation, can reach more than $n$ distinct vertices.

A key tool in the proof is a natural temporal analogue of the Depth-First search (DFS) algorithm, which, to the best of our knowledge, is introduced here for the first time (\cref{sec:temporalDFS}). 
Roughly speaking, the \emph{temporal DFS algorithm} discovers one new vertex in each snapshot~$F$ by revealing the first undiscovered vertex according to the (static) DFS order of~$F$.

A more detailed outline of the proof of ($\star$) is as follows:

\begin{enumerate}[label=\arabic*.]
	\item We fix a set $B$ of vertices that $w$ can discover via the temporal DFS algorithm (and therefore can reach via temporal path) in $\Theta(\sqrt{n})$ time steps with a constant probability. 
	
	\item Each vertex in $B$ is close to $w$, and thus, by our assumption, $|B|<\sqrt{n}$.
	
	\item This upper bound on $|B|$ allows us to show (\cref{cor:many-linking}) that \whp~among $\Theta(\sqrt{n})$ snapshots a constant fraction of them contain an edge that connects a vertex in $B$ with a vertex which is far from (i.e.~not close to) $w$. 
	
	\item We show that each such far vertex $u$, in the subsequent $\Theta(\sqrt{n})$ time steps, in expectation discovers via temporal DFS at least $\Theta(\sqrt{n})$ vertices none of which can be reached from $B$ during the same time interval (\cref{cor:F-S-new}).

	\item We further show that all these sets of $\Theta(\sqrt{n})$ vertices are disjoint, and thus they add up to more than $n$ vertices, leading to the sought contradiction.
\end{enumerate}

\Cref{cor:F-S-new}, which underpins Step 4 above, utilizes the following behaviour of the temporal DFS algorithm on RST models (\cref{lem:stable-diffuse}). Roughly speaking, for a fixed vertex~$v$ and two independent random temporal graphs, \whp, the sets of vertices discovered from~$v$ via temporal DFS in the two graphs either overlap substantially, or each set contains a large subset of private vertices, each of which can be revealed from the opposite set in one step of the temporal DFS algorithm with probability at least $1/(2n)$.

\paragraph{Exploration in $O(m)$ time (\cref{thm:orderm}).}
Recall, in \cref{thm:orderm}, $\cT$ is the set of spanning trees of a connected graph $G$.
For each edge $e$ of graph $G$, the distribution $\mu$ defines the probability of $p_e$ that $e$ appears in any given snapshot. Thus, the number of snapshots one needs to wait for $e$ to appear, is a geometric random variable $\xi_e$ with the success probability $p_e$. We define the weight of $e$ as $w_e = 1/p_e$, i.e., the expected value of $\xi_e$. A natural strategy to build an efficient online exploration schedule of $\cG \sim (\cT, \mu)$ is to find the minimum-weight spanning tree $T$ of $G$ and visit the vertices following an Euler tour of $T$ by crossing each edge at the earliest opportunity. In expectation, the length of such an exploration is upper-bounded by $2\sum_{e \in E(T)} \Ex{\xi_e} = 2\sum_{e \in E(T)} w_e$, i.e., by twice the total weight of $T$.

Using this strategy, Erlebach, Hoffmann, and Kammer \cite{ErlebachOnTemporal2021} showed that, \emph{in expectation}, $\cG$\footnote{Recall, in \cite{ErlebachOnTemporal2021},  $\cG$ is defined as a temporal graph with probabilistically available edges, where each edge $e$ of $G$ appears independently with probability $p_e$} can be explored in $O(m)$ steps by proving that a minimum-weight spanning tree of $G$ has weight $O(m)$, where $m$ is the number of edges if $G$.

The main obstacle to exploring the graph in $O(m)$ steps \emph{\whp} in an online manner by following the minimum-weight spanning tree is the potential presence of high-weight edges, which may take a long time to cross. In essence, the key idea behind the proof of \cref{thm:orderm} is to replace such high-weight edges with fast temporal paths. This substitution requires the explorer to have access to future snapshots.

We outline this strategy in more detail below:
\begin{enumerate}[label=\arabic*.]
    \item \textit{Building the backbone forest}:
    Denote by $T_{\mathsf{min}}$ the minimum-weight spanning tree of $G$.  
    Among the edges of $T_{\mathsf{min}}$, we select those with sufficiently low weight. These low-weight edges form a \emph{backbone} forest~$F$, each connected component of which can be explored efficiently.

    \item \textit{Partitioning backbone components into fast components}: 
    In order to establish fast connections between the connected components of the backbone forest $F$,
    we partition each such component into subtrees, which we call \emph{fast components}. These are constructed to have both small diameter and only ultra low-weight edges, ensuring that traversal within each fast component is fast. 
    We further guarantee that the total number of the fast components is not too large, which allows us to efficiently connect the backbone components together.

    \item \textit{Connecting backbone components via meta-edges}:
    Using the bound on the total number of fast components, we show that in any partition of the fast components into two non-empty sets, there exists \whp~a pair of components—one from each set—such that an edge connects them during any short time interval. This property implies the existence of a fast temporal path between any source and destination vertices in these fast components, starting at any time step. 

    Specifically, by looking ahead into the future, we can identify a connecting edge between the two fast components within a short time interval. We then travel from the source vertex to an endpoint of the connecting edge before the interval begins, cross to the other component as soon as the edge becomes available, and continue to the destination vertex. Overall, the resulting temporal path is fast due to efficient traversal within fast components and the guaranteed presence of a connecting edge in every short time interval.
    
    We mark the existence of such an efficient connection between the two fast components by creating a meta-edge between an arbitrary pair of vertices, with one vertex from each of the components.
    
    \item \textit{Building efficient exploration}:
    Since any bipartition of the connected components of the backbone forest~$F$ induces a bipartition of the fast components, we can use meta-edges to greedily merge the connected components of~$F$ into a tree. We denote this tree by~$H$ and use it to construct an efficient exploration schedule.

    To this end, we fix an Euler tour of~$H$ and define the exploration schedule by following this tour. We show that the total exploration time of this schedule is $O(m)$ \whp, by separately bounding the total traversal time of the edges in the backbone forest and the total traversal time of the meta-edges in~$H$.
\end{enumerate}

\subsection{Future directions and open problems}\label{sec:outlook}

Our study of the Temporal Graph Exploration problem in Random Spanning Tree models suggests several interesting questions and research directions, some of which we outline below.

In deterministic always-connected temporal graphs, temporal exploration can be done much faster than the worst case $n^2$ time if snapshots have bounded vertex degree. 
For example, temporal graphs in which every snapshot has degree at most $d$ can be explored in time $O_d(n^{1.75})$ \cite{ErlebachTwo2019}. On the other hand, for every $2 \leq d \leq n-1$, there exist always-connected temporal graphs with underlying graphs of degree at most $d$ that require $\Omega(d n)$ time steps to be explored \cite{ErlebachOnTemporal2021}. By adapting the construction behind this lower bound, we show (\cref{thm:bounded-degree-lower}) that, for any $2 \leq d \leq n-1$, there exists an RST with the underlying graph of degree at most $d$ that requires $\Omega(\sqrt{d} \cdot n)$ time steps to be explored. This result generalises our $\Omega(n^{3/2})$ lower bound in \cref{thm:general-upper} (where snapshots have maximum degree $d = n-1$), and we believe there exists a matching upper bound, which would generalise our \cref{thm:general-upper}.

\begin{conjecture}\label{con:degree-upper}
    There exists a constant $C\in \mathbb{R}_+$ such that the following holds.
    Let $n \in \bN$, and let $d := d(n)$ be such that $2 \leq d \leq n-1$. Let $\cT$ be a set of trees of maximum degree $d$ on vertex set $[n]$, $\mu$ be a probability distribution on $\cT$, and $\cG \sim (\cT, \mu)$.
	Then 
    \[
        \Pr{\texp(\cG) \leq  C\cdot \sqrt{d} \cdot n } \xrightarrow[n \to \infty]{} 1.
    \]
\end{conjecture}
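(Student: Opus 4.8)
The plan is to follow the architecture of the proof of \cref{thm:general-upper}, rescaling the parameter $\sqrt{n}$ to $\sqrt{d}$. Call a vertex $u$ \emph{$d$-close} to $v$ if $v$ can reach $u$ within $\Theta(\sqrt{d})$ time steps with probability at least an absolute constant, and aim for the degree-$d$ analogue of \cref{th:sqrt-close-vertices}:
\begin{adjustwidth}{2em}{0em}
($\star_d$) every vertex of $[n]$ has at least $\Omega(n/\sqrt{d})$ $d$-close vertices.
\end{adjustwidth}
Granting ($\star_d$), the exploration schedule is obtained by transcribing the five-step construction in the outline of \cref{thm:general-upper}: build the meta-graph on $[n]$ joining $d$-close pairs; by ($\star_d$) every component has $\Omega(n/\sqrt{d})$ vertices, hence there are $O(\sqrt{d})$ of them; explore each component $C$ along an Euler tour of a spanning tree of it, spending $\Theta(\sqrt{d})$ steps per tour edge (boosting the constant success probability exactly as in the proof of \cref{thm:general-upper}), at cost $O(|C|\sqrt{d})$; and move between components in at most $n$ steps via \cref{lem:EHKwalk}. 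Summing, $\sum_C\big(O(|C|\sqrt{d})+n\big)=O(n\sqrt{d})+O(\sqrt{d})\cdot n=O(\sqrt{d}\cdot n)$, with the failure probability controlled by a union bound over the $O(n)$ tour edges as before. Note $d=\Theta(n)$ recovers \cref{thm:general-upper}. This part should be essentially a transcription of the existing argument.

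The crux is ($\star_d$), which I would attack by contradiction using the temporal DFS machinery of \cref{sec:temporalDFS}, rescaled. Assume $w$ has fewer than $cn/\sqrt{d}$ $d$-close vertices. Fix a set $B$ that $w$ discovers via temporal DFS within $\Theta(\sqrt{d})$ steps with constant probability; every vertex of $B$ is $d$-close to $w$, so $|B|<cn/\sqrt{d}$. From this bound on $|B|$, the analogues of \cref{cor:many-linking}, \cref{lem:stable-diffuse} and \cref{cor:F-S-new} --- whose statements concern the model in general and should go through after replacing the scale $\sqrt{n}$ by $\sqrt{d}$ and re-tracking constants --- should yield that, \whp, a constant fraction of $\Theta(\sqrt{d})$ snapshots contain an edge linking $B$ to a vertex far from $w$, and that each such far vertex in turn discovers, via temporal DFS over the remaining $\Theta(\sqrt{d})$ steps, a fresh set of $\Theta(\sqrt{d})$ vertices, with these sets pairwise disjoint and disjoint from the reach of $B$. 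This exhibits $\Theta(\sqrt{d})\cdot\Theta(\sqrt{d})=\Theta(d)$ vertices that are $d$-close to $w$.

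The main obstacle is that this single round of linking-and-expansion produces only $\Theta(d)$ $d$-close vertices, which contradicts the assumed bound $cn/\sqrt{d}$ only when $d=\Omega(n^{2/3})$ (the exact threshold depending on the constants). For sparser trees this is not enough, and it cannot be patched within the same framework: for $d=O(1)$ property ($\star_d$) is outright false --- in the RST concentrated on a single path, every vertex has only $O(1)$ vertices $d$-close to it (yet such an RST is trivially explored in $O(n)$ steps by walking to an endpoint), so the sparse regime needs a genuinely different schedule. The natural attempt is to iterate the linking step \emph{within} a $\Theta(\sqrt{d})$-window, letting the newly reached far vertices themselves link to further vertices in later snapshots so that the reachable set keeps growing; the difficulty is that temporal DFS contributes only one vertex per snapshot, so a naive iteration overruns the $\Theta(\sqrt{d})$ budget. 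Making the parallel linking across snapshots compound with DFS growth, while simultaneously controlling the overlaps and the sizes of the ``private'' parts among many far vertices at once --- a multi-source strengthening of \cref{lem:stable-diffuse} --- is where I expect the substantive new work to lie; an alternative for the very sparse regime would be to weaken ``$d$-close'' to allow success probability $1/\operatorname{poly}(n)$ and design an amortized exploration collecting such vertices over the whole schedule, but obtaining the required concentration there appears to need new ideas as well.
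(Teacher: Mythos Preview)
This statement is a \emph{conjecture} in the paper, not a theorem: the paper does not prove it and explicitly presents it as open. There is therefore no proof to compare your attempt against. Your proposal is not a proof either, and you correctly identify this yourself.

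That said, it is worth noting that the paper's own discussion of the conjecture (in the outlook section) considers essentially the same approach you sketch and flags a closely related obstacle. The paper phrases the target as showing that the meta-graph of $(\Theta(\sqrt{d}),\text{const})$-close pairs has $O(\sqrt{d})$ connected components, and then observes that one \emph{cannot} obtain this via any uniform lower bound on component sizes. Their witness is an RST with $d=3$ consisting of an $(n{-}1)$-path whose edges are always present, plus an apex vertex $v$ connected to a uniformly random path vertex in each snapshot: the apex needs $\Omega(\sqrt{n})$ steps to reach any fixed vertex with constant probability, so $v$ is $d$-close to nobody, yet the meta-graph has only two components. This example is sharper than your path example in one respect: it shows that the weaker ``$O(\sqrt{d})$ components'' goal could conceivably hold even though your stronger property ($\star_d$) fails dramatically (a component of size $1$), so aiming directly for ($\star_d$) is the wrong intermediate target. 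Your diagnosis that the sparse regime needs a genuinely different schedule agrees with the paper's assessment.

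One further issue with your rescaling that you do not mention: the analogue of \cref{lem:close-ball} ties the time parameter $t$ to $22k$, where $k$ is the assumed bound on the number of close vertices. In the original proof $k=\sqrt{n}$ gives $t=\Theta(\sqrt{n})$; but if you assume $k=cn/\sqrt{d}$ as in ($\star_d$), the lemma would force $t=\Theta(n/\sqrt{d})$, not $\Theta(\sqrt{d})$, so the parameters do not line up as you describe. The linking/expansion count you give ($\Theta(\sqrt{d})\cdot\Theta(\sqrt{d})=\Theta(d)$) therefore does not follow from a straightforward transcription of the existing lemmas.
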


One approach to proving this conjecture---similarly to the proof of \cref{thm:general-upper}---would be to show that the auxiliary graph, in which two vertices are connected if and only if they can reach each other in \( O(\sqrt{d}) \) time steps with constant probability, has \( O(\sqrt{d}) \) connected components. However, unlike in the proof of \cref{thm:general-upper}, we cannot expect any uniform upper bound on the size of the connected components in the auxiliary graph, as demonstrated by the example in \cref{fig:enter-label}.

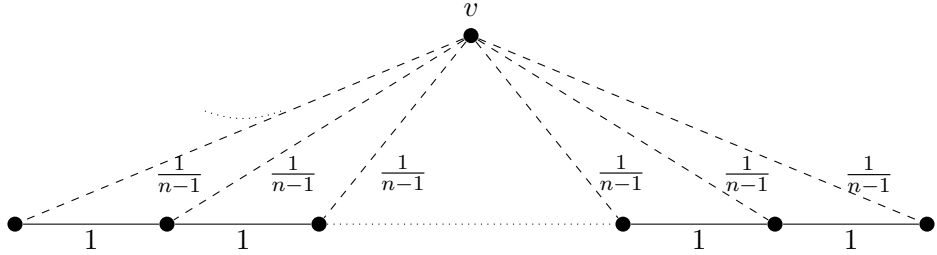
\begin{figure} 
    \centering
 	\begin{tikzpicture}[every node/.style={circle,fill=black,inner sep=2pt}, 
		edge label/.style={draw=none,fill=none,shape=rectangle}]
		
		\node[label=north:$v$] (apex) at (6,2.5) {};
		
		\node (v1) at (0,0) {};
		\node (v2) at (2,0) {};
		\node (v3) at (4,0) {};
		 
		\node (vn2) at (8,0) {};
		\node (vn1) at (10,0) {};
		\node (vn)  at (12,0) {};
		
	 \draw (v1) -- node[midway, below, edge label] {$1$} (v2);
	\draw (v2) -- node[midway, below, edge label] {$1$} (v3);
		\draw[dotted] (v3) -- (vn2);
	  \draw (vn2) -- node[midway, below, edge label] {$1$} (vn1);
	\draw (vn1) -- node[midway, below, edge label] {$1$} (vn);
	
	\draw[dashed] (apex) -- node[pos=0.75, right, edge label] {$\;\;\frac{1}{n-1}$} (v1); 
		\draw[dashed] (apex) -- node[pos=0.75, right, edge label] {$\;\;\frac{1}{n-1}$} (v2); 
			\draw[dashed] (apex) -- node[pos=0.75, right, edge label] {$\,\;\frac{1}{n-1}$} (v3); 
	\draw[dashed] (apex) -- node[pos=0.75, right, edge label] {$\;\;\;\,\frac{1}{n-1}$} (vn); 
\draw[dashed] (apex) -- node[pos=0.75, right, edge label] {$\;\,\;\frac{1}{n-1}$} (vn1); 
\draw[dashed] (apex) -- node[pos=0.75, right, edge label] {$\;\frac{1}{n-1}$} (vn2); 

		\draw[dotted] (2.5,1.5) to[out=-20, in=200] (3.5,1.5);
	\end{tikzpicture}
    \caption{An RST with snapshots of maximum degree $3$, with a vertex $v$ that needs $\Omega(\sqrt{n})$ time steps to reach any other fixed vertex with constant probability.}
    \label{fig:enter-label}
\end{figure}

In this example, all horizontal edges are always present, and at each snapshot exactly one of the other edges appears uniformly at random. In temporal graphs generated by this RST, the maximum degree of each snapshot is at most 3, but the top vertex $v$ in \cref{fig:enter-label} requires $\Theta(\sqrt{n})$ time steps to reach any other fixed vertex with constant probability. Thus, the auxiliary graph has two connected components---one of size 1 and the other of size $n - 1$.

Our next conjecture concerns algorithmic aspects of temporal exploration of RSTs. A natural exploration strategy is greedy exploration, which starts from an arbitrary vertex, looks into the future to find the closest unvisited vertex, visits that vertex at the earliest possible time, and then repeats the process: looking ahead to discover the next closest unvisited vertex and visiting it as early as possible from the current position, and so on. While this is not always an effective strategy in adversarially constructed deterministic temporal graphs, we believe that in the RST models, this strategy performs within a constant factor of the optimal.

\begin{conjecture}
    \label{con:greedy-approx}
    There exists a constant $C\in \mathbb{R}_+$ such that the following holds. Let $n$ be a natural number, $\cT$ be a set of trees on vertex set $[n]$, $\mu$ be a probability distribution on $\cT$, and $\cG \sim (\cT, \mu)$. Let $\tau$ be the exploration time of $\cG$ using the greedy strategy. Then
    \[
        \pr\Big[ {\tau \leq  C \cdot \texp(\cG)} \Big] \xrightarrow[n \to \infty]{} 1.
    \]
\end{conjecture}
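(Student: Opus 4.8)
The plan is to compare the greedy explorer $\cA_{g}$ with a fixed optimal schedule $\cA^{*}$ of length $\texp(\cG)$ on the same realisation of $\cG$, and to bound $\tau$ by an amortised, phase-by-phase argument, where a \emph{phase} of $\cA_{g}$ runs from one newly discovered vertex to the next. One observation is immediate and favourable: if at time $t$ the greedy explorer sits at $v$ with unvisited set $U$, then the cost of its current phase is exactly $\min_{u\in U} d_{t}(v,u)$, where $d_{t}(v,u)$ denotes the earliest arrival time at $u$ of a temporal walk leaving $v$ at time $t$, and this is a lower bound on the time needed by \emph{any} continuation from the state $(v,t,U)$ to leave $U$. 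So greedy's per-phase cost never exceeds what $\cA^{*}$ would need from the \emph{same} state. The whole difficulty is that $\cA_{g}$ and $\cA^{*}$ drift apart — after a short greedy phase the two explorers sit at different vertices and have visited different sets — so these per-phase bounds cannot be telescoped naively.

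To handle the drift, note first that comparison along the ``visited-set'' coordinate is essentially free: for a fixed $\cG$ and a fixed vertex/time, enlarging the set of already-visited vertices can only decrease the remaining exploration cost, so one may pretend $\cA^{*}$ has visited exactly greedy's current set, leaving the current \emph{vertex} as the only genuine discrepancy. The RST structure then enters through a \emph{position-is-cheap} lemma, in the spirit of property $(\star)$ and \cref{lem:stable-diffuse}: from any vertex one reaches, within $\Theta(\sqrt{n})$ steps and with constant probability, a common core of $\Omega(\sqrt{n})$ vertices, so being parked at the ``wrong'' vertex costs the greedy explorer at most an additive $O(\sqrt{n})$ per phase relative to being optimally positioned. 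Each greedy phase would then cost at most a constant multiple of the benchmark's local rate of discovery, plus an additive $O(\sqrt{n})$.

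The obstacle is that an additive $O(\sqrt{n})$ per phase sums to $O(n^{3/2})$ over the $\le n$ phases, which exceeds $\texp(\cG)$ when $\texp(\cG) = o(n^{3/2})$. To beat this I would localise to the meta-graph of \cref{thm:general-upper}, whose components $C_{1},\dots,C_{k}$ satisfy $k\le\sqrt{n}$ and $|C_{i}|\ge\sqrt{n}$, and equip each component $C$ with a ``closeness radius'' $\rho_{C}$ quantifying how expensive it is to move between its vertices. The target is then a two-sided bound: greedy's cost, summed over phases that discover a vertex of $C$, is $O(|C|\cdot\rho_{C})$ up to an additive term charged to inter-component moves, while every clairvoyant schedule — in particular $\cA^{*}$ — must spend $\Omega(|C|\cdot\rho_{C})$ inside $C$, which one would obtain from a lower-bound lemma generalising the $\Omega(n^{3/2})$ bound for the half-star RST (\cref{cor:urst-half-stars}): a set of vertices reachable from one another only with constant probability in $\Theta(\rho)$ steps forces $\Omega(\rho)$ steps per newly discovered vertex, for every explorer. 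The inter-component moves — at most $O(k)=O(\sqrt{n})$ of them, each of cost at most $n$ by \cref{lem:EHKwalk} — contribute $O(n^{3/2})$ in the worst case, which is $O(\texp(\cG))$ when $\texp(\cG)=\Omega(n^{3/2})$; when $\texp(\cG)=o(n^{3/2})$ one must instead argue, once more from the global i.i.d.\ structure, that inter-component transitions are cheap for \emph{every} explorer, so the benchmark pays for them as well.

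The main obstacle is the lower-bound direction, and it sits exactly where \cref{thm:general-upper} is delicate: the position-is-cheap lemma, the per-component lower bound, and the ``transitions are cheap when $\texp$ is small'' claim all require reasoning about reachability correlations across many simultaneously revealed snapshots — now in service of \emph{adversarial}-style lower bounds inside a random model, which is typically harder than the matching upper bound. In particular, isolating a scalar $\rho_{C}$ that simultaneously upper-bounds greedy's per-vertex cost in $C$ and lower-bounds every strategy's per-vertex cost in $C$, with the two matching up to a constant rather than a $\operatorname{polylog}$ or $\sqrt{\cdot}$ factor, is the conceptual crux; it may require a finer invariant than a single radius, and handling the fact that greedy need not finish one meta-component before starting another is where I expect most of the remaining work.
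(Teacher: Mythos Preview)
This statement is a \emph{conjecture} in the paper, not a theorem: the authors explicitly leave it open and offer no proof. So there is nothing to compare your proposal against; the relevant question is whether your sketch constitutes genuine progress toward resolving it.

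As a research plan it identifies the right difficulty --- controlling the drift between the greedy and optimal explorers --- but the core step is not yet an argument. The proposed per-component lower bound, that \emph{every} schedule must spend $\Omega(|C|\cdot\rho_{C})$ inside a meta-component $C$, is asserted by analogy with the half-star example but nothing in the paper (or in your sketch) supplies a mechanism for proving it in a general RST. The paper's structural results (\cref{th:sqrt-close-vertices}, \cref{lem:stable-diffuse}) are all at the single fixed scale $\Theta(\sqrt{n})$; you would need a multi-scale version delivering, for each component, a radius $\rho_{C}$ that is simultaneously an upper bound on greedy's per-vertex cost and a lower bound on the optimum's, matching up to a constant. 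No such two-sided statement is available, and the lower-bound side in particular would require new ideas: the only lower bounds in the paper come from explicit birthday-type constructions, not from the abstract closeness structure. Your ``position-is-cheap'' lemma also overreaches --- $O(\sqrt{n})$ relocation cost holds only \emph{within} a meta-component, whereas greedy may hop between components every phase, and you have no control over how many such hops occur (greedy is under no obligation to finish one component before starting another). In short, the outline is a reasonable wish-list of lemmas, but the decisive one --- the tight per-component lower bound --- is the conjecture restated at a finer granularity, not a reduction of it.
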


Finally, we believe that the RST models is a natural framework that contributes to the actively developing line of research on random models of temporal graphs \cite{CMMD07,CMPS09,CMMPS10,GH10,ZMN17,AMNRSZ20,CRRZ24,BCCKRRZ23,BKL24,ADL24,BLP24,BDKLM25,ADL25}.
Beyond common temporal graph properties and algorithmic problems one can study on temporal graphs generated by Random Spanning Tree models, there are two natural ways to generalise the model.

First, one can consider arbitrary sets of $n$-vertex graphs as snapshots, rather than only spanning trees. 
For example, in the context of temporal graph exploration, if there is at least one connected snapshot in the support of $\mu$, then the exploration time is almost surely finite. It is possible that the assumption of always-connectedness can also be relaxed in a more meaningful way by providing some conditions on the set of snapshots and $\mu$ which guarantee `fast' exploration times even when none of the snapshots are connected. A simple example here is the uniform measure over graphs consisting of a pair of cliques of size $n/2$, which is explorable (even online) in time $O(n)$. 

Second, one can introduce dependencies between snapshots by replacing the distribution $\mu$ with a Markov chain whose states are unique snapshots (note that the RST models would be a special case of this model). Such a generalisation would provide some interpolation between completely independent snapshots and deterministic temporal graphs; for example, it would capture all deterministic temporal graphs with pairwise distinct snapshots as well as periodic temporal graphs. It would be interesting to understand which properties of such Markov chains lead to fast exploration, where an obvious pair would be ergodic and rapidly mixing. Several specific instances of such random temporal graphs have appeared previously, such as edge Markovian \cite{CaiSZ20,EdgeMarkovianClemnti} and mobile geometric \cite{Mobile1,Mobile2} graphs. The flooding problem has  been studied on more general Markovian models \cite{ClementiMPS11}. 

\section{Preliminaries}

For a natural number $n$, we use standard notation $[n]$ to denote the set $\{1,2, \ldots, n\}$.

A {\em temporal graph} $\cG$ is a finite or infinite sequence $G_1=(V,E_1), G_2=(V,E_2), G_3=(V,E_3), \ldots$ of static graphs over the same vertex set $V$, which are called the snapshots of $\cG$. The \emph{lifetime} of $\cG$ is the number of snapshots it has. The static graph over the vertex set $V$ that has an edge between two vertices if and only if these vertices are adjacent in at least one snapshot, is called the \emph{underlying graph} of $\cG$. A \emph{time-edge} of $\cG$ is a pair $(e,t)$, where $t$ is a natural number not exceeding the lifetime of $\cG$ and $e \in E_t$; the number $t$ is called the time step of the time-edge $(e,t)$.
The temporal graph $\cG$ is \emph{always-connected} if each of its snapshots is a connected graph.

A \emph{temporal path} (resp. \emph{temporal walk}) in $\cG$ is a sequence $((e_1,t_1),\dots,(e_{\ell},t_{\ell}))$ of time-edges of $\cG$ such that $(e_1,e_2, \ldots, e_{\ell})$ is a path (resp. walk) in the underlying graph of $\cG$ and $t_i < t_{i+1}$ holds for every $i \in [\ell-1]$.
A vertex $v \in V$ can \emph{reach} a vertex $u$ in $\cG$ by time step $t$, if there exists a temporal path from $v$ to $u$ whose last time-edge has time at most $t$. Given a set of vertices $S \subseteq V$, we denote by $R_t^{\cG}(S)$ the set of vertices each of which can be reached by time $t$ from some vertex in $S$. If $S$ consists of a single vertex $v$, for brevity, we write $R_t^{\cG}(v)$ as a shorthand for $R_t^{\cG}(\{v\})$.

Given a temporal graph $\cG = (G_1,G_2, \ldots)$ and time steps $i,j$ such that $i \leq j$ and $j$ does not exceed the lifetime of \cG, we denote by $\cG_{[i,j]}$ the temporal graph $(G_{i},G_{i+1},\ldots,G_{j})$.

For a temporal graph $\cG$ of finite lifetime, the \emph{reverse} of $\cG$, denoted $\overleftarrow{\cG}$, is the temporal graph whose sequence of snapshots is the reverse of the sequence of snapshots of $\cG$.

\paragraph{Temporal Exploration.} Given a temporal graph $\cG$, we denote by $\texp(\cG)$ the minimum number $t$ such that every vertex of $\cG$ can be visited via a temporal walk starting at some vertex of $\cG$ within the first $t$ snapshots. If no such $t$ exists, then we define $\texp(\cG)$ to be $\infty$. 
It is common to describe exploration schedules in terms of an explorer moving along a temporal walk. In this context, we use the terms \emph{travel} and \emph{traverse} to refer to the act of the explorer moving along the walk.

We also consider exploration of temporal graphs by an `online' explorer, which can only see the current snapshot.
An online exploration algorithm for $\mathcal{G}$ is a function $A$ that, given a time step $t$, the partially revealed temporal graph $\mathcal{G}_{[0,t]}$, the current vertex $v$, and the set of unexplored vertices $U$, outputs the next vertex to visit, which is either the current vertex $v$ or one of its neighbors at time step $t$.
We denote by $\Dexp_A(\cG)$ the smallest time step by which every vertex of $\mathcal{G}$ is visited by algorithm $A$.

We need the following standard yet useful lemma, which says that in any $n$-vertex always-connected temporal graph any vertex can reach any other vertex in at most $n$ time steps. 

\begin{lemma}[{\cite[Lemma 2.1]{ErlebachOnTemporal2021},\cite[Observation 2.1]{Kuhn2010Distributed}}]\label{lem:EHKwalk} 
    Let $\mathcal{G}$ be an always-connected temporal graph on $n$ vertices. Then for any two vertices $v$ and $u$ of $\cG$, and any $t \in \bN$ such that $t+n-1$ does not exceed the lifetime of $\cG$, the temporal graph $\cG_{[t,n+t-1]}$ has a temporal path from $v$~to~$u$.
\end{lemma}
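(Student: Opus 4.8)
The plan is to reveal the snapshots $G_t, G_{t+1}, \dots$ one at a time and track how the set of vertices reachable from $v$ grows, exploiting always-connectedness to force it to gain at least one new vertex per snapshot until it exhausts $V$.

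For $i \ge 0$, let $S_i$ denote the set of vertices $w$ for which there is a temporal path from $v$ to $w$ using only time-edges from the snapshots $G_t, G_{t+1}, \dots, G_{t+i-1}$; thus $S_0 = \{v\}$ and $S_0 \subseteq S_1 \subseteq S_2 \subseteq \cdots$. The crucial claim is that whenever $S_{i-1} \ne V$, one has $|S_i| \ge |S_{i-1}| + 1$. To see this, note that $S_{i-1}$ is a nonempty proper subset of $V$, and since $G_{t+i-1}$ is connected there must be an edge $e = \{x,y\} \in E_{t+i-1}$ with $x \in S_{i-1}$ and $y \notin S_{i-1}$ (otherwise the cut $(S_{i-1}, V \setminus S_{i-1})$ would disconnect $G_{t+i-1}$). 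Pick a temporal path $P$ from $v$ to $x$ that uses only time-edges of $G_t, \dots, G_{t+i-2}$ (this exists since $x \in S_{i-1}$; if $x = v$, take $P$ empty). Its last time-edge, if any, has time at most $t+i-2$, so appending the time-edge $(e, t+i-1)$ produces a temporal walk from $v$ to $y$ inside $G_t, \dots, G_{t+i-1}$. Deleting, for each repeated vertex, the portion of the walk strictly between its first and last occurrence turns this walk into a genuine temporal path (the time stamps remain strictly increasing), witnessing $y \in S_i \setminus S_{i-1}$.

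Iterating the claim from $|S_0| = 1$ yields $|S_i| \ge \min\{\,i+1,\; n\,\}$ for all $i$, and hence $S_{n-1} = V$. In particular $u \in S_{n-1}$, i.e.\ there is a temporal path from $v$ to $u$ using only time-edges of $G_t, \dots, G_{t+n-2}$, which lies inside $\cG_{[t,\,n+t-1]}$ as required (in fact with one snapshot to spare). The only mildly subtle point is the walk-versus-path bookkeeping in the definition of temporal path, which is handled by the shortcutting step above; beyond that the argument is a straightforward induction on the number of revealed snapshots, so I do not anticipate any real obstacle.
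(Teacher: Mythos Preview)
The paper does not give its own proof of this lemma; it is stated with citations to \cite{ErlebachOnTemporal2021} and \cite{Kuhn2010Distributed} and used as a black box. Your argument is exactly the standard proof from those references: grow the reachable set $S_i$ one snapshot at a time, using connectedness of each snapshot to guarantee a crossing edge and hence a new vertex, until $S_{n-1}=V$. This is correct. One minor remark: the shortcutting step is in fact unnecessary here, because every vertex on your temporal path from $v$ to $x$ already lies in $S_{i-1}$ (each prefix of that path witnesses reachability by time $\le t+i-2$), while $y\notin S_{i-1}$, so appending $(e,t+i-1)$ yields a genuine temporal path directly.
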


\paragraph{Probability.}
Let $X$ and $Y$ be random variables. We say that $X$ \emph{stochastically dominates} $Y$ if $\Pr{X \geq x} \geq \Pr{Y \geq x}$ for all real numbers $x$, and denote this by $X \succeq Y$.
We will also make use of a number of concentration inequalities, which can be found in \cref{sec:furtherprelim}.

\section{Exploration in \texorpdfstring{$O(n^{3/2})$}{n to the 3/2} time} 
\label{sec:n_1_5_bound}
In this section we prove that for any RST model $(\cT,\mu)$, 
a random temporal graph 
$\cG \sim (\cT, \mu)$ can be explored in time $\Theta(n^{3/2})$ with high probability.
Throughout we assume that a set $\cT$ of trees on vertex set $[n]$ and a probability distribution $\mu$ on $\cT$ are arbitrary but fixed.

This result is based on the fundamental property of $(\cT,\mu)$ that for every vertex $v \in [n]$ in $\cG \sim (\cT, \mu)$ there are $\Theta(\sqrt{n})$ vertices that are ``close'' to $v$.

To state this property formally, we first introduce a notion that measures closeness between vertices in an RST model.
Recall, given a temporal graph $\cG$, a vertex $v$ of $\cG$, and a natural number $t$, the set of vertices that $v$ can reach in the first $t$ snapshots of $\cG$ is denoted by~$R_t^{\cG}(v)$.

\begin{definition}\label{definition:close_vertices}
	Let $u, w \in [n]$, $t \in \mathbb{N}$, $p \in [0,1]$. We say that \emph{$w$ is $(t,p)$-close to $u$ in $(\cT,\mu)$}~if 
	\[
	\underset{\cG \sim (\cT, \mu)}{\pr}\left[w \in \Flood_{t}^{\cG}(u)\right] \geq p.
	\]
\end{definition}

Note, for a finite $T \in \bN$, any $\cG_{[1,T]}$ and its reverse $\overleftarrow{\cG_{[1,T]}}$ have the same probability to be sampled in $(\cT, \mu)$. Therefore, a vertex $w$ is $(t,p)$-close to a vertex $u$ if and only if $u$ is $(t,p)$-close to $w$. In other words, the $(t,p)$-closeness relation is symmetric, and we can talk about $u$ and $w$ being $(t,p)$-close to each other.
If $u$ and $w$ are not $(t,p)$-close to each other, then we say that they are \emph{$(t,p)$-far} from each other in $(\cT, \mu)$. When $(\cT, \mu)$ is clear from the context we may say that $u$ and $w$ are $(t,p)$-close to each other or $(t,p)$-far from each other without specifying $(\cT, \mu)$.

Using the notion of $(t,p)$-closeness, the above-mentioned property of $(\cT,\mu)$ can be stated as follows.

\begin{restatable}{theorem}{sqrtclosevertices}\label{th:sqrt-close-vertices}
	Let $n \in \bN$ be sufficiently large, $\cT$ be a set of trees on vertex set $[n]$, and $\mu$ be a probability distribution on $\cT$.
	Then, for every vertex $v \in [n]$ there are at least $\sqrt{n}$ vertices that are $(700\sqrt{n}, 1/9)$-close to $v$ in $(\cT, \mu)$. 
\end{restatable}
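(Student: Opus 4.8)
The plan is to argue by contradiction: suppose some vertex $w \in [n]$ has fewer than $\sqrt{n}$ vertices that are $(700\sqrt{n}, 1/9)$-close to it, and derive a contradiction by exhibiting a set of vertices whose expected total reachability exceeds $n$. The main tool is the temporal DFS algorithm (from \cref{sec:temporalDFS}), which discovers one new vertex per snapshot. First I would fix $t = \Theta(\sqrt{n})$ (a suitable constant times $\sqrt{n}$, smaller than $700\sqrt{n}$ with room to spare) and let $B \subseteq R_t^{\cG}(w)$ be the set of vertices that $w$ discovers via temporal DFS within $\Theta(\sqrt{n})$ snapshots with constant probability. Since every vertex of $B$ is $(\Theta(\sqrt{n}), \Omega(1))$-close to $w$, the assumption forces $|B| < \sqrt{n}$ (after choosing constants so that "$(\Theta(\sqrt n), \Omega(1))$-close" implies "$(700\sqrt n, 1/9)$-close", which holds by monotonicity in $t$ and by boosting the probability over a constant number of repetitions).

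The next step is to use the smallness of $B$ to find many "escape" edges. Because $|B| < \sqrt{n}$ and each snapshot is a spanning tree (hence connected), in each snapshot there is an edge leaving $B$; a counting/second-moment argument of the type in \cref{cor:many-linking} should show that, w.h.p., among $\Theta(\sqrt{n})$ snapshots a constant fraction contain an edge joining a vertex of $B$ to a vertex that is $(\Theta(\sqrt{n}), \Omega(1))$-\emph{far} from $w$; call the far endpoints $u_1, u_2, \dots$ Here "far" is used in the complementary sense, so the number of distinct far vertices hit is $\Theta(\sqrt{n})$. For each such far vertex $u_i$, I would run temporal DFS from $u_i$ over the \emph{subsequent} $\Theta(\sqrt{n})$ snapshots. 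By the stability/diffusion behaviour of temporal DFS on RST models (\cref{lem:stable-diffuse}) and its corollary \cref{cor:F-S-new}, in expectation $u_i$ discovers $\Theta(\sqrt{n})$ vertices that cannot be reached from $B$ during that same time window — otherwise those vertices would themselves be close to $w$ and hence in (an enlarged) $B$, contradicting its size. I would then argue the sets discovered from the different $u_i$ are pairwise disjoint (again because a vertex discovered from two different escape points, within the time budget, would be close to $w$). Summing, the expected number of distinct vertices reachable from $\{w\} \cup \{u_i\}$ within $O(\sqrt n)$ steps is at least $\Omega(\sqrt n) \cdot \Omega(\sqrt n) = \Omega(n)$, and with the constants tuned this exceeds $n$, the contradiction.

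Concretely the steps are: (1) define $B$ via temporal DFS from $w$ and record the constant-probability event that $w$ discovers $|B|$ vertices quickly; (2) invoke the assumption to get $|B| < \sqrt{n}$; (3) apply \cref{cor:many-linking} to get $\Theta(\sqrt{n})$ snapshots with escape edges to far vertices, yielding $\Theta(\sqrt n)$ distinct far vertices $u_i$; (4) for each $u_i$, apply \cref{cor:F-S-new} (built on \cref{lem:stable-diffuse}) to get $\Theta(\sqrt{n})$ vertices in expectation reachable from $u_i$ but not from $B$ in the relevant window; (5) prove pairwise disjointness of these discovered sets by a closeness argument; (6) add up expectations to exceed $n$ and conclude. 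Throughout, one must be careful that all the "$\Theta(\sqrt n)$"-length windows fit inside the global budget $700\sqrt n$, and that the various constant-probability events can be intersected while keeping the final probability above $1/9$ — this bookkeeping of constants is where I would be most careful.

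I expect the main obstacle to be Step 4, i.e.\ showing that a far vertex $u_i$ genuinely discovers $\Omega(\sqrt n)$ \emph{new} vertices in expectation that are simultaneously unreachable from $B$. The difficulty is precisely the loss of independence flagged in the introduction: once several snapshots are exposed, the reachable set from $B$ and the DFS-discovered set from $u_i$ are correlated through the shared snapshots. The role of \cref{lem:stable-diffuse} is to handle exactly this — it gives a dichotomy (substantial overlap, or large private subsets each reachable from the other set in one DFS step with probability $\ge 1/(2n)$) that lets one convert "not close to $w$" into "many private vertices discovered from $u_i$". Turning that dichotomy into the clean expectation bound of \cref{cor:F-S-new}, and then coupling it across the $\Theta(\sqrt n)$ choices of $u_i$ while preserving disjointness, is the technical heart of the argument.
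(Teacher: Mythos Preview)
Your plan is essentially the paper's proof: contradiction via the temporal DFS ball $B=B_t^p(w)$, linking snapshots to far vertices via \cref{cor:many-linking}, per-escape growth via \cref{cor:F-S-new}, disjointness, and summing expectations past $n$. One refinement worth noting: you do not need (and cannot easily guarantee) that the far endpoints $u_i$ are \emph{distinct}---the paper indexes by linking \emph{snapshot} times $t_1<\cdots<t_\zeta$, allowing repeats among the $u_i$---and disjointness of the sets $X_i:=\OF_{T_i}^{\cG_i}(u_i)\setminus\Flood_{T_i}^{\cG_i}(B)$ is not a closeness argument but pure time-nesting: for $j>i$ the vertex $u_j$ is joined to $B$ at time $t_j>t_i$, so $X_j\subseteq\OF_{T_j}^{\cG_j}(u_j)\subseteq\Flood_{T_i}^{\cG_i}(B)$, which $X_i$ excludes by definition.
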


Most of this section is devoted to proving this result. Before proceeding to the proof, we first use \cref{th:sqrt-close-vertices} to establish our main result.

\genupper*
\begin{proof}
    Given $(\cT, \mu)$, we define a \emph{meta-graph}
    on $[n]$ as a graph where two vertices are adjacent if and only if they are $(700\sqrt{n}, 1/9)$-close to each other. By \Cref{th:sqrt-close-vertices}, each connected component of this graph is of size at least $\sqrt{n}$, which means that there are at most $\sqrt{n}$ of those components. Our exploration strategy is then to do an arbitrary tree traversal on each component of the meta-graph, and then traverse between the components in an arbitrary line sequence.  
	
	For traversing the spanning trees of the connected components, we observe that if two vertices $u$ and $v$ are $(700\sqrt{n}, 1/9)$-close to each other, then the time to traverse from $u$ to $v$ (or equivalently cross the meta-edge $e=\{u,v\}$) is stochastically dominated by $700\sqrt{n}\cdot X_{e}$, where  $ X_{e} \sim \geo{1/9}$.
    Let $X=\sum_{e \in Q}X_e$, where $Q$ is the set of at most $2n-2$ meta-edges that must be crossed in the traversals of the connected components. Then, applying \cref{lem:jansontail} with $p_*=1/9$, $\lambda =280n/\mu$, and $9(n-\sqrt{n}) \leq \mu \leq (2n-2)\cdot 9 \leq 20n$ gives  
	\[
		\Pr{X \geq 280n  } \leq \exp\left( - \frac{1}{9} \cdot \mu \cdot \left(\frac{280n}{\mu} -1- \log  \frac{280n}{\mu} \right)\right) \leq \exp(-n), 
	\]
	thus, if $Y$ is the total time spent traversing edges of the connected components, then \[\Pr{Y \geq 700\sqrt{n}\cdot 280n  } \leq e^{-n} . \]Since we can travel between any two components deterministically in time at most $n$ by \cref{lem:EHKwalk}, the time spent travelling between components is at most $\sqrt{n}\cdot n = n^{3/2}$ with probability $1$. Thus, it follows that the total time to explore the graph is at most $700\sqrt{n}\cdot 280n  + n^{3/2} \leq 200000n^{3/2} $ with probability $1-e^{-n}$. 
\end{proof} 

The rest of \cref{sec:n_1_5_bound} is devoted to proving \cref{th:sqrt-close-vertices}. 
To do so, we first define a temporal analogue of the Depth-First Search algorithm in \cref{sec:temporalDFS}. Then, in \cref{sec:tempDFSoverRST}, we establish several properties of temporal DFS when run on random temporal graphs generated by an RST. In \cref{sec:few-close-vertices}, we derive a consequence of a vertex having only a few close vertices. Finally, in \cref{sec:sqrt-close-vertices-proof}, we use these tools to prove \cref{th:sqrt-close-vertices}.

\subsection{Temporal Depth-First Search}
\label{sec:temporalDFS}
The standard Depth-First Search (DFS) algorithm for a \emph{connected} static graph $F$ starts at a given vertex $v$ and visits previously unvisited vertices of $F$ one by one in a specific order (the details of which are not relevant here).
Given a vertex $v$ and a graph $F$, the algorithm induces a linear order on the vertices of $F$ based on the sequence in which they are visited. This order is called the \emph{DFS order} of the vertices of $F$ starting at vertex $v$, and we denote it by $\sigma_{v,F}$. By definition, $v$ is the first vertex in the order $\sigma_{v,F}$.

We introduce a natural temporal generalisation of DFS, which we call the \emph{temporal DFS algorithm}.

Given an always-connected temporal graph $\cG$ and a vertex $v$ in $\cG$, the \emph{temporal DFS algorithm} from vertex $v$ starts at $v$ and in every snapshot $F$ visits a new vertex following the order $\sigma_{v,F}$; namely, we visit the first unvisited vertex under the order $\sigma_{v,F}$. For $t \in \bN$, we denote by $\OF_t^{\cG}(v)$ the set of vertices visited by the temporal DFS algorithm from $v$ by time step $t$.

In order to describe the algorithm formally, we first introduce some notation. Given a snapshot (connected graph) $F$, a starting vertex $v \in [n]$, and a set $D \subseteq [n]$ (which we think of as the set of visited vertices) such that $v \in D$, we denote by $\rho(v,D,F)$ the first vertex $w$ under the order $\sigma_{v,F}$ such that $w$ is outside $D$, but every vertex preceding $w$ in the order $\sigma_{v,F}$ belongs to $D$.

The following pseudocode describes the temporal DFS algorithm formally: given $\cG$, $v\in[n]$, and $t \geq 1$, the algorithm outputs the set $\OF_t^{\cG}(v)$ (see \cref{fig:temporalDFS} for illustration). 

\begin{algorithm}[H]
	\label{alg:ordered-flooding}
	\small
	\caption{\textsc{Temporal DFS}}
	\KwIn{An always-connected temporal graph $\mathcal{G}=(F_i=([n],E_i))_{i\geq 0}$; a \emph{source} vertex $v \in [n]$;  an integer time $t\geq 1$.}
	\KwOut{The set of vertices $\OF_t^{\cG}(v)$ visited by time $t$ in $\cG$ by the temporal DFS algorithm starting from $v$.}
	
	$\OF \gets \{v\}$\;
	
	\For{$i = 1$ \textup{ to } $t$}{
		$u \gets \rho(v,D,F_i)$\;
		$\OF \gets \OF \cup \{u\} $\;
	}	
	\KwRet{$D$}
\end{algorithm}

\begin{remark}\label{rem:temporal-search}
   We note that, in a similar manner, by appropriately specifying the orders $\sigma_{v,F}$, one can define temporal versions of other static graph search algorithms (such as BFS), which may be of independent interest. Moreover, our proofs remain valid without any modification if temporal DFS is replaced by any other type of temporal first-search algorithm.
\end{remark}

\tikzset{
  knoten/.style={
    draw=black,
    circle,
    thick,
    minimum size=0.45cm,
    inner sep=0pt,
    fill=white,
    text centered
  },
  visistedknoten/.style={
    knoten,
    fill=green!30
  },
  edge/.style={
    gray
  },
  visitedge/.style={
    line width=2pt,
    green!70!black
  }
}

\captionsetup[subfigure]{labelformat=empty}
\newcommand{\xshift}{0}
\newcommand{\scale}{1.5}
\newcommand{\snapshotscale}{0.75}

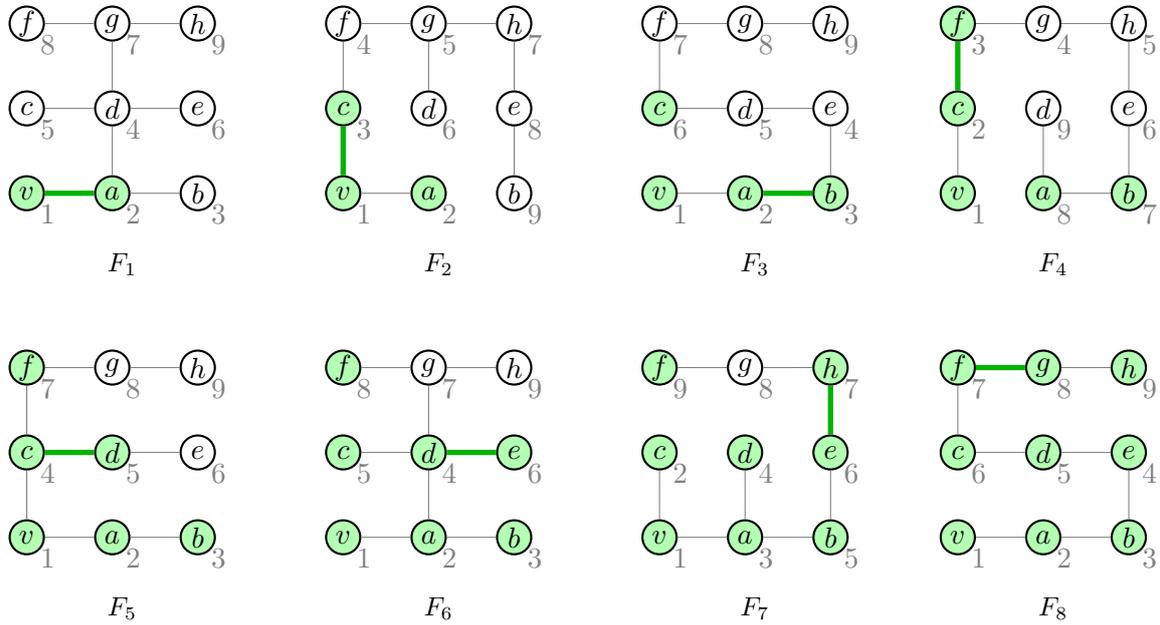
\begin{figure}[htbp]
    \centering
    \begin{tabular}{cccc}
        \begin{subfigure}{0.22\textwidth}
            \centering
            \begin{tikzpicture}[scale=\snapshotscale]
                \foreach \name/\x/\y/\label in 
                {v/0/0/1, a/1/0/2} 
                {
                    \node[visistedknoten] (\name) at ({\scale*\x + \xshift}, {\scale*\y}) {$\name$};
                    \node[gray] at ([xshift=4pt,yshift=-4pt]\name.south east) {\label};
                }
                
                \foreach \name/\x/\y/\label in 
                {b/2/0/3, c/0/1/5, d/1/1/4, 
                e/2/1/6, f/0/2/8, g/1/2/7, h/2/2/9} 
                {
                    \node[knoten] (\name) at ({\scale*\x + \xshift}, {\scale*\y}) {$\name$};
                    \node[gray] at ([xshift=4pt,yshift=-4pt]\name.south east) {\label};
                }

                \foreach \from/\to in 
                {v/a, a/b, a/d, d/g, c/d, d/e, f/g, g/h} 
                {
                    \draw[edge] (\from) -- (\to);
                }

                \draw[visitedge] (v) -- (a);
            \end{tikzpicture}
            \caption{$F_1$}
        \end{subfigure} &
        \begin{subfigure}{0.22\textwidth}
            \centering
            \begin{tikzpicture}[scale=\snapshotscale]
                \foreach \name/\x/\y/\label in 
                {v/0/0/1, a/1/0/2, c/0/1/3} 
                {
                    \node[visistedknoten] (\name) at ({\scale*\x + \xshift}, {\scale*\y}) {$\name$};
                    \node[gray] at ([xshift=4pt,yshift=-4pt]\name.south east) {\label};
                }
                
                \foreach \name/\x/\y/\label in 
                {b/2/0/9, d/1/1/6, 
                e/2/1/8, f/0/2/4, g/1/2/5, h/2/2/7} 
                {
                    \node[knoten] (\name) at ({\scale*\x + \xshift}, {\scale*\y}) {$\name$};
                    \node[gray] at ([xshift=4pt,yshift=-4pt]\name.south east) {\label};
                }

                \foreach \from/\to in 
                {v/a, c/f, f/g, g/h, g/d, h/e, e/b} 
                {
                    \draw[edge] (\from) -- (\to);
                }

                \draw[visitedge] (v) -- (c);
            \end{tikzpicture}
            \caption{$F_2$}
        \end{subfigure}&
        \begin{subfigure}{0.22\textwidth}
            \centering
            \begin{tikzpicture}[scale=\snapshotscale]
                \foreach \name/\x/\y/\label in 
                {v/0/0/1, a/1/0/2, b/2/0/3, c/0/1/6} 
                {
                    \node[visistedknoten] (\name) at ({\scale*\x + \xshift}, {\scale*\y}) {$\name$};
                    \node[gray] at ([xshift=4pt,yshift=-4pt]\name.south east) {\label};
                }
                
                \foreach \name/\x/\y/\label in 
                {e/2/1/4, d/1/1/5, f/0/2/7, g/1/2/8, h/2/2/9} 
                {
                    \node[knoten] (\name) at ({\scale*\x + \xshift}, {\scale*\y}) {$\name$};
                    \node[gray] at ([xshift=4pt,yshift=-4pt]\name.south east) {\label};
                }

                \foreach \from/\to in 
                {v/a, b/e, e/d, d/c, c/f, f/g, g/h} 
                {
                    \draw[edge] (\from) -- (\to);
                }

                \draw[visitedge] (a) -- (b);
            \end{tikzpicture}
            \caption{$F_3$}
        \end{subfigure} & 
        \begin{subfigure}{0.22\textwidth}
            \centering
            \begin{tikzpicture}[scale=\snapshotscale]
                \foreach \name/\x/\y/\label in 
                {v/0/0/1, a/1/0/8, b/2/0/7, c/0/1/2, f/0/2/3} 
                {
                    \node[visistedknoten] (\name) at ({\scale*\x + \xshift}, {\scale*\y}) {$\name$};
                    \node[gray] at ([xshift=4pt,yshift=-4pt]\name.south east) {\label};
                }
                
                \foreach \name/\x/\y/\label in 
                {e/2/1/6, d/1/1/9, g/1/2/4, h/2/2/5} 
                {
                    \node[knoten] (\name) at ({\scale*\x + \xshift}, {\scale*\y}) {$\name$};
                    \node[gray] at ([xshift=4pt,yshift=-4pt]\name.south east) {\label};
                }

                \foreach \from/\to in 
                {v/c, b/e, f/g, g/h, h/e, a/b, a/d} 
                {
                    \draw[edge] (\from) -- (\to);
                }

                \draw[visitedge] (c) -- (f);
            \end{tikzpicture}
            \caption{$F_4$}
        \end{subfigure} \\[2em]
        
        \begin{subfigure}{0.22\textwidth}
            \centering
            \begin{tikzpicture}[scale=\snapshotscale]
                \foreach \name/\x/\y/\label in 
                {v/0/0/1, a/1/0/2, b/2/0/3, c/0/1/4, f/0/2/7, d/1/1/5} 
                {
                    \node[visistedknoten] (\name) at ({\scale*\x + \xshift}, {\scale*\y}) {$\name$};
                    \node[gray] at ([xshift=4pt,yshift=-4pt]\name.south east) {\label};
                }
                
                \foreach \name/\x/\y/\label in 
                {e/2/1/6, g/1/2/8, h/2/2/9} 
                {
                    \node[knoten] (\name) at ({\scale*\x + \xshift}, {\scale*\y}) {$\name$};
                    \node[gray] at ([xshift=4pt,yshift=-4pt]\name.south east) {\label};
                }

                \foreach \from/\to in 
                {v/a, a/b, v/c, c/d, d/e, c/f, f/g, g/h} 
                {
                    \draw[edge] (\from) -- (\to);
                }

                \draw[visitedge] (c) -- (d);
            \end{tikzpicture}
            \caption{$F_5$}
        \end{subfigure} &
        \begin{subfigure}{0.25\textwidth}
            \centering
            \begin{tikzpicture}[scale=\snapshotscale]
                \foreach \name/\x/\y/\label in 
                {v/0/0/1, a/1/0/2, b/2/0/3, c/0/1/5, d/1/1/4, f/0/2/8, e/2/1/6} 
                {
                    \node[visistedknoten] (\name) at ({\scale*\x + \xshift}, {\scale*\y}) {$\name$};
                    \node[gray] at ([xshift=4pt,yshift=-4pt]\name.south east) {\label};
                }
                
                \foreach \name/\x/\y/\label in 
                {g/1/2/7, h/2/2/9} 
                {
                    \node[knoten] (\name) at ({\scale*\x + \xshift}, {\scale*\y}) {$\name$};
                    \node[gray] at ([xshift=4pt,yshift=-4pt]\name.south east) {\label};
                }

                \foreach \from/\to in 
                {v/a, a/b, a/d, d/g, c/d, f/g, g/h} 
                {
                    \draw[edge] (\from) -- (\to);
                }

                \draw[visitedge] (d) -- (e);
            \end{tikzpicture}
            \caption{$F_6$}
        \end{subfigure} &
        \begin{subfigure}{0.22\textwidth}
            \centering
            \begin{tikzpicture}[scale=\snapshotscale]
                \foreach \name/\x/\y/\label in 
                {v/0/0/1, a/1/0/3, b/2/0/5, c/0/1/2, f/0/2/9, e/2/1/6, d/1/1/4, h/2/2/7} 
                {
                    \node[visistedknoten] (\name) at ({\scale*\x + \xshift}, {\scale*\y}) {$\name$};
                    \node[gray] at ([xshift=4pt,yshift=-4pt]\name.south east) {\label};
                }
                
                \foreach \name/\x/\y/\label in 
                {g/1/2/8} 
                {
                    \node[knoten] (\name) at ({\scale*\x + \xshift}, {\scale*\y}) {$\name$};
                    \node[gray] at ([xshift=4pt,yshift=-4pt]\name.south east) {\label};
                }

                \foreach \from/\to in 
                {v/a, v/c, b/e, f/g, g/h, h/e, a/b, a/d} 
                {
                    \draw[edge] (\from) -- (\to);
                }

                \draw[visitedge] (e) -- (h);
            \end{tikzpicture}
            \caption{$F_7$}
        \end{subfigure}& 
        \begin{subfigure}{0.22\textwidth}
            \centering
            \begin{tikzpicture}[scale=\snapshotscale]
                \foreach \name/\x/\y/\label in 
                {v/0/0/1, a/1/0/2, b/2/0/3, c/0/1/6, e/2/1/4, d/1/1/5, f/0/2/7, g/1/2/8, h/2/2/9} 
                {
                    \node[visistedknoten] (\name) at ({\scale*\x + \xshift}, {\scale*\y}) {$\name$};
                    \node[gray] at ([xshift=4pt,yshift=-4pt]\name.south east) {\label};
                }
                
                \foreach \name/\x/\y/\label in 
                {} 
                {
                    \node[knoten] (\name) at ({\scale*\x + \xshift}, {\scale*\y}) {$\name$};
                    \node[gray] at ([xshift=4pt,yshift=-4pt]\name.south east) {\label};
                }

                \foreach \from/\to in 
                {v/a, b/e, e/d, d/c, c/f, f/g, g/h, a/b} 
                {
                    \draw[edge] (\from) -- (\to);
                }

                \draw[visitedge] (f) -- (g);
            \end{tikzpicture}
            \caption{$F_8$}
        \end{subfigure}
    \end{tabular}
    \caption{The execution of the temporal DFS algorithm starting from vertex~$v$. Visited vertices are coloured green, and unvisited vertices are coloured white. The thick edges show from which visited vertex a new vertex gets explored. In each snapshot $F_i$, the gray numbers indicate positions of the vertices in the order $\sigma_{v,F_i}$.}
    \label{fig:temporalDFS}
\end{figure}

\subsection{Temporal DFS over RST}\label{sec:tempDFSoverRST}

In this section we analyse the behaviour of temporal DFS over the Random Spanning Tree model.
We start with some notation and basic facts that will be useful in proving the main statements \cref{lem:stable-diffuse,lem:f-s,cor:F-S-new}. 

\begin{lemma}\label{lem:next-vertex}
	Let $D_1 \subseteq D_2 \subseteq [n]$, $v \in D_1$, and $F$ be a tree on $[n]$. If $\rho(v,D_1,F) \in [n] \setminus D_2$, then $\rho(v,D_1,F) = \rho(v,D_2,F)$.
\end{lemma}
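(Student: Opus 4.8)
The statement is about the static DFS discovery rule $\rho$, so the plan is to argue purely combinatorially about DFS orders, with no probability involved. First I would recall what $\rho(v,D,F)$ is: walking along the DFS order $\sigma_{v,F}$ starting from $v$, it is the first vertex not in $D$. Set $w := \rho(v,D_1,F)$ and note $w \in [n]\setminus D_2$ by hypothesis. I want to show $w = \rho(v,D_2,F)$, i.e.\ $w$ is also the first vertex of $\sigma_{v,F}$ lying outside $D_2$.

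The key observation is that the \emph{prefix} of $\sigma_{v,F}$ up to and including $w$ is determined entirely by which of the already-visited vertices lie in the set used to compute $\rho$, but in fact here we don't even need to worry about how the DFS order reacts to the visited set, because the temporal DFS always feeds the \emph{current} snapshot $F$ together with the visited set $D$ into $\rho$, and $\sigma_{v,F}$ is a single fixed linear order depending only on $F$ and $v$. So let $x_1 = v, x_2, x_3, \dots$ be the vertices in the order $\sigma_{v,F}$. By definition of $w = \rho(v,D_1,F)$, there is an index $k$ with $x_k = w$, every $x_j$ with $j<k$ lies in $D_1$, and $x_k \notin D_1$. Since $D_1 \subseteq D_2$, every $x_j$ with $j<k$ also lies in $D_2$. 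Combined with the hypothesis $x_k = w \in [n]\setminus D_2$, this says precisely that $x_k$ is the first vertex in $\sigma_{v,F}$ outside $D_2$: all earlier ones are in $D_2$, and $x_k$ itself is not. Hence $\rho(v,D_2,F) = x_k = w = \rho(v,D_1,F)$, as desired. (One should also note $v \in D_1 \subseteq D_2$ so that $\rho(v,D_2,F)$ is well-defined, matching the standing requirement that the start vertex lies in the visited set.)

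I do not anticipate a real obstacle here — the lemma is essentially a monotonicity/consistency property of the ``first element outside a set along a fixed order'' operator. The only thing to be careful about is the exact reading of the definition of $\rho$ as given in the paper: it requires not only that $w \notin D$ but also that \emph{every vertex preceding $w$ in $\sigma_{v,F}$ belongs to $D$}, i.e.\ $\rho$ returns $w$ exactly when $w$ is the $\sigma_{v,F}$-minimal element of $[n]\setminus D$. Phrasing the argument in terms of this ``$\sigma_{v,F}$-minimal element of the complement'' formulation makes the proof a one-liner: $w$ is $\sigma_{v,F}$-minimal in $[n]\setminus D_1$, and since $[n]\setminus D_2 \subseteq [n]\setminus D_1$ with $w \in [n]\setminus D_2$, $w$ remains $\sigma_{v,F}$-minimal in the smaller set $[n]\setminus D_2$. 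I would write it in this compact form.
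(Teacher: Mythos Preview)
Your proof is correct and follows essentially the same argument as the paper's: both set $u := \rho(v,D_1,F)$, observe that all vertices preceding $u$ in $\sigma_{v,F}$ lie in $D_1 \subseteq D_2$, and conclude from $u \notin D_2$ that $u$ is also the $\sigma_{v,F}$-first vertex outside $D_2$. Your compact reformulation via ``$\sigma_{v,F}$-minimal element of the complement'' is a nice way to phrase the same monotonicity observation.
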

\begin{proof}
	Let $u := \rho(v,D_1,F)$.
	By definition, all vertices smaller than $u$ in $\sigma_{v,F}$ are in $D_1$. Since $D_1 \subseteq D_2$, all those vertices are also in $D_2$. 
	Therefore, $u$ is the minimum vertex in $\sigma_{v,F}$ that is outside $D_2$ such that all vertices smaller than $u$ in order $\sigma_{v,F}$ belong to $D_2$. That is, by definition, $\rho(v,D_2,F) =u$.
\end{proof}

Let $(\cT, \mu)$ be an RST, and $v$ be a vertex.
For a given set $D \subseteq [n]$ such that $v \in D$, the distribution $\mu$ naturally determines a probability distribution over the next vertices reached by the temporal DFS algorithm from vertex $v$. Formally, for every $u \in [n]$,
\[
\nu_{v,D,\mu}(u) := \underset{F \sim \mu}{\pr}[\rho(v,D,F) = u]
\]
We say that a vertex $u$ is \emph{$(v,D,\mu)$-good} if $\nu_{v,D,\mu}(u) \geq \frac{1}{2n}$.

The following observation follows from the fact that for any tree $F$, any sets $D_1 \subseteq D_2 \subseteq [n]$, and any vertices $v \in D_1$ and $u \in [n] \setminus D_2$, we have that $\rho(v,D_1,F) = u$ implies $\rho(v,D_2,F) = u$, which is due to \cref{lem:next-vertex}.

\begin{observation}\label{obs:nu-monotone}
	Let $D_1 \subseteq D_2 \subseteq [n]$, $v \in D_1$, and $u \in [n] \setminus D_2$. Then $\nu_{v,D_1,\mu}(u) \leq \nu_{v,D_2,\mu}(u)$. In particular, if $u$ is $(v,D_1,\mu)$-good, then $u$ is $(v,D_2,\mu)$-good.
\end{observation}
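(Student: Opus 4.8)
The plan is to reduce the statement to a pointwise (per-snapshot) containment of events and then take $\mu$-probabilities. Fix an arbitrary tree $F$ on $[n]$ and suppose $\rho(v,D_1,F) = u$. Since by hypothesis $u \in [n] \setminus D_2$ and $v \in D_1 \subseteq D_2$, the hypotheses of \cref{lem:next-vertex} are satisfied with these $D_1, D_2, v, F$, and the lemma yields $\rho(v,D_1,F) = \rho(v,D_2,F)$, hence $\rho(v,D_2,F) = u$. Therefore, as events in the probability space of $F \sim \mu$, we have the inclusion $\{F : \rho(v,D_1,F) = u\} \subseteq \{F : \rho(v,D_2,F) = u\}$.

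Taking $\mu$-probabilities of both sides and using the definition of $\nu$ gives $\nu_{v,D_1,\mu}(u) = \Pru{F \sim \mu}{\rho(v,D_1,F) = u} \le \Pru{F \sim \mu}{\rho(v,D_2,F) = u} = \nu_{v,D_2,\mu}(u)$, which is precisely the claimed inequality. For the ``in particular'' clause, recall that $u$ being $(v,D_1,\mu)$-good means $\nu_{v,D_1,\mu}(u) \ge \tfrac{1}{2n}$; combined with the inequality just established, this gives $\nu_{v,D_2,\mu}(u) \ge \tfrac{1}{2n}$, i.e.\ $u$ is $(v,D_2,\mu)$-good.

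There is essentially no real obstacle here, as the argument is a one-line monotonicity observation built on \cref{lem:next-vertex}; the only point requiring a moment of care is verifying that the hypotheses of \cref{lem:next-vertex} genuinely hold, and in particular that $\rho(v,D_1,F)$ lands in $[n] \setminus D_2$ — which is guaranteed precisely because we are on the event $\rho(v,D_1,F) = u$ with $u \in [n] \setminus D_2$ by assumption. Since $F$ was arbitrary, the event inclusion holds for every tree in $\cT$, and the conclusion follows.
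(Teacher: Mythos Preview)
Your proof is correct and follows exactly the same approach as the paper: use \cref{lem:next-vertex} to obtain the pointwise event inclusion $\{\rho(v,D_1,F)=u\}\subseteq\{\rho(v,D_2,F)=u\}$, then take $\mu$-probabilities and apply the definition of goodness.
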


A key property of the $(v,D,\mu)$-good vertices is that they together carry at least half of the total probability mass of $\nu_{v,D,\mu}$.

\begin{lemma}\label{lem:good-heavy}
	Let $D \subseteq [n]$ and $v \in D$. Then 
	\[
	\sum\limits_{u\text{ is } (v,D, \mu)\text{-good}} \nu_{v,D,\mu}(u) \geq 1/2.
	\]
\end{lemma}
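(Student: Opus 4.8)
The plan is to observe that the whole probability mass of $\nu_{v,D,\mu}$ sits on $[n]\setminus D$, and that it cannot be spread too thinly. First I would record the elementary fact underlying everything: for any tree $F$ on $[n]$, the graph $F$ is connected, so the DFS order $\sigma_{v,F}$ enumerates \emph{all} of $[n]$. Hence, as long as $D \subsetneq [n]$, there is always a first vertex of $\sigma_{v,F}$ lying outside $D$, i.e.\ $\rho(v,D,F)$ is well-defined and belongs to $[n]\setminus D$. (If $D=[n]$ the quantity $\rho(v,D,F)$ is not defined and the statement is vacuous, so we may assume $D\subsetneq[n]$.) Consequently, for a random $F\sim\mu$ the vertex $\rho(v,D,F)$ is always some element of $[n]\setminus D$, and therefore
\[
\sum_{u\in[n]}\nu_{v,D,\mu}(u)=\sum_{u\in[n]\setminus D}\nu_{v,D,\mu}(u)=1 .
\]

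Next I would isolate the ``bad'' vertices, i.e.\ those $u$ with $\nu_{v,D,\mu}(u)<\tfrac{1}{2n}$. Since there are only $n$ vertices in total, the bad vertices number at most $n$, and each contributes strictly less than $\tfrac{1}{2n}$ to the sum above, so
\[
\sum_{u\text{ is }(v,D,\mu)\text{-bad}}\nu_{v,D,\mu}(u)< n\cdot\frac{1}{2n}=\frac12 .
\]
Combining this with the fact that the total mass is $1$, the complementary sum over the $(v,D,\mu)$-good vertices is at least $1-\tfrac12=\tfrac12$, which is exactly the claim. (In fact one gets the strict inequality $>1/2$, but $\ge 1/2$ suffices.)

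This is a pure counting/averaging argument, so I do not anticipate any real obstacle; the only point that needs a moment's care is verifying that the total mass is genuinely $1$, which is where connectedness of the trees in $\cT$ is used, and making sure the degenerate case $D=[n]$ is dispatched (it never arises in the applications, where $D$ is a set of already-visited vertices that is strictly growing). Everything else is immediate from the definition of $(v,D,\mu)$-good and the bound $|[n]|=n$.
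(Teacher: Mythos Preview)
Your proof is correct and follows essentially the same approach as the paper: both argue that $\nu_{v,D,\mu}$ is a probability measure on $[n]$, bound the total mass on the non-good vertices by $n\cdot\tfrac{1}{2n}=\tfrac12$, and subtract from $1$. You add a little more care about why the total mass equals $1$ (via connectedness of the trees) and dispatch the degenerate case $D=[n]$, but the argument is the same.
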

\begin{proof} 
    If $u\text{ is  not }(v,D, \mu)\text{-good}$, then $\nu_{v,D,\mu}(u)<\frac{1}{2n}$. Thus, \[ \sum\limits_{u\text{ is } (v,D, \mu)\text{-good}} \nu_{v,D,\mu}(u) = 1 - \sum\limits_{u\text{ is not} (v,D, \mu)\text{-good}} \nu_{v,D,\mu}(u) > 1- n\cdot\frac{1}{2n} \geq 1/2,\] as $ \nu_{v,D,\mu}$ is a probability measure on $[n]$. 
\end{proof}

\begin{definition}
	Let $v \in [n]$, $t \in \mathbb{N}$, $p \in [0,1]$. The \emph{temporal DFS $(t,p)$-ball} of $v$, denoted $B_{t}^{p}(v)$, is the set of vertices that the temporal DFS algorithm reaches in $\cG \sim (\cT, \mu)$ in $t$ steps with probability at least $p$, i.e.,
	\[
	B_{t}^{p}(v) := \{w \in [n] \mid \underset{\cG \sim (\cT, \mu)}{\pr}\left[ w \in \OF_{t}^{\cG}(v) \right] \geq p\}.
	\]
\end{definition}

Since $\OF_{t}^{\cG}(v) \subseteq \Flood_{t}^{\cG}(v)$, we have the following 

\begin{observation}\label{obs:OFball-close}
	Every vertex in $B_{t}^{p}(v)$ is $(t,p)$-close to $v$. 
\end{observation}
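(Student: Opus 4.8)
The plan is to reduce the statement to the purely deterministic inclusion $\OF_t^{\cG}(v) \subseteq \Flood_t^{\cG}(v)$, valid for every always-connected temporal graph $\cG$, and then push this inclusion through the probabilities appearing in the definitions of $B_t^p(v)$ and of $(t,p)$-closeness.

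First I would verify the deterministic inclusion. Running the temporal DFS algorithm from $v$ on $\cG=(F_i)_{i\ge 1}$, let $\{v\}=D_0\subseteq D_1\subseteq D_2\subseteq\cdots$ be the sequence of visited sets, so $\OF_t^{\cG}(v)=D_t$ and $D_i=D_{i-1}\cup\{u_i\}$ with $u_i=\rho(v,D_{i-1},F_i)$. By the definition of $\rho$, all vertices preceding $u_i$ in the DFS order $\sigma_{v,F_i}$ lie in $D_{i-1}$; hence that set of predecessors $P$, together with $u_i$, is a prefix of $\sigma_{v,F_i}$. Since $\sigma_{v,F_i}$ is a DFS order of the connected tree $F_i$, any prefix of it induces a connected subgraph of $F_i$: the parent of $u_i$ in the DFS tree precedes $u_i$ and is joined to it by an edge of $F_i$, so $u_i$ has a neighbour in $F_i$ inside $P\subseteq D_{i-1}$. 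A straightforward induction on $i$ then shows that every vertex of $D_i$ is reachable from $v$ by a temporal path whose last time-edge has time at most $i$: the base case $D_0=\{v\}$ is trivial, and for the step we extend a temporal path from $v$ to that neighbour of $u_i$ in $D_{i-1}$ by the time-edge $(\{x,u_i\},i)$. Consequently $\OF_t^{\cG}(v)=D_t\subseteq \Flood_t^{\cG}(v)$.

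Given this, the observation is immediate: fix $w\in B_t^p(v)$. For every temporal graph $\cG$ the inclusion above gives the containment of events $\{w\in\OF_t^{\cG}(v)\}\subseteq\{w\in\Flood_t^{\cG}(v)\}$, so taking probabilities over $\cG\sim(\cT,\mu)$,
\[
\Pr{w \in \Flood_t^{\cG}(v)} \ \geq\ \Pr{w \in \OF_t^{\cG}(v)} \ \geq\ p,
\]
where the last inequality is the defining property of $B_t^p(v)$. By \cref{definition:close_vertices}, this says precisely that $w$ is $(t,p)$-close to $v$, as required.

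Both steps are elementary, so there is no substantive obstacle. The only point needing a line of justification is the structural fact that a prefix of a DFS order always induces a connected subgraph of the snapshot, which is exactly what guarantees that each newly discovered vertex is attached to the already-visited set by an edge of the current snapshot, and hence reachable by a temporal path.
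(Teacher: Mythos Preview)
Your proof is correct and follows exactly the approach the paper takes: the paper states the observation as an immediate consequence of the inclusion $\OF_{t}^{\cG}(v) \subseteq \Flood_{t}^{\cG}(v)$ (mentioned in the sentence just preceding the observation), and you have simply spelled out in full the easy induction that justifies this inclusion. There is nothing to add.
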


We are now ready to prove the main property of the temporal DFS algorithm over RST. Informally, this property states that for a fixed vertex~$v$ and two independent random temporal graphs, \whp, the sets of vertices discovered from~$v$ via temporal DFS in the two temporal graphs either (1) overlap substantially, or (2) each contains a large subset of private vertices, each of which is good with respect to the opposite set, that is, it can be discovered from the opposite set in one step of the temporal DFS algorithm with probability at least $1/(2n)$.

\begin{lemma}\label{lem:stable-diffuse}
    Let $v\in [n]$ and $\cG, \cH \sim (\cT, \mu)$ be independent. Let $T \in \bN$ be such that $T \leq n$. Then, with probability at least $1-4e^{-T/500}$, we have
    \begin{enumerate}
        \item $|\OF_T^{\cG}(v) \cap \OF_T^{\cH}(v)| \geq \frac{T}{50}$, or 
        \label{close_case_1}
		
        \item $\OF_T^{\cG}(v)$ contains at least $\frac{T}{50}$ $(v,\OF_T^{\cH}(v),\mu)$-good vertices, and $\OF_T^{\cH}(v)$ contains at least $\frac{T}{50}$ $(v,\OF_T^{\cG}(v),\mu)$-good vertices.
	\label{close_case_2}
    \end{enumerate}
\end{lemma}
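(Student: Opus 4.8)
The plan is to run the temporal DFS from $v$ simultaneously in $\cG$ and $\cH$, and to track the pair of discovered sets $(\OF_i^{\cG}(v), \OF_i^{\cH}(v))$ over the $T$ steps through a coupled process. At each step $i$, let $D_i := \OF_i^{\cG}(v)$ and $E_i := \OF_i^{\cH}(v)$. Call step $i$ \emph{overlapping for $\cG$} if the new vertex $\rho(v, D_{i-1}, F_i)$ added to $D$ already lies in $E_{i-1}$; symmetrically for $\cH$. Call step $i$ \emph{private-good for $\cG$} if the new vertex $u$ added to $D$ lies outside $E_{i-1}$ \emph{and} is $(v, E_{i-1}, \mu)$-good; by \cref{obs:nu-monotone}, once $u$ is $(v, E_{i-1}, \mu)$-good it stays $(v, E_j, \mu)$-good for every $j \geq i-1$, in particular it is $(v, \OF_T^{\cH}(v), \mu)$-good, so private-good steps for $\cG$ contribute distinct vertices to the count in item~\ref{close_case_2}. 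The heart of the argument is to show that, conditioned on the history up to step $i-1$, with probability at least $1/2$ step $i$ is \emph{either} overlapping for $\cG$ \emph{or} private-good for $\cG$: indeed, the new vertex is $\rho(v, D_{i-1}, F_i)$ with $F_i \sim \mu$ independent of the past; if it lands in $E_{i-1}$ we are in the overlapping case, and otherwise by \cref{lem:next-vertex} it equals $\rho(v, E_{i-1}, F_i)$ (since it lies outside $E_{i-1} \supseteq D_{i-1} \cap$-well, one must be careful here: $D_{i-1}$ and $E_{i-1}$ are not nested), so with probability at least $1/2$ — using \cref{lem:good-heavy} applied with the set $E_{i-1}$ — it is a $(v, E_{i-1}, \mu)$-good vertex. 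I would make this precise by conditioning on the filtration generated by $(F_1,\dots,F_{i-1})$ in $\cG$ together with all of $\cH$ (or a large enough prefix of it), so that $E_{i-1}$ is measurable and $F_i$ is fresh.

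With this in hand, define, for $\cG$, the indicator $Y_i$ of the event ``step $i$ is overlapping-or-private-good for $\cG$''. The above shows $\Pr{Y_i = 1 \mid \mathcal{F}_{i-1}} \geq 1/2$, so $\sum_{i=1}^T Y_i$ stochastically dominates a $\operatorname{Bin}(T, 1/2)$, and a Chernoff bound gives $\sum_{i=1}^T Y_i \geq T/4$ with probability at least $1 - e^{-T/16}$ (the exact constant is not important; $e^{-T/500}$ is a comfortable slack). On this event, among the (at least $T/4$) steps that are overlapping-or-private-good for $\cG$, either at least $T/8$ are overlapping — in which case $|D_T \cap E_T| \geq T/8 \geq T/50$ and we are in item~\ref{close_case_1} — or at least $T/8$ are private-good, in which case $\OF_T^{\cG}(v)$ contains at least $T/8 \geq T/50$ vertices that are $(v, \OF_T^{\cH}(v), \mu)$-good. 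Running the symmetric argument for $\cH$ gives, with probability at least $1 - e^{-T/16}$, the same dichotomy with the roles swapped. Taking a union bound over the two events, with probability at least $1 - 2e^{-T/16} \geq 1 - 4e^{-T/500}$ both dichotomies hold. The only remaining point is to reconcile the two dichotomies: if the $\cG$-side says ``$\geq T/8$ overlapping steps'' we immediately get item~\ref{close_case_1}; likewise if the $\cH$-side does. Otherwise both sides are in their ``private-good'' alternative, which is exactly item~\ref{close_case_2}. Hence one of the two conclusions always holds.

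The main obstacle I anticipate is the bookkeeping around the \emph{lack of nestedness} of $D_{i-1}$ and $E_{i-1}$: \cref{lem:next-vertex} requires $D_1 \subseteq D_2$, but here the two discovered sets are built in two different temporal graphs and need not be comparable. The fix is to not compare $D_{i-1}$ and $E_{i-1}$ directly, but instead to note that $\rho(v, D_{i-1}, F_i)$ is, \emph{by definition}, some vertex $u \notin D_{i-1}$; if additionally $u \notin E_{i-1}$, then since \emph{all vertices preceding $u$ in $\sigma_{v,F_i}$ lie in $D_{i-1}$} — but they need not lie in $E_{i-1}$! — so $\rho(v, E_{i-1}, F_i)$ is the first $\sigma_{v,F_i}$-vertex outside $E_{i-1}$, which is at or before $u$. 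Thus the correct statement is: conditioned on history, the probability that step $i$ adds (to $D$) a vertex that is outside $E_{i-1}$ and $(v,E_{i-1},\mu)$-good need not be $\geq 1/2$ in the naive way. To handle this cleanly I would instead analyse the temporal DFS on $\cG$ and the \emph{hypothetical} one-step distribution $\nu_{v, E_{i-1}, \mu}$ on $\cH$'s set jointly: condition on $\mathcal F_{i-1}$, reveal $F_i$, and split on whether $\rho(v, D_{i-1}, F_i) \in E_{i-1}$ (overlap) or not. In the ``not'' case, I claim $\rho(v, D_{i-1}, F_i)$ is itself $(v, E_{i-1}, \mu)$-good with conditional probability $\geq 1/2$ — this needs the observation that the event $\{\rho(v,D_{i-1},F_i) = u,\ u \notin E_{i-1}\}$ is contained in $\{\rho(v, E_{i-1}, F_i) = u\}$ (true by \cref{lem:next-vertex}, since on this event $D_{i-1} \cup (\text{stuff before } u) \subseteq$ the relevant set — more carefully, apply \cref{lem:next-vertex} with $D_1$ the set of $\sigma_{v,F_i}$-predecessors of $u$ together with $v$, which is $\subseteq D_{i-1}$, and $D_2 = E_{i-1}$, after first checking $u \notin E_{i-1}$), hence summing over good $u$ and using \cref{lem:good-heavy} for $E_{i-1}$ gives the $\geq 1/2$ bound. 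Getting this containment and the conditioning structure exactly right is the delicate part; everything after it is a routine Chernoff-plus-union-bound computation with generous constants.
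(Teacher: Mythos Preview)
Your proposal has a genuine gap at the central step: the claim that $\Pr{Y_i = 1 \mid \mathcal{F}_{i-1}} \geq 1/2$ is false in general, and your suggested fix via \cref{lem:next-vertex} does not work. You correctly note that the set $D_1$ of $\sigma_{v,F_i}$-predecessors of $u=\rho(v,D_{i-1},F_i)$ lies in $D_{i-1}$, but \cref{lem:next-vertex} needs $D_1 \subseteq E_{i-1}$, which is precisely the missing hypothesis when $D_{i-1}$ and $E_{i-1}$ are not nested. Concretely, take $n=5$, $v=1$, paths $T_1 = 1\text{--}2\text{--}4\text{--}3\text{--}5$, $T_2 = 1\text{--}2\text{--}5\text{--}3\text{--}4$, $T_3 = 1\text{--}3\text{--}2\text{--}4\text{--}5$, and $\mu$ uniform. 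With positive probability the first steps yield $D_1=\{1,2\}$ and $E_1=\{1,3\}$. Then $\rho(v,E_1,T_j)=2$ for all three trees, so $\nu_{v,E_1,\mu}(4)=\nu_{v,E_1,\mu}(5)=0$ and neither $4$ nor $5$ is $(v,E_1,\mu)$-good; yet $\rho(v,D_1,F_2)\in\{4,5\}$ with probability $2/3$ and lands in $E_1$ only when $F_2=T_3$. Hence $\Pr{Y_2=1\mid D_1,E_1}=1/3<1/2$. The underlying point is that $\nu_{v,D_{i-1},\mu}$ and $\nu_{v,E_{i-1},\mu}$ are distributions on incomparable complements, and there is no monotone relation between them; \cref{lem:good-heavy} applied to $E_{i-1}$ simply does not control where $\rho(v,D_{i-1},F_i)$ lands.

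The paper's proof sidesteps this by a coin-flipping coupling: sample $T$ trees from $\mu$, and for each tree $F$ flip a fair coin to decide whether it becomes the next snapshot of $\cG$ or of $\cH$. At every round one has a \emph{single} tree $F$ in hand and computes both $u_{\cG}=\rho(v,D_{\cG},F)$ and $u_{\cH}=\rho(v,D_{\cH},F)$; since both are defined via the \emph{same} order $\sigma_{v,F}$, whichever of $u_{\cG},u_{\cH}$ comes first in $\sigma_{v,F}$ must lie in the other side's discovered set (otherwise it would have been that side's $\rho$-output). This dichotomy --- either $u_{\cG}=u_{\cH}$, or one of them is already in the other's set --- is exactly the structural fact your containment was meant to supply, and it is what makes the subsequent Chernoff-plus-case-split go through. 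Without a device that compares the two DFS processes on a common tree, I do not see how to repair your per-step bound.
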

\begin{proof}
	To prove the lemma it is enough to show that with probability at least $1-2e^{-T/500}$ we have 
	\begin{enumerate}
		\item $|\OF_T^{\cG}(v) \cap \OF_T^{\cH}(v)| \geq \frac{T}{50}$, or
		\item $\OF_T^{\cG}(v)$ contains at least $\frac{T}{50}$ $(v,\OF_T^{\cH}(v),\mu)$-good vertices.
	\end{enumerate}
	The result then follows by applying the union bound.
	
	To facilitate the argument, we sample $\cG,\cH \sim (\cT, \mu)$ in two phases:
	\begin{enumerate}
		\item first, we run for $T$ rounds the coupling coin-flipping process defined below. This will generate the first $i$ snapshots of $\cG$ and the first $j$ snapshots of $\cH$, where $i+j=T$;
		
		\item then, we continue sampling snapshots of $\cG$ and $\cH$ independently (i.e.~without using the coupling coin-flipping process).
	\end{enumerate}
	
	\begin{boxsampling}[\small The coupling coin-flipping process]
		\small
		Repeat the following steps in rounds:
		\begin{enumerate}[]
			\item[1.] Sample snapshot $F \sim \mu$.
			\item[2.] Flip a fair coin and depending on the outcome add $F$ to $\cG$ or $\cH$.
		\end{enumerate}
	\end{boxsampling}
	
	This process results in temporal graphs $\cG$, $\cH$ independently sampled from $(\cT, \mu)$. The first phase is a convenient coupling that will allow us to prove the target result.
	
	To do this, we will reveal snapshots of $\cG$ and $\cH$ iteratively and, abusing notation, we will denote by $\OF_{\cG}$ and $\OF_{\cH}$ the sets of vertices that are reached from $v$ by the temporal DFS algorithm in $\cG$ and $\cH$, respectively, using the revealed snapshots.
	
	We will show that by the end of phase (i), with probability at least $1-2e^{-T/500}$ we have 
	\begin{enumerate}
		\item[(1)] $|\OF_{\cG} \cap \OF_{\cH}| \geq \frac{T}{25}$, or 
		\item[(2)] $\OF_{\cG}$ contains at least $\frac{T}{25}$ many $(v,\OF_{\cH},\mu)$-good vertices.
	\end{enumerate}
	We claim that this is enough to deduce the lemma. Indeed, if (1) is true at the end of the first phase, then it is also true for the second phase. Otherwise, if $u \in \OF_{\cG}$ is $(v,\OF_{\cH}, \mu)$-good by the end of the first phase, then, by the time $\cG$ and $\cH$ each acquire $T$ snapshots,
	either $u$ will be reached by temporal DFS in $\cH$, in which case it will belong to $\OF_{\cG} \cap \OF_{\cH}$, or, otherwise, by \cref{obs:nu-monotone}, it will stay $(v,\OF_{\cH}, \mu)$-good.
	Consequently, by the time both $\cG$ and $\cH$ have $T$ snapshots, we will have, as desired, that $|\OF_{\cG} \cap \OF_{\cH}| \geq \frac{T}{50}$, or $\OF_{\cG}$ has at least $\frac{T}{50}$ $(v,\OF_{\cH}, \mu)$-good vertices.
	
	Thus, in the rest of the proof, we prove the desired outcome of the first phase of the sampling process. 
	
	Consider a specific step of the coin-flipping process, and let $F$ be a snapshot drawn from $\mu$ at this step. Let $u_{\cG} = \rho(v,\OF_{\cG},F)$ and $u_{\cH} = \rho(v,\OF_{\cH},F)$, i.e., $u_{\cG}$ is the vertex that the temporal DFS algorithm adds to $\OF_{\cG}$, if $F$ is added to $\cG$, and, respectively, $u_{\cH}$ is the vertex that temporal DFS adds to $\OF_{\cH}$, if $F$ is added to $\cH$.
    
	\begin{claim}\label{cl:D_G-or-D_H}
		If $u_{\cG} \neq u_{\cH}$, then either $u_{\cG} \in \OF_{\cH}$ or $u_{\cH} \in \OF_{\cG}$.
	\end{claim}
	\begin{poc}
		If $\sigma_{v,F}(u_{\cG}) < \sigma_{v,F}(u_{\cH})$, then, by definition, $u_{\cH} = \rho(v,\OF_{\cH},F)$ implies $u_{\cG} \in \OF_{\cH}$.
		Similarly, if $\sigma_{v,F}(u_{\cH}) < \sigma_{v,F}(u_{\cG})$, then $u_{\cG} = \rho(v,\OF_{\cG},F)$ implies $u_{\cH} \in \OF_{\cG}$.
	\end{poc}

	Let $\tau$ be the number of rounds of the coin-flipping process in which the vertex $u_{\cH}$ was $(v,\OF_{\cH},\mu)$-good  at the moment of drawing the  corresponding snapshot $F$.
	\begin{claim}\label{cl:many-good}
		After $t$ rounds of the coin-flipping process, 
		$
			\pr[\tau \geq 2t/5] \geq 1 - e^{-t/100}.
		$
	\end{claim}
	\begin{poc}
		By \cref{lem:good-heavy}, at every round, the probability of $u_{\cH}$ to be $(v,\OF_{\cH},\mu)$-good is at least 1/2. Thus, $\tau$ stochastically dominates $X \sim \operatorname{Bin}(t,1/2)$.
		Therefore, by \cref{lem:simplechb} applied to $X$ with $\delta=0.2$, we have
		$
			\mathbb{P}[X \leq 2t/5 ]\leq e^{- t/100},
		$
		which implies 
		$
			\pr[\tau \geq 2t/5] \geq \mathbb{P}[X \geq 2t/5 ] \geq 1-e^{- t/100}.
		$
	\end{poc}

	By \cref{cl:many-good}, at the end of the coin-flipping process, with probability at least $1- e^{- T/100}$ we have $\tau \geq 2T/5$. Among the rounds in which $u_{\cH}$ was $(v,\OF_{\cH},\mu)$-good, denote by $\tau_1$ the number of rounds when we had $u_{\cG} = u_{\cH}$, and by $\tau_2$ the number of rounds when we had $u_{\cG} \neq u_{\cH}$. Note that $\tau = \tau_1+\tau_2$.
	We split the analysis in two cases:
	\begin{enumerate}
		\item $\tau_1 \geq \tau/2 \geq T/5$. In each of the rounds where $u_{\cG} = u_{\cH}$, the corresponding snapshot $F$ is added to $\cG$ with probability $1/2$, in which case $u_{\cG}$ is added to $\OF_{\cG}$ and at that moment $u_{\cG}$ is $(v,\OF_{\cH},\mu)$-good (because $u_{\cH}$ is $(v,\OF_{\cH},\mu)$-good and $u_{\cG} = u_{\cH}$). Thus, as before, we can conclude from \cref{lem:simplechb}, that with probability at least $1-e^{-T/500}$ the number of rounds when $u_{\cG}$ is added to $\OF_{\cG}$ and it is $(v,\OF_{\cH},\mu)$-good at that moment, is at least $2T/25$. Note that any such vertex is either added to $\OF_{\cH}$ later in the coin-flipping process, or, otherwise, by \cref{obs:nu-monotone}, stays $(v,\OF_{\cH},\mu)$-good. This implies that by the end of the coin-flipping process we have that $\OF_{\cG} \cap \OF_{\cH}$ contains at least $T/25$ vertices or $\OF_{\cG}$ contains at least $T/25$ $(v,\OF_{\cH},\mu)$-good vertices.
		
		\item $\tau_2 \geq \tau/2 \geq T/5$. In each of the rounds, in which we have $u_{\cG} \neq u_{\cH}$, by \cref{cl:D_G-or-D_H}, either $u_{\cG} \in \OF_{\cH}$ or $u_{\cH} \in \OF_{\cG}$.
		In the former case, upon the coin-flip, if $F$ is added to $\cG$, then $u_{\cG} \in \OF_{\cG} \cap \OF_{\cH}$ by the end of the round. In the latter case, if $F$ is added to $\cH$, then $u_{\cH} \in \OF_{\cG} \cap \OF_{\cH}$. Thus, the probability that $|\OF_{\cG} \cap \OF_{\cH}|$ increases by one in this round, is at least 
		\begin{align*}
			&\pr[(u_{\cG} \in \OF_{\cH})\text{ and }(F \text{ is added to } \cG)] + \pr[(u_{\cH} \in \OF_{\cG})\text{ and }(F \text{ is added to } \cH)] =\\  
			&\pr[u_{\cG} \in \OF_{\cH}] \cdot \pr[F \text{ is added to } \cG] + \pr[u_{\cH} \in \OF_{\cG}] \cdot \pr[F \text{ is added to } \cH] =\\
			&\frac{1}{2}\left( \pr[u_{\cG} \in \OF_{\cH}] + \pr[u_{\cH} \in \OF_{\cG}]\right) = \frac{1}{2}.
		\end{align*}
		Therefore, applying \cref{lem:simplechb} again, we deduce that by the end of the coin-flipping process we have that with probability at least $1-e^{-T/500}$ the number of vertices in $\OF_{\cG} \cap \OF_{\cH}$ is at least $2T/25 \geq T/25$. 
	\end{enumerate}
	
	Applying the union bound, we conclude that with probability at least $1 - e^{-T/100} - e^{-T/500} \geq 1 - 2e^{-T/500}$, by the end of the coin-flipping process, we have $|\OF_{\cG} \cap \OF_{\cH}| \geq \frac{T}{25}$, or $\OF_{\cG}$ contains at least $\frac{T}{25}$ $(v,\OF_{\cH},\mu)$-good vertices.
\end{proof}

We now use \cref{lem:stable-diffuse} to establish our next property of temporal DFS on RST. It says that for a fixed vertex~$v$, two independent random temporal graphs $\cG,\cH \sim (\cT, \mu)$, and a set $S$ of vertices that may depend on the first $T$, but is independent of the next $T$ snapshots of both graphs, either (1) with probability at least 1/3, in the first $T$ time steps, temporal DFS on $\cG$ discovers many vertices that are outside $S$, or (2) with probability at least 1/3, in the first $2T$ time steps temporal DFS on $\cH$ discovers at least one vertex from $S$.

\begin{lemma}\label{lem:f-s}
	Let $n \in \bN$ be large enough, $v\in [n]$, and $\cG, \cH \sim (\cT, \mu)$ be independent. Let $T \in \bN$ satisfy $20 \sqrt{n} \leq  T \leq n/2$.
	Let $S \subseteq [n]$ be any set that is independent of the snapshots of $\cG$ and $\cH$ from $T+1$ to $2T$, but may depend on the first $T$ snapshots of $\cG$ and $\cH$. Then
	\begin{enumerate}
		\item $\pr[\OF_{2T}^{\cH}(v) \cap S  \neq \emptyset] \geq 1/3$; or
		\label{f-s_case_1}
		\item $\pr[|\OF_{T}^{\cG}(v) \setminus S| \geq T/100] \geq 1/3$.
		\label{f-s_case_2}
	\end{enumerate}
\end{lemma}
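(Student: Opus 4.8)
The plan is to prove the dichotomy by contraposition of the second alternative: I would assume $\pr[|\OF_T^{\cG}(v) \setminus S| \geq T/100] < 1/3$, so the event $A := \{|\OF_T^{\cG}(v) \setminus S| < T/100\}$ satisfies $\pr[A] > 2/3$, and from this deduce the first alternative. The bridge between the two independent graphs will be \cref{lem:stable-diffuse}, applied to $\cG$ and $\cH$ restricted to their first $T$ snapshots (legitimate since $T \leq n$): on an event $Q$ of probability at least $1 - 4e^{-T/500}$, either (a) $|\OF_T^{\cG}(v) \cap \OF_T^{\cH}(v)| \geq T/50$, or (b) $\OF_T^{\cG}(v)$ contains at least $T/50$ vertices that are $(v,\OF_T^{\cH}(v),\mu)$-good; I will only use the $\OF_T^{\cG}$-half of conclusion (b). Because $T \geq 20\sqrt{n}$, the term $4e^{-T/500}$ is $o(1)$, so the event $E := A \cap Q$ has $\pr[E] > 2/3 - o(1)$, and $E$ is measurable with respect to the first $T$ snapshots of $\cG$ and $\cH$ (together with any auxiliary randomness defining $S$); in particular $E$ is independent of the snapshots of $\cH$ from $T+1$ to $2T$.

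Next I would analyse a realization lying in $E$. If (a) holds, then since fewer than $T/100$ of the at least $T/50$ vertices of $\OF_T^{\cG}(v) \cap \OF_T^{\cH}(v)$ can lie outside $S$ (they all lie in $\OF_T^{\cG}(v)$), this set meets $S$; in particular $\OF_T^{\cH}(v) \cap S \neq \emptyset$, hence $\OF_{2T}^{\cH}(v) \cap S \neq \emptyset$ with certainty. If (b) holds, I would let $W$ be the set of those $(v,\OF_T^{\cH}(v),\mu)$-good vertices of $\OF_T^{\cG}(v)$ that also belong to $S$; since $\OF_T^{\cG}(v)$ has fewer than $T/100$ vertices outside $S$, this gives $|W| \geq T/50 - T/100 = T/100$. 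The key observation is that a $(v,D,\mu)$-good vertex cannot lie in $D$ (otherwise $\nu_{v,D,\mu}$ of it is $0$), so $W$ is disjoint from $\OF_T^{\cH}(v)$; thus from time $T$ on, the temporal DFS on $\cH$ can be viewed as a fresh search from the set $\OF_T^{\cH}(v)$ that still has a reservoir $W \subseteq S$ of unvisited, good targets.

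The remaining step is to show this fresh search on $\cH$ discovers a vertex of $W$ within its next $T$ snapshots with probability at least $1-e^{-2}$. As long as no vertex of $W$ has been discovered, the current discovered set $D$ contains $\OF_T^{\cH}(v)$ and is disjoint from $W$, so by \cref{obs:nu-monotone} every $u \in W$ stays $(v,D,\mu)$-good and the next snapshot discovers some vertex of $W$ with probability $\sum_{u\in W}\nu_{v,D,\mu}(u) \geq |W|/(2n)$. Since these $T$ snapshots of $\cH$ are independent of $E$ and of $W$, the probability of never discovering a vertex of $W$ is at most $(1 - |W|/(2n))^{T} \leq e^{-|W|T/(2n)} \leq e^{-T^2/(200n)} \leq e^{-2}$, using $|W| \geq T/100$ and $T \geq 20\sqrt{n}$. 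As $W \subseteq S$, this yields $\pr[\OF_{2T}^{\cH}(v) \cap S \neq \emptyset \mid \text{realization in } E] \geq 1 - e^{-2}$, and the same bound trivially holds under (a). Integrating over $E$ would then give $\pr[\OF_{2T}^{\cH}(v) \cap S \neq \emptyset] \geq (1 - e^{-2})\pr[E] \geq (1-e^{-2})(2/3 - o(1)) > 1/3$ for $n$ sufficiently large, which is the first alternative.

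I expect the main obstacle to be bookkeeping the dependencies cleanly: the set $S$, the event $E$, and the good-vertex structure supplied by \cref{lem:stable-diffuse} all live on the joint $\sigma$-algebra generated by the first $T$ snapshots of $\cG$ and $\cH$, so they must be combined there (via the union bound producing $E$), whereas the final growth estimate must be run on the independent continuation $\cH_{[T+1,2T]}$. What makes the argument go through is the conjunction of two facts about $(v,D,\mu)$-goodness established earlier: monotonicity in $D$ (\cref{obs:nu-monotone}) and the implication that a good vertex is not in $D$. Together these let the reservoir $W$, carved out of $\OF_T^{\cG}(v) \cap S$ using the failure of the second alternative, remain a legitimate set of good targets for the temporal DFS on $\cH$ throughout steps $T+1$ to $2T$.
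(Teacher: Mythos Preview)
Your proposal is correct and follows essentially the same approach as the paper: both arguments invoke \cref{lem:stable-diffuse} on the first $T$ snapshots, intersect its high-probability event with the failure of alternative~(ii), split into the ``large overlap'' case (where $\OF_T^{\cH}(v)$ deterministically meets $S$) and the ``many good vertices'' case (where at least $T/100$ good vertices lie in $S$ and are caught by the fresh snapshots $\cH_{[T+1,2T]}$ with probability $\geq 1-e^{-T^2/(200n)}$ via \cref{obs:nu-monotone}). The only difference is bookkeeping: the paper conditions on $Q_1 \cup Q_2$ and works with thresholds $2/5$ and $3/5$, whereas you use a direct union bound to get $\pr[E] > 2/3 - o(1)$ and multiply by $1-e^{-2}$, which is arguably a little cleaner.
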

\begin{proof}
	We start by introducing notation for the following events:
	\begin{align*}
		Q&:=\{ \OF_{2T}^{\cH}(v) \cap S\neq \emptyset\}, \\
		W &:=\{|\OF_{T}^{\cG}(v) \setminus S| \geq T/100\},\\
		Q_1&:=\{|\OF_{T}^{\cG}(v) \cap \OF_{T}^{\cH}(v)| \geq \frac{T}{50}\},\\
		Q_2&:=
		\left\{   
		\begin{matrix}
			\OF_T^{\cG}(v)\text{ contains at least }\frac{T}{50}\text{ $(v,\OF_T^{\cH}(v),\mu)$-good vertices}
		\end{matrix}
		\right\}.
	\end{align*}
	
	In this notation, our goal is to show that \begin{equation}\label{eq:f-smaingoal}
		\pr[Q] \geq 1/3\qquad \text{ or }\qquad \pr[W] \geq 1/3.
	\end{equation} 
	We will instead prove that 
	\begin{equation}\label{eq:f-sfirstgoal}
		\pr[Q \mid Q_1\cup Q_2 ]  \geq 2/5\qquad \text{ or }\qquad \pr[W\mid Q_1\cup Q_2 ] \geq 2/5.
	\end{equation} 
	Indeed, \cref{eq:f-sfirstgoal} together with the law of total probability, and the bound $ \pr[Q_1 \cup Q_2] \geq 1-4e^{-T/500}\geq 5/6$ (by \cref{lem:stable-diffuse}), imply \cref{eq:f-smaingoal}.
	
	To prove \eqref{eq:f-sfirstgoal}, we suppose that $\pr[ \lnot W \mid Q_1\cup Q_2 ] \geq  3/5$, and show that this implies $\pr[Q \mid Q_1 \cup Q_2] \geq 2/5$. Note that $\pr[\lnot W \cap (Q_1 \cup Q_2)]>0$, or else we contradict our assumption.
    Observe that, 
	\begin{equation}\label{eq:qcondit}
		\pr[Q \mid Q_1 \cup Q_2] 
		\geq  
		\pr[Q \mid \lnot W \cap (Q_1 \cup Q_2)]\cdot\pr[ \lnot W \mid   Q_1 \cup Q_2] 
		\geq \frac{3}{5} \cdot \pr[Q \mid \lnot W \cap (Q_1 \cup Q_2)]. 
	\end{equation}
	Notice that we cannot have both $\pr[\lnot W \cap Q_1] = 0$ and $\pr[\lnot W \cap (Q_2 \setminus Q_1)] =0 $, since $\pr[\lnot W \cap (Q_1 \cup Q_2)]>0 $.  We now assume that \begin{equation}\label{eq:nonzeroassumpt}\pr[\lnot W \cap Q_1] > 0 \qquad \text{and} \qquad \pr[\lnot W \cap (Q_2 \setminus Q_1)] >0 .\end{equation} Then, by the law of total probability, we have
	\begin{align}
		\pr[Q \mid \lnot W \cap (Q_1 \cup Q_2)] 
		&= 	\pr[Q \mid \lnot W \cap Q_1]  \cdot \pr[Q_1 \mid \lnot W\cap(Q_1 \cup Q_2)]\notag \\ 
		&\qquad +	\pr[Q \mid \lnot W \cap (Q_2 \setminus Q_1)]  \cdot \pr[Q_2 \setminus Q_1 \mid \lnot W\cap(Q_1 \cup Q_2)] \notag \\
		&\geq \min\{\pr[Q \mid \lnot W \cap Q_1],\,\pr[Q \mid \lnot W \cap (Q_2 \setminus Q_1)] \} \label{eq:minoftwo} .
	\end{align}
	
	We now estimate each of the terms in \eqref{eq:minoftwo} individually. However, first observe that we can assume w.l.o.g.\ that \eqref{eq:nonzeroassumpt} holds, as otherwise we just end up with a simpler expression for \eqref{eq:minoftwo} only involving one of the two conditional probabilities.  
	
	\textbf{Estimation of }$\pr[Q \mid \lnot W \cap Q_1]$. 	
	We claim that 
	\begin{equation}
		\label{eq:the1prob}\pr[Q~|~\lnot W \cap Q_1] = 1. 
	\end{equation} 
	Indeed, assuming $|\OF_T^{\cG}(v) \cap \OF_T^{\cH}(v)| \geq \frac{T}{50}$ (event $Q_1$) and $|\OF_{T}^{\cG}(v) \setminus S| < T/100$ (event $\lnot W$), it is easy to see that  $\OF^{\cH}_T(v) \cap S \neq \emptyset$, and hence, by the monotonicity of $\OF^{\cH}_t(v)$, we have $\OF^{\cH}_{2T}(v) \cap S \neq \emptyset$ (event $Q$).

	\textbf{Estimation of }$\pr[Q \mid \lnot W \cap (Q_2 \setminus Q_1)]$. 	Assuming $\OF^{\cG}_T(v)$ contains at least $\frac{T}{50}$ many $(v,\OF^{\cH}_T(v),\mu)$-good vertices (event $Q_2 \setminus Q_1$) and $|\OF^{\cG}_T(v) \setminus S| < \frac{T}{100}$ (event $\lnot W$), it follows that 
	$\OF^{\cG}_T(v) \cap S$ contains at least $\frac{T}{100}$ many $(v,\OF^{\cH}_T(v),\mu)$-good vertices.
	Then, at least one of these $(v,\OF^{\cH}_T(v),\mu)$-good vertices will be reached by the temporal DFS algorithm in $\cH$ in the next $T$ rounds, implying $\OF_{2T}^{\cH}(v) \cap S \neq \emptyset$, with probability at least 
	\[
	1 - \left(1 - \frac{T}{200n}\right)^T.\]  
	Recall that for $n\geq 1$ and $|x|\leq n$ we have $(1-\frac{x}{n} )^n \leq e^x $. Thus for $T\leq 200 n $ we have 
    \[ 
        \left(1-\frac{T}{200n} \right)^T = \left[\left(1-\frac{T}{200n} \right)^{200n} \right]^{T/200n} \leq \left[e^{-T} \right]^{T/200n} = e^{-T^2/200n}.
    \] 
	Note that $T\leq 200 n $ is satisfied as we assume $T\leq n/2$ in the statement. Thus, we have 
	\begin{equation}\label{eq:Q-Q_2}
		\pr[Q \mid \lnot W \cap (Q_2 \setminus Q_1)] \geq 1 - \left(1 - \frac{T}{200n}\right)^T \geq 1- e^{-T^2/200n} .
	\end{equation}
	
	Returning to the main thread of the proof of \eqref{eq:f-sfirstgoal}, by \eqref{eq:qcondit}, \eqref{eq:minoftwo}, \eqref{eq:the1prob}, \eqref{eq:Q-Q_2}, we have 
	\begin{align*}		
		\pr[Q \mid Q_1 \cup Q_2] 
		&\geq   
		\frac{3}{5} \cdot \pr[Q \mid \lnot W \cap (Q_2 \setminus Q_1)] \\
		&\geq 
		\frac{3}{5} \cdot (1- e^{-T^2/200n})\\
		&\geq 2/5,  
	\end{align*}
	establishing \eqref{eq:f-sfirstgoal}, where the last inequality follows since $T\geq 20\sqrt{n} $.
\end{proof}

We now apply \cref{lem:f-s} to show that if two vertices $u$ and $w$ are far from each other in $(\cT, \mu)$, then, in expectation, a constant fraction of vertices revealed by temporal DFS on $\cG \sim (\cT, \mu)$ from vertex $u$ by time step $T$ is not reachable by the same time in $\cG$ from a fixed temporal DFS ball of $w$.

\begin{lemma}\label{cor:F-S-new}
	Let $p \in (0,1/9]$, $t \in \bN$, and $T \in \bN$ satisfies $20 \sqrt{n} \leq  T \leq n/2$.
	If $u,w \in [n]$ are $(3T+4t, p)$-far from each other, then
	\[
		\underset{\cG \sim (\cT, \mu)}{\mathbb{E}}
		\left[|\OF_{T}^{\cG}(u) \setminus \Flood_{T}^{\cG}(B_t^p(w))|\right] \geq \frac{T}{300}.
	\]
\end{lemma}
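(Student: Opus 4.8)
The statement relates an expectation over the random graph $\cG$ to the geometry of $(t,p)$-closeness. I would prove it by a union-bound / averaging argument built on top of \cref{lem:f-s}. Fix $u, w$ that are $(3T+4t,p)$-far. The key is to choose the set $S$ in \cref{lem:f-s} to be (a superset of) the ``reachable from $B_t^p(w)$'' set, so that case~\ref{f-s_case_1} of that lemma would contradict the far-ness assumption, leaving only case~\ref{f-s_case_2}, which is exactly what we want to lower bound. The main subtlety is the independence requirement on $S$: \cref{lem:f-s} demands that $S$ depend only on the first $T$ snapshots of $\cG$ and $\cH$, not on snapshots $T{+}1,\dots,2T$. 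But $\Flood_T^{\cG}(B_t^p(w))$ — the thing actually appearing in the statement — depends on the first $T$ snapshots of $\cG$ only, which is fine. The ball $B_t^p(w)$ itself is a deterministic set (defined purely in terms of $(\cT,\mu)$), so it contributes no randomness. Thus I would set $S := \Flood_T^{\cG}(B_t^p(w))$ on a single copy $\cG$, and let $\cH$ be an independent copy used only inside the lemma.

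**Ruling out the first case of \cref{lem:f-s}.** I would argue that case~\ref{f-s_case_1} — namely $\pr[\OF_{2T}^{\cH}(v)\cap S\neq\emptyset]\geq 1/3$ with $v=u$ — cannot hold. Suppose it did. Since $S=\Flood_T^{\cG}(B_t^p(w))$, the event $\OF_{2T}^{\cH}(u)\cap S\neq\emptyset$ means there is some vertex reached by temporal DFS from $u$ in $\cH$ within $2T$ steps that also lies in $\Flood_T^{\cG}(B_t^p(w))$; chaining: $u$ reaches some $z$ in $\leq 2T$ steps, and some $w'\in B_t^p(w)$ reaches $z$ in $\leq T$ steps, and $w$ reaches $w'$ in $\leq t$ steps with probability $\geq p$. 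Running these temporal paths in the right order (using reversibility of closeness, as noted after \cref{definition:close_vertices}) gives a temporal path from $w$ to $u$ of total length $\leq t + T + 2T = 3T + t \leq 3T+4t$ snapshots. The probability bookkeeping needs care: I'd want $\pr_{\cG,\cH}[w\in \Flood_{3T+4t}(u)]$ to come out at least something like $p/3 \cdot p$ or so — but since $p\leq 1/9$ and the $(3T+4t,p)$-far assumption says this probability is $<p$, I need the compound probability to reach $\geq p$. This is the delicate point, because $1/3\cdot p < p$. The resolution must be that the definition of $B_t^p(w)$ already guarantees \emph{each} vertex in it is reached from $w$ with probability $\geq p$, and one leverages a cleverer conditioning: once we condition on a specific realization of $\cG$ in which $z\in\Flood_T^{\cG}(B_t^p(w))$ via some fixed $w'$, we still have an \emph{independent} fresh chance (over further snapshots, or over the reverse direction) to connect $w$ to $w'$ — here the fact that $\cH$ and $\cG$ are independent, plus time-reversal symmetry of $(\cT,\mu)$, is doing the heavy lifting. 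I expect this is the step the authors handle with a careful splitting of the time interval $[1, 3T+4t]$ into blocks $[1,2T]$, $[2T+1, 3T]$ (reversed), $[3T+1, 3T+4t]$, assigning independent randomness to each, so the probabilities multiply rather than being clipped.

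**Extracting the expectation bound from the second case.** Once case~\ref{f-s_case_1} is excluded, \cref{lem:f-s}(\ref{f-s_case_2}) gives $\pr[\,|\OF_T^{\cG}(u)\setminus S|\geq T/100\,]\geq 1/3$. Then by the trivial bound $\Ex{|\OF_T^{\cG}(u)\setminus S|} \geq \tfrac{T}{100}\cdot\pr[\,|\OF_T^{\cG}(u)\setminus S|\geq T/100\,] \geq \tfrac{T}{100}\cdot\tfrac13 = \tfrac{T}{300}$, which is exactly the claimed bound (with $S = \Flood_T^{\cG}(B_t^p(w))$). So the structure is: apply \cref{lem:f-s}, show case~(i) fails, conclude case~(ii), then Markov-in-reverse (i.e.\ $\E X \geq a\Pr[X\geq a]$).

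**Anticipated main obstacle.** The crux is the probability accounting in ruling out \cref{lem:f-s}(\ref{f-s_case_1}): making a compound temporal path from $w$ to $u$ have probability at least $p$ when one of the links only has probability $1/3$ and another only $p$. I'd need to be careful that the blocks of time used to build the $w\leadsto w'$ link, the $w'\leadsto z$ link, and the $z\leadsto u$ link use \emph{disjoint, hence independent} sets of snapshots, and that the $1/3$ and the $p$ get combined multiplicatively with the right conditioning — and crucially that $B_t^p(w)$ is defined via probability $\geq p$ over a \emph{fresh} independent copy, so that conditioning on the middle link doesn't destroy the outer probabilities. The constant $700\sqrt n = \Theta(T)$ and the factor-$3$ and factor-$4$ in $3T+4t$ are presumably tuned precisely so that $1/3 \cdot$ (something) $\cdot\, p \cdot (1 - e^{-\Theta(T^2/n)}) \geq p$ holds, using $T\geq 20\sqrt n$ to kill the error term — I would expect the real argument to be a contradiction: assume the far-ness, i.e. $\pr[w\in\Flood_{3T+4t}(u)] < p$, then show case~(i) would force $\pr[w\in\Flood_{3T+4t}(u)]\geq p$ via the independent-blocks construction, contradiction; hence case~(ii) holds.
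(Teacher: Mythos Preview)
Your overall architecture is exactly the paper's: apply \cref{lem:f-s} with $v=u$ and $S=\Flood_T(B_t^p(w))$, rule out case~\ref{f-s_case_1} using the $(3T+4t,p)$-far hypothesis and an independent-blocks construction, then read off case~\ref{f-s_case_2} and bound the expectation by $\tfrac{T}{100}\cdot\tfrac13$. You even guess the block decomposition $[1,2T]$, $[2T+1,3T]$ (reversed), $[3T+1,3T+4t]$ correctly; in the paper's notation these are $\cH_1,\cH_2,\cH_3$, and \cref{lem:f-s} is invoked with its ``$\cG$'' equal to $\overleftarrow{\cH_2}$ and its ``$\cH$'' equal to $\cH_1$.

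The one piece you leave unresolved---the probability accounting you flag as the main obstacle---is handled differently from what you speculate. There is no $(1-e^{-\Theta(T^2/n)})$ factor here; that belonged to \cref{lem:f-s}'s internal argument. Instead, write $A:=\OF_{2T}^{\cH_1}(u)\cap\Flood_T^{\overleftarrow{\cH_2}}(B_t^p(w))$. Conditional on $A\neq\emptyset$, the forward set $\Flood_T^{\cH_2}(A)$ already contains some vertex of the deterministic ball $B_t^p(w)$. Now the role of the factor $4$ in $4t$ is revealed: split the independent block $\cH_3$ (of length $4t$) into four length-$t$ windows, each giving $w$ a fresh probability-$\geq p$ shot (by definition of $B_t^p(w)$) at reaching that vertex. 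Hence
\[
\Pr{\Flood_T^{\cH_2}(A)\cap\Flood_{4t}^{\overleftarrow{\cH_3}}(w)\neq\emptyset\ \big|\ A\neq\emptyset}\;\geq\;1-(1-p)^4\;>\;3p\quad\text{for }p\le\tfrac19.
\]
Chaining $\cH_1,\cH_2,\cH_3$ gives $p>\pr[u\text{ reaches }w\text{ in }3T+4t]\geq\pr[A\neq\emptyset]\cdot 3p$, so $\pr[A\neq\emptyset]<\tfrac13$, which is precisely the negation of case~\ref{f-s_case_1}. The rest goes as you wrote. So your missing ingredient is not a concentration estimate but the simple observation that four independent Bernoulli-$p$ trials succeed with probability exceeding $3p$ when $p\le 1/9$.
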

\begin{proof}
	Let $\cG \sim (\cT, \mu)$, $\cH_1 := \cG_{[1,2T]}$, $\cH_2 := \cG_{[2T+1,3T]}$, $\cH_3 := \cG_{[3T+1,3T+4t]}$.
	Note that $\cH_1, \cH_2, \cH_3$ are independent of each other and can be seen as initial parts of independent temporal graphs sampled from $(\cT, \mu)$. We assume that snapshots of these temporal graphs are numbered starting from $1$.
	
	Denote $A := \OF_{2T}^{\cH_1}(u) \cap \Flood_{T}^{\overleftarrow{\cH_2}}(B_t^p(w))$. We claim that $\pr[A \neq \emptyset] < 1/3$. Suppose that $\pr[A \neq \emptyset] \neq 0$, as otherwise we are done. See \cref{fig:cones} for the construction of the graph and an illustration of the reachability of set $A$.
	
	Observe that if the set $A$ is non-empty then there is a path of length at most $2T$ from $u$ to $A$ in $\cH_1$.
	Note also that if the set $\Flood_{T}^{\cH_2}(A) \cap \Flood_{4t}^{\overleftarrow{\cH_3}}(w)$ is non-empty then it contains some vertex $x$ such that there is a path of length at most $T$ from $A$ to $x$ in $\mathcal{H}_2$, and a path of length at most $4t$ from $x$ to $w$ in $\cH_3$. 
	Consequently, if both $A$ and $\Flood_{T}^{\cH_2}(A) \cap \Flood_{4t}^{\overleftarrow{\cH_3}}(w)$ are non-empty, vertex $u$ reaches $w$ in $\cG$ by time step $3T+4t$. Recalling the assumption that $u$ and $w$ are $(3T+4t,p)$-far from each other, it follows that
    \begin{equation}
	  	p 
	  	> 
	  	\pr\left[ u \text{ reaches } w \text{ in } \cG \text{ by time } 3T+4t \right] \\ 
	   \geq 
	  \pr\left[ (A \neq \emptyset) \text{ and } (\Flood_{T}^{\cH_2}(A) \cap \Flood_{4t}^{\overleftarrow{\cH_3}}(w) \neq \emptyset) \right].
    \end{equation}
  Conditional on $A\neq \emptyset$, from the definition of $A$, we have that there is path of length at most $T$ from $A$ to $B_t^p(w)$ in $\cH_2$, i.e.,
  $\Flood_{T}^{\cH_2}(A)$ contains a vertex from $B_t^p(w)$. Thus, in any time window of length $t$, with probability at least $p$, the vertex $w$ can reach a vertex in $\Flood_{T}^{\cH_2}(A)$. By considering four of such independent time windows of length $t$ we have   
  \[\pr\left[\Flood_{T}^{\cH_2}(A) \cap \Flood_{4t}^{\overleftarrow{\cH_3}}(w) \neq \emptyset ~|~ A \neq \emptyset \right] \geq 1-(1-p)^4 > 3p, \] where the last inequality holds since $p\leq 1/9$. Putting these bounds together gives
	\begin{align*}
		p &> \pr\left[ u \text{ reaches } w \text{ in } \cG \text{ by time } 3T+4t \right] \\ 
		&\geq 
		\pr\left[ (A \neq \emptyset) \text{ and } (\Flood_{T}^{\cH_2}(A) \cap \Flood_{4t}^{\overleftarrow{\cH_3}}(w) \neq \emptyset) \right] \\
		&= \pr\left[ A \neq \emptyset] \cdot \pr[\Flood_{T}^{\cH_2}(A) \cap \Flood_{4t}^{\overleftarrow{\cH_3}}(w) \neq \emptyset ~|~ A \neq \emptyset \right]  \\ 
		&\geq \pr[A \neq \emptyset] \cdot 3p, 
	\end{align*}
	 and hence we have $\pr[A \neq \emptyset] = \pr\left[ \OF_{2T}^{\cH_1}(u) \cap \Flood_{T}^{\overleftarrow{\cH_2}}(B_t^p(w))\neq \emptyset \right] < 1/3$.
	 Therefore, from \cref{lem:f-s} applied to $\cG=\overleftarrow{\cH_2}$ and $\cH=\cH_1$ with $S = \Flood_{T}^{\overleftarrow{\cH_2}}(B_t^p(w))$  we conclude that 
	 \[
            \pr\left[ |\OF_{T}^{\overleftarrow{\cH_2}}(u) \setminus \Flood_{T}^{\overleftarrow{\cH_2}}(B_t^p(w))| \geq T/100 \right] \geq 1/3.
	 \] 
	 implying 	
	 \[
	 	\mathbb{E}\left[|\OF_{T}^{\overleftarrow{\cH_2}}(u) \setminus \Flood_{T}^{\overleftarrow{\cH_2}}(B_t^p(w))|\right] \geq \frac{T}{300}.
	 \] 
	 	
	 Since, $\overleftarrow{\cH_2}$ is distributed as $\cG_{[1,T]}$, we conclude the desired bound.
\end{proof}

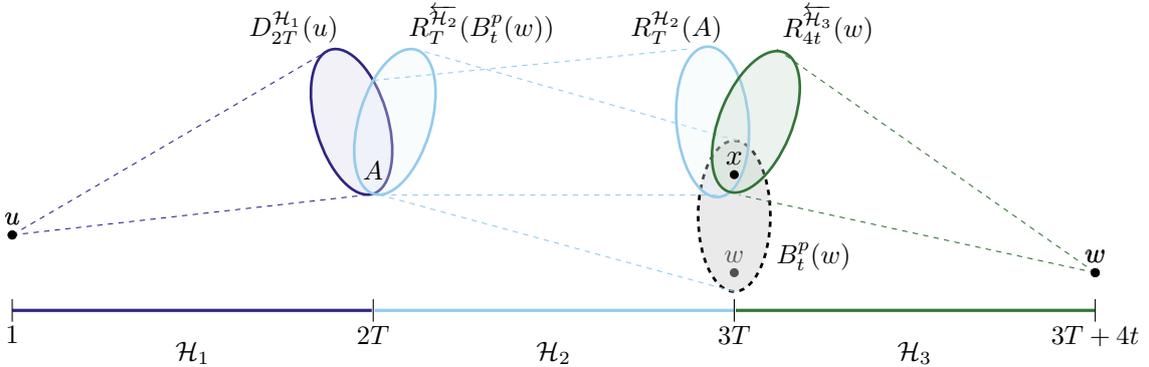
\begin{figure}[ht]
	
	\centering
	
	\begin{tikzpicture}[node/.style={circle,fill=black,inner sep=1.2pt},xscale=.95 ]
		\newcommand{\ybump}{.5}
		
		\draw[very thick, FigRed] (0,{1-\ybump})  -- (4.98,{1-\ybump});
		\draw[very thick, FigGreen] (5.02,{1-\ybump})  -- (10,{1-\ybump});
		\draw[very thick, FigBlue] (10.02,{1-\ybump})  -- (15,{1-\ybump});
		\draw (0,{0.85-\ybump}) -- (0,{1.15-\ybump}) node[pos=0.25, below] {\small 1}; 
		
		\draw (5,{0.85-\ybump}) -- (5,{1.15-\ybump}) node[pos=0.25, below] {\small $2T$}; 
		
		\draw (10,{0.85-\ybump}) -- (10,{1.15-\ybump}) node[pos=0.25, below] {\small $3T$}; 
		
		\draw (15,{0.85-\ybump}) -- (15,{1.15-\ybump}) node[pos=0.25, below] {\small $3T + 4t$}; 
		
		
		\node[above ] at (2.5,{0.15-\ybump}) {\small $\mathcal{H}_1$};
		\node[above ] at (7.5,{0.15-\ybump}) {\small $\mathcal{H}_2$};
		\node[above ] at (12.5,{0.15-\ybump}) {\small $\mathcal{H}_3$};
		
		\node[node] at (0,1.5) {};
		\node[above] at (0,1.5) {\small $u$};
		\draw [dash pattern=on 2pt off 2pt,color=FigRed] (0,1.5)-- (4.42,3.97);
		\draw [dash pattern=on 2pt off 2pt,color=FigRed] (0,1.5)-- (4.9,2.03);	
			
		\node[node] at (10,{1.5-\ybump}) { };
		\node[above] at (10,{1.5-\ybump}) {\small $w$};
		
		\draw [rotate around={-72.5:(4.7,3)},line width=1pt, color=FigRed,fill=FigRed!20, fill opacity=.30] (4.7,3) ellipse (1cm and 0.5cm);
		\draw [rotate around={72.5:(5.3,3)},line width=1pt, color=FigGreen,fill=FigGreen!20, fill opacity=.30] (5.3,3) ellipse (1cm and 0.5cm);	
		
		\node[above] at (3.9,3.9) {\small $\OF_{2T}^{\cH_1}(u)$};
		\node[above] at (6.5,3.9) {\small $\Flood_{T}^{\overleftarrow{\cH_2}}(B_t^p(w))$};
		\node[above] at (5,2.1) {\small $A$};
		
		
		\draw [line width=1.02pt,dash pattern=on 2pt off 2pt,color=black,fill=black!20, fill opacity=.4] (10,{2.25-\ybump})  ellipse (.5cm and 1cm);	
		
		\draw [dash pattern=on 2pt off 2pt,color=FigGreen] (10.05,{3.25-\ybump})-- (5.6, 3.96 );
		
		\draw [dash pattern=on 2pt off 2pt,color=FigGreen] (10,{1.25-\ybump})-- (5.1, 2.03 );

		
		\draw [dash pattern=on 2pt off 2pt,color=FigGreen] (5,3.55)-- (9.42,3.97);
		\draw [dash pattern=on 2pt off 2pt,color=FigGreen] (5.1,2.03)-- (9.9,2.03);

		\draw [rotate around={-85:(9.7,3)},line width=1pt, color=FigGreen,fill=FigGreen!20, fill opacity=.30] (9.7,3) ellipse (1cm and 0.5cm);
		\draw [rotate around={66.4:(10.3,3)},line width=1pt, color=FigBlue,fill=FigBlue!30, fill opacity=.30] (10.3,3) ellipse (1cm and 0.5cm);	
		
		\node[above] at (9.2,3.9) {\small $\Flood_{T}^{\cH_2}(A)$};
		\node[above] at (11.3,3.9) {\small $\Flood_{4t}^{\overleftarrow{\cH_3}}(w)$};
				\node[above] at (11.1,.9) {\small $B_t^p(w)$};

		\node[node] at (15,{1.5-\ybump}) { };
		\node[above] at (15,{1.5-\ybump}) {$w$};
		\draw [dash pattern=on 2pt off 2pt,color=FigBlue] (15,{1.5-\ybump})-- (10.58,3.97);
		\draw [dash pattern=on 2pt off 2pt,color=FigBlue] (15,{1.5-\ybump})-- (10.07,2.03);	
		
		\node[node] at (10,2.3) { };
		\node[above] at (10,2.3) {$x$};

            \node[node] at (0,1.5) {};
		\node[above] at (0,1.5) {\small $u$};

            \node[node] at (15,{1.5-\ybump}) { };
		\node[above] at (15,{1.5-\ybump}) {\small $w$};
	\end{tikzpicture}
	
	\caption{Illustration of setup in the proof of \cref{cor:F-S-new}.}\label{fig:cones}
\end{figure}

\subsection{Vertices with few close vertices}\label{sec:few-close-vertices}
In this section, we prove that if a vertex has only a few close vertices, then a temporal DFS ball around $v$ will have connecting edges to vertices that are far from $v$ in a constant fraction of snapshots.

Let $S_1, S_2 \subseteq [n]$ be two sets of vertices. A graph $F$ on vertex set $[n]$ is called \emph{$(S_1,S_2)$-linking} if $F$ contains an edge between a vertex in $S_1$ and a vertex in $S_2$. Note that for fixed sets $S_1, S_2$,  $(\cT, \mu)$ defines a Bernoulli random variable indicating whether a random snapshot $F \sim \mu$ is $(S_1,S_2)$-linking or not.

Now, we can use the above definition and have set $S_1$ be the vertices in the temporal DFS ball of vertex $v$, and set $S_2$ be the set of vertices that are far from $v$. The next lemma shows how often there is a direct edge between those two sets. 

\begin{lemma}\label{lem:close-ball}
	Let $p \in (0,1/9]$, $k \in \bN$, $t := 22k$, and let $\tau \geq t$.
	Let $v \in [n]$ be a vertex that has at most $k$ many $(\tau,p)$-close vertices.
	Let $B := B_t^p(v)$, and $C$ be the set of vertices that are $(\tau,p)$-far from $v$.
	Then the probability that a snapshot $F \sim \mu$ is $(B,C)$-linking is at least $1/22$.
\end{lemma}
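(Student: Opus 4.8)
The plan is to analyze what a single snapshot $F\sim\mu$ looks like when restricted to the ball $B=B_t^p(v)$ and to the far set $C$. Since $F$ is a tree on $[n]$, the key structural fact is that $B\cup C$ together with the edges of $F$ between these sets and within them forms a forest, so we can reason about how $F$ connects $B$ to the rest of the graph. First I would observe that, by \cref{obs:OFball-close}, every vertex of $B$ is $(t,p)$-close — hence certainly $(\tau,p)$-close since $\tau\ge t$ — to $v$, so $B$ is contained in the set of $(\tau,p)$-close vertices, and therefore $|B|\le k$ by hypothesis. Since $t=22k$, this gives $|B|\le t/22$. In particular $B$ is small compared to $t$, which is the slack we will exploit.

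The core of the argument is a ``not both'' dichotomy for each vertex $x\in B$: either $x$'s behaviour under temporal DFS stays mostly inside $B$ (so $B$ is nearly ``closed'' under one-step temporal DFS), or one of the snapshots in a window of length $t$ contains a $(B,C)$-linking edge. More precisely, I would run the temporal DFS from $v$ for $t$ steps and note that with probability $\ge p$ all $t$ newly discovered vertices lie in $\Flood_t^{\cG}(v)$; but the set of vertices that are $(t,p)$-close to $v$ — a superset of which is the set of $(\tau,p)$-close vertices — has size at most $k = t/22 < t$, so temporal DFS from $v$ cannot possibly stay within the close set for all $t$ steps with probability $\ge p$ unless... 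Here is the cleaner route: consider a snapshot $F\sim\mu$ and look at $u:=\rho(v,B,F)$, the first vertex outside $B$ in the DFS order $\sigma_{v,F}$. By definition of the temporal DFS ball and the fact that $|B|\le k$, I claim that with probability at least some constant, $F$ has an edge from $B$ to $u$ and moreover $u\in C$ — because if $u$ were $(\tau,p)$-close with too high probability over the choice of $F$, then iterating this one step for $t$ rounds would show that temporal DFS reaches at least $t/22$ new close vertices (one can make the discovered close set grow), contradicting $|{\text{close set}}|\le k=t/22$.

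To make this precise I would set up a counting/amortization argument: suppose for contradiction that $\Pr{F\text{ is }(B,C)\text{-linking}}<1/22$. Then in each of $t$ independent snapshots, with probability $>21/22$ the vertex $\rho(v,B,F)$ is \emph{not} in $C$, i.e.\ it is $(\tau,p)$-close to $v$; since there are at most $k=t/22$ such close vertices in total, a union bound / first-moment computation over the $t$ snapshots shows that with positive probability the temporal DFS from $v$ discovers, within $t$ steps, more than $k$ distinct $(\tau,p)$-close vertices beyond $B$ — but $B_t^p(v)=B$ already captures all vertices discoverable with probability $\ge p$, and combined with $|B|\le k$ we exceed the budget of $k$ close vertices. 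More carefully: I would track the random set $B_i$ of vertices discovered by temporal DFS after $i$ snapshots (starting from $B_0=\{v\}$), argue that as long as no $(B,C)$-linking edge has appeared and $B_i\subseteq B$ still holds, each new step adds a fresh close vertex, and that after fewer than $22k=t$ steps we must either have left $B$ via a $(B,C)$-linking edge or exhausted all $k$ close vertices; a Chernoff bound on the number of ``successful'' rounds (each occurring with probability $\ge p$ conditioned on the DFS path staying in the close region) then yields the contradiction with the required probability $1/22$.

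\textbf{Main obstacle.} The delicate point is handling the \emph{dependence} between consecutive temporal DFS steps: the vertex added at step $i$ depends on the set $B_{i-1}$ already discovered, so the events ``step $i$ discovers a new close vertex'' are not independent across $i$, and I must be careful that conditioning on the whole past does not destroy the $\Pr{\cdot}\ge p$ type bounds I want to iterate. I expect the resolution to use \cref{lem:next-vertex} and \cref{obs:nu-monotone} (monotonicity of $\rho$ and of $\nu_{v,D,\mu}$ in the discovered set) to argue that staying inside $B$ only makes it \emph{easier} to discover the next close vertex, so a stochastic domination by i.i.d.\ Bernoulli$(p)$ trials is available; then \cref{lem:simplechb} or a direct first-moment bound closes the gap. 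Getting the constant exactly $1/22$ (rather than some other constant) is what pins down the choice $t=22k$, and I would reverse-engineer the Chernoff/union-bound constants to land there.
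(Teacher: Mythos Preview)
Your opening observations are correct: $|B|\le k$ since every vertex of $B$ is $(\tau,p)$-close, and for any snapshot $F$ the vertex $\rho(v,B,F)$ is adjacent to $B$ (its parent in the DFS tree rooted at $v$ lies in $B$), so if $F$ is not $(B,C)$-linking then $\rho(v,B,F)$ is $(\tau,p)$-close.

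The core argument, however, has a real gap. You propose to track the growing temporal-DFS set $B_i$ (starting from $\{v\}$) and argue that ``as long as no $(B,C)$-linking edge has appeared and $B_i\subseteq B$, each step adds a fresh close vertex, and after $<22k$ steps we must either have left $B$ via a $(B,C)$-linking edge or exhausted all $k$ close vertices.'' The problem is the middle clause: even when the snapshot is \emph{not} $(B,C)$-linking and $B_{i-1}\subseteq B$, the new vertex $\rho(v,B_{i-1},F_i)$ can be a close vertex \emph{outside} $B$. You then leave $B$ without any $(B,C)$-linking edge, the containment $B_i\subseteq B$ fails, and from that point on a non-linking snapshot no longer forces the next DFS vertex to be close (it could be in $C$, adjacent to a vertex of $B_i\setminus B$). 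So the dichotomy ``leave $B$ via a linking edge, or exhaust the close vertices'' is false. Your proposed Chernoff/monotonicity patch does not address this: monotonicity of $\nu_{v,D,\mu}$ tells you nothing about how the process behaves once $B_i\not\subseteq B$, and there is no i.i.d.\ Bernoulli$(p)$ domination available for the quantity you actually need.

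The paper's argument resolves exactly this issue by running two auxiliary processes over $t$ snapshots: $U$ records the multiset $\{\rho(v,B,F_i)\}_{i\le t}$ (fixed discovered set $B$), and $W$ records $\{\rho(v,W^{i-1}\cap B,F_i)\}_{i\le t}$ (a growing set, but forcibly intersected with $B$ so it never leaves $B$). One then shows, for each close $x\notin B$, that $\mathbb{E}[U_x]\le 19\,\mathbb{E}[U_x-W_x]+2$: if not, then $W_x\ge 1$ with probability $>p$ (via a Chernoff bound on $U_x$ and Markov on $U_x-W_x$), and since $W\subseteq D_t^{\cG}(v)$ this would force $x\in B$, a contradiction. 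Summing over the at most $k$ close $x\notin B$ and using $\sum_{x\notin B}(U_x-W_x)\le |B|\le k$ gives $\sum_{x\text{ close},\,x\notin B}\mathbb{E}[U_x]\le 19k+2k=21k$, hence $\sum_{x\in C}\mathbb{E}[U_x]\ge t-21k=k$. Since each step with $\rho(v,B,F_i)\in C$ forces $F_i$ to be $(B,C)$-linking, the expected number of linking snapshots among $t$ is at least $k=t/22$, giving the $1/22$ bound. Note the constant $22$ comes from the arithmetic $19+2=21$, not from a Chernoff optimization.
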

\begin{proof}
	Let $F \sim \mu$ and $X:=X(F)$ be the Bernoulli random variable that is equal to 1 if and only if $F$ is $(B,C)$-linking.
	To prove the lemma, it is enough to show that in a sequence of $t$ snapshots sampled from $\mu$, the expected number of $(B,C)$-linking snapshots is at least $k = t/22$. Indeed, the number of $(B,C)$-linking snapshots is the sum of $t$ i.i.d. random variables with the same distribution as $X$, and thus the assumption $k = t/22 \leq t \cdot \bE[X] = t \cdot \pr[X = 1]$ would imply $\pr[X = 1] \geq 1/22$. 
	
	By \cref{obs:OFball-close}, every vertex in $B$ is $(t,p)$-close to $v$. Since $\tau \geq t$, every vertex in $B$ is also $(\tau,p)$-close to $v$, and thus by our assumption on the number of $(\tau,p)$-close vertices, we have $|B| \leq k$.

	Denote by $F_i$ the $i$-th snapshot of $\cG$. We define multisets of vertices $U$, $W$ below, followed by an informal explanation to assist the reader.  
    Let $U^0=\emptyset, W^0 = \{v\}$, and for $i=1,\ldots,t$ let 
	\begin{align*}
		x_U^i &= \rho(v,B, F_i) \\
		x_W^i &= \rho(v,W^{i-1} \cap B, F_i)\\
		U^i &= U^{i-1} \cup \{ x_{U}^i \}\\
		W^i &= W^{i-1} \cup \{ x_{W}^i \}.
	\end{align*}
	We define $U := U^t$ and $W := W^t$, and for $x \in [n]$ we denote by $U_x$ the multiplicity of $x$ in $U$, i.e., the number of steps $i \in [t]$ for which $x_U^i = x$. Similarly, $W_x$ is the multiplicity of $x$ in $W$. 
    Informally (see \cref{fig:U-W-sets} for illustration), set $U$ consists of all vertices that temporal DFS would discover by processing $F_i$, $i \in [t]$, given the current set of discovered vertices is $B$. Set $W$ in turn consists of vertices that temporal DFS would discover by processing $F_i$, $i \in [t]$, given the set of discovered vertices consists of all vertices in $B$ that were discovered by temporal DFS by processing snapshots $F_1, F_2, \ldots, F_{i-1}$.

    \begin{figure}
    \centering
    \begin{tikzpicture}
    
    \def\vradius{2pt}
    \def\cradius{3pt}
    
    \draw[FigBlue] (0,0) circle (3cm);
    \fill[green!20, fill opacity=0.1] (0,0) circle (3cm);
    \node at (-1.5,3) {\textcolor{FigBlue}{$B$}};
    
    \path[name path=circlearc] (30:3cm) arc (30:-30:3cm);
    
    \path[name path=wobble] plot[smooth cycle, tension=1] coordinates 
    {(30:3cm) 
    (1.8, 1.0) 
    (1.6, 1.5)
    (-0.8, 1.3)
    (-1.9, 0.5)
    (-1.7, -0.5)
    (-1.3, -1.5)
    (1.0, -2.0)
    (2, -1.4)
    (330:3cm)}; 
    
    \fill[red!40, fill opacity=0.1]
    plot[smooth cycle, tension=1] coordinates  
      {(30:3cm) 
       (1, 1.0) 
       (0.5, 1.5)
       (-0.8, 1.3)
       (-1.5, 0.5)
       (-1.7, -0.5)
       (-1.0, -1.5)
       (-0.3, -2.0)
       (0.5, -1.4)
       (330:3cm)};
       
    \draw[FigGreen] 
      plot[smooth cycle, tension=1] coordinates 
      {(30:3cm) 
       (1, 1.0) 
       (0.5, 1.5)
       (-0.8, 1.3)
       (-1.5, 0.5)
       (-1.7, -0.5)
       (-1.0, -1.5)
       (-0.3, -2.0)
       (0.5, -1.4)
       (330:3cm)};

    \node at (-0.5,-2.3) {\textcolor{FigGreen}{$W_{i-1} \cap B$}};
    
    \node[circle, fill=black, inner sep=2.3pt, label=below:$v$] at (0,0) {};
    
    \node[circle, fill=FigGreen, inner sep=\vradius] at (0.5, 1.7) {};
    \node[circle, fill=FigGreen, inner sep=\vradius] at (1, 1.3) {};
    \node[circle, fill=FigGreen, inner sep=\vradius] at (1, 1.3) {};
    \node[circle, fill=FigGreen, inner sep=\vradius] at (-0.5, 1.7) {};
    \node[circle, fill=FigGreen, inner sep=\vradius] at (-1.3, 1.4) {};
    \node[circle, fill=FigGreen, inner sep=\vradius] at (-1.8, 0.6) {};
    \node[circle, fill=FigGreen, inner sep=\vradius] at (-2, -0.5) {};
    \node[circle, fill=FigGreen, inner sep=\vradius] at (-1.4, -1.6) {};
    \node[circle, fill=FigGreen, inner sep=\vradius] at (0.5, -1.7) {};
    \node[circle, fill=FigGreen, inner sep=\vradius] at (1.8, -1.9) {};
    
    \node[circle, fill=FigBlue, inner sep=\vradius] at (2, 2.5) {};
    \node[circle, fill=FigBlue, inner sep=\vradius] at (1.0, 3.1) {};
    \node[circle, fill=FigBlue, inner sep=\vradius] at (-0.5, 3.2) {};
    \node[circle, fill=FigBlue, inner sep=\vradius] at (-2.5, 2.0) {};
    \node[circle, fill=FigBlue, inner sep=\vradius] at (-3.3, 0.0) {};
    \node[circle, fill=FigBlue, inner sep=\vradius] at (-2.5, -2.0) {};
    \node[circle, fill=FigBlue, inner sep=\vradius] at (-1.0, -3.1) {};
    \node[circle, fill=FigBlue, inner sep=\vradius] at (1.8, -2.6) {};
    
    \begin{scope}
      \clip (3.2, 0.8) circle (\cradius);
      \fill[FigGreen] (3.2, 0.8) circle (\cradius);
    \end{scope}
    \begin{scope}
      \clip (3.2, 0.8) -- ++(-90:\cradius) arc (-90:90:\cradius) -- cycle;
      \fill[FigBlue] (3.2, 0.8) circle (\cradius);
    \end{scope}

    \begin{scope}
      \clip (3.3, 0.0) circle (\cradius);
      \fill[FigGreen] (3.3, 0.0) circle (\cradius);
    \end{scope}
    \begin{scope}
      \clip (3.3, 0.0) -- ++(-90:\cradius) arc (-90:90:\cradius) -- cycle;
      \fill[FigBlue] (3.3, 0.0) circle (\cradius);
    \end{scope}
    
    \begin{scope}
      \clip (3.1, -1.3) circle (\cradius);
      \fill[FigGreen] (3.1, -1.3) circle (\cradius);
    \end{scope}
    \begin{scope}
      \clip (3.1, -1.3) -- ++(-90:\cradius) arc (-90:90:\cradius) -- cycle;
      \fill[FigBlue] (3.1, -1.3) circle (\cradius);
    \end{scope}

    \end{tikzpicture}
    \caption{Construction of sets $U$ and $W$. 
    The \textcolor{FigBlue}{green} and \textcolor{FigGreen}{blue}-\textcolor{FigBlue}{green} vertices outside the set $B$ are candidates (depending on the snapshot $F_i$) for being $x^i_U$. Similarly, the \textcolor{FigGreen}{blue} and \textcolor{FigGreen}{blue}-\textcolor{FigBlue}{green} vertices outside the set $W_{i-1} \cap B$ are candidates for being $x^i_W$. 
    If $x^i_W$ is \textcolor{FigGreen}{blue}, i.e., belongs to $B$, then it will be added to the set of discovered vertices used by the temporal DFS to determine $x^{i+1}_W$.
   }
    \label{fig:U-W-sets}
\end{figure}
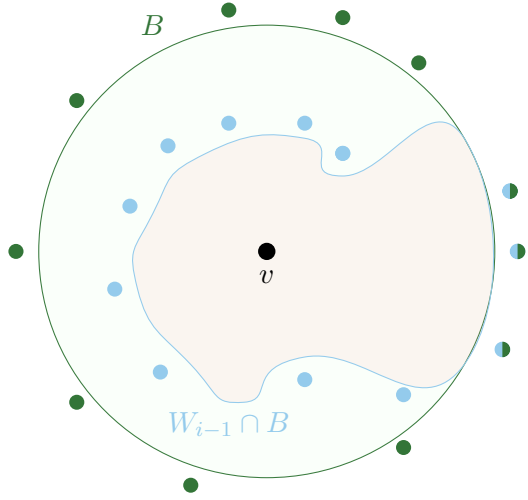
	
	Note that $U_x=0$ holds for every $x \in B$, and thus
	\begin{equation}\label{eq:sumUx}
		t = \sum_{x \in [n]} U_x = 
		\sum_{x \text{ is } (\tau,p)\text{-far from } v} U_x +
		\sum_{\substack{x \in [n] \setminus B \\ x \text{ is } (\tau,p)\text{-close} \text{ to } v}} U_x.
	\end{equation}
	Note that the first sum of the right-hand side of \cref{eq:sumUx} lower bounds the number of $(B,C)$-linking snapshots among the snapshots $F_1, F_2, \ldots, F_t$. Thus, our goal is to prove that 
	\begin{equation}\label{eq:far-goal}
		\sum_{x \text{ is } (\tau,p)\text{-far from } v} \bE[U_x] \geq k.
	\end{equation}

	Given \cref{eq:sumUx}, to achieve this goal, it is enough to show that
	\begin{equation}\label{eq:sum-Ux-close}
		\sum_{\substack{x \in [n] \setminus B \\ x \text{ is } (\tau,p)\text{-close} \text{ to } v}} \bE[U_x] \leq t-k = 21k.
	\end{equation}
	
	In the rest of the proof we establish \cref{eq:sum-Ux-close}.
	
	Since $x_U^i \not\in B$ and $W^{i-1} \cap B \subseteq B$ hold for every $i \in [t]$, it follows from \cref{lem:next-vertex} that $x_U^i \neq x_W^i$ if and only if $x_W^i \in B$. This implies $U_x \geq W_x$ for every $x \in [n] \setminus B$.
	Furthermore, note that $x_W^i \in B$ can happen in at most $|B|$ snapshots, and therefore  
	\begin{equation}\label{eq:Ux-Wx}
		\sum_{x \in [n] \setminus B} (U_x - W_x) \leq |B| \leq k.
	\end{equation}
	
	\begin{claim}\label{clm:boundnotinball}
		For every $x \in [n] \setminus B$, we have 
		$\bE[U_x] \leq 19 \cdot \bE[U_x -W_x]+2$.
	\end{claim}
	\begin{poc}
	 	Suppose, towards a contradiction, that
	 	$\bE[U_x] > 19 \cdot \bE[U_x -W_x]+2$ for some $x \in [n] \setminus B$.
	 	This implies $\bE[U_x] > 2$ as  $U_x -W_x \geq 0$. 
	 	
	 	Consider $W_x$ and write it as $W_x = U_x - (U_x-W_x)$. 
	 	Since  $U_x$ is binomially distributed, by \cref{lem:simplechb} we have 
	 	\begin{equation}\label{eq:Ux}
	 		\pr[U_x \leq \bE[U_x]/2] < e^{-(1/2)^2\bE[U_x]/2} < e^{-1/4} < 0.78.
	 	\end{equation}
	 	
	 	For $U_x-W_x$, we have 
	 	\begin{equation}\label{eq:p-Ux-Wx}
	 		\Pr{U_x-W_x \geq 9.5 \cdot \mathbb{E}[U_x-W_x]}\leq \frac{1}{9.5} < 0.106,
	 	\end{equation}
	 	by Markov's inequality.
	 	So with probability at least $1-0.78-0.106 = 0.114 > 1/9 = p$ we have 
	 	\[
	 		W_x = U_x - (U_x-W_x) \geq \frac{\bE[U_x]}{2}  - 9.5 \cdot \bE[U_x-W_x]  >  \frac{19 \cdot \bE[U_x - W_x] +2}{2} - 9.5 \cdot \bE[U_x-W_x] =  1,
	 	\] 
	 	where in the last inequality we used the assumption $\bE[U_x] > 19 \cdot \bE[U_x -W_x]+2$.
	 	
	 	Since $W$ (as a set) is a subset of $\OF_{t}^{\cG}(v)$, the above implies that $\pr[x \in \OF_{t}^{\cG}(v)] \geq \pr[x \in W] > p$, and so, by definition, $x$ belongs to $B=B_{t}^p(v)$. This is a contradiction since we assumed $x \in [n] \setminus B$.
 	\end{poc}

 	Now, using \cref{clm:boundnotinball}, \cref{eq:Ux-Wx}, and the bound on the number of $(\tau,p)$-close vertices, we obtain
 	\begin{align*}\label{eq:close}
 		\sum_{\substack{x \in [n] \setminus B \\ x \text{ is } (\tau,p)\text{-close} \text{ to } v}} \bE[U_x] &\leq
 		\sum_{\substack{x \in [n] \setminus B \\ x \text{ is } (\tau,p)\text{-close} \text{ to } v}}  (19 \cdot \bE[U_x -W_x]+2) \\
 		& \leq 19  \sum_{x \in [n] \setminus B} \bE[U_x -W_x] + 2k
 		\leq 21k,
 	\end{align*}
 	as desired.
\end{proof}

\begin{corollary}\label{cor:many-linking}
	Let $p \in (0,1/9]$, $k \in \bN$, $t := 22k$, and let $\tau \geq t$.
	Let $v \in [n]$ be a vertex that has at most $k$ many $(\tau,p)$-close vertices.
	Let $B := B_t^p(v)$ and $C$ be the set of vertices that are $(\tau,p)$-far from $v$.
	Then in a sequence of $r$ snapshots with probability at least $1-e^{-r/4400}$ there are at least $\frac{9r}{220}$ many $(B,C)$-linking snapshots.
\end{corollary}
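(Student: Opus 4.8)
The plan is to derive \cref{cor:many-linking} directly from \cref{lem:close-ball} together with the standard lower‑tail Chernoff bound \cref{lem:simplechb}. Fix $p \in (0,1/9]$, $k \in \bN$, $t := 22k$, $\tau \geq t$, and a vertex $v$ with at most $k$ many $(\tau,p)$-close vertices; set $B := B_t^p(v)$ and let $C$ be the set of $(\tau,p)$-far vertices. The key observation is that $B$ and $C$ are determined by $(\cT,\mu)$ alone and are in particular \emph{not} random given $(\cT,\mu)$. Hence, in a sequence of $r$ snapshots drawn i.i.d.\ from $\mu$, the events ``the $i$-th snapshot is $(B,C)$-linking'' are i.i.d.\ Bernoulli events with common success probability $q := \Pru{F \sim \mu}{F \text{ is } (B,C)\text{-linking}}$.

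By \cref{lem:close-ball}, $q \geq 1/22$. Let $Z$ be the number of $(B,C)$-linking snapshots among the $r$ samples; then $Z \sim \operatorname{Bin}(r,q)$, and since $q \geq 1/22$, $Z$ stochastically dominates $Y \sim \operatorname{Bin}(r,1/22)$ (couple both via the same $r$ i.i.d.\ uniforms), so $\pr[Z \geq x] \geq \pr[Y \geq x]$ for every $x$. In particular,
\[
    \pr[Z \geq 9r/220] \;\geq\; \pr[Y \geq 9r/220] \;=\; 1 - \pr[Y < 9r/220] \;\geq\; 1 - \pr[Y \leq 9r/220].
\]

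To finish, note that $\mathbb{E}[Y] = r/22$ and $9r/220 = (1-1/10)\cdot \mathbb{E}[Y]$. Applying \cref{lem:simplechb} to $Y$ with $\delta = 1/10$ yields
\[
    \pr[Y \leq 9r/220] \;\leq\; e^{-(1/10)^2 (r/22)/2} \;=\; e^{-r/4400},
\]
and combining with the previous display gives $\pr[Z \geq 9r/220] \geq 1 - e^{-r/4400}$, which is the claim.

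There is no real obstacle here: all the substantial work is contained in \cref{lem:close-ball}, which we are free to invoke. The only points that need a moment's care are (i) recording that $B$ and $C$ are deterministic functions of $(\cT,\mu)$, so the per‑snapshot linking indicators are genuinely i.i.d.\ and $Z$ is exactly binomial; and (ii) verifying the arithmetic, namely $9r/220 = (1-1/10)(r/22)$ and $(1/10)^2(r/22)/2 = r/4400$, so that the exponent matches the stated bound precisely (the fact that $9r/220$ need not be an integer is harmless, since $Z$ is integer‑valued and the Chernoff bound applies to arbitrary real thresholds).
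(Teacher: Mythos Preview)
Your proposal is correct and takes essentially the same approach as the paper: invoke \cref{lem:close-ball} to get that the number of $(B,C)$-linking snapshots stochastically dominates $Y\sim\operatorname{Bin}(r,1/22)$, then apply \cref{lem:simplechb} with $\delta=1/10$. The paper's proof is terser, but the content is identical.
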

\begin{proof}
	From \cref{lem:close-ball}, the number of $(B,C)$-linking snapshots stochastically dominates a random variable $Y \sim \operatorname{Bin}(r,1/22)$. Therefore, the result follows by applying \cref{lem:simplechb} to $Y$ with $\delta=1/10$.
\end{proof}

\subsection{Proof of \texorpdfstring{\cref{th:sqrt-close-vertices}}{Theorem \ref{th:sqrt-close-vertices}}}\label{sec:sqrt-close-vertices-proof}

With these tools at hand, we are now ready to prove our main \cref{th:sqrt-close-vertices}, which we restate for convenience.

\sqrtclosevertices*
\begin{proof}
	Let \begin{equation}\label{eq:consts}  
		t := 22\sqrt{n} ,\quad    p:=1/9 ,\quad  \tau_1,\tau_2:=100 \sqrt{n} \quad \text{and}\quad T:=3(\tau_1 +\tau_2) +4t. 
	\end{equation} 
	Note, for any $n$ large enough $t$ satisfies $20 \sqrt{n} \leq  t\leq n/2$.
	
	Observe that $T \leq 700 \sqrt{n}$ by \eqref{eq:consts} and therefore any vertex that is $(T,p)$-close to a vertex $v$ is also $(700\sqrt{n},1/9)$-close to $v$. Thus, to prove the theorem we will show that for every vertex $v \in [n]$ there are at least $\sqrt{n}$ vertices that are $(T,p)$-close to $v$.
	Suppose, towards a contradiction, there exists a vertex $w \in [n]$ with less than $\sqrt{n}$ vertices that are $(T, p)$-close to $w$.
	
	Let $\cG \sim (\cT, \mu)$, $B := B_t^p(w)$, and $C$ be the set of vertices that are $(T,p)$-far from $w$. Let $\zeta =\zeta(\cG)$ be the number of $(B,C)$-linking snapshots among the first $\tau_1$ snapshots of $\cG$,
	and let $1 \leq t_1 < t_2 < \ldots < t_{\zeta} \leq \tau_1$ be the time steps of these snapshots.
    Denote by $u_1, u_2, \ldots, u_{\zeta}$ the corresponding $(T,p)$-far vertices, i.e., for $i \in[\zeta]$, vertex $u_i$ is $(T,p)$-far from $w$ and the snapshot of $\cG$ at time step $t_i$ contains an edge that connects $u_i$ with a vertex in $B^{p}_t(w)$. 
    Note that for some time steps $t_i$, multiple such $(T,p)$-far vertices may exist, in which case we simply choose one to represent $u_i$.
	
	For $i \in [\zeta]$ define $\cG_i = \cG_{[t_i+1, \tau_1 + \tau_2]}$ and denote by $T_i := \tau_1 +\tau_2-t_i \geq \tau_2$ the lifetime of $\cG_i$. Let $X_i := \OF_{T_i}^{\cG_i}(u_i) \setminus \Flood_{T_i}^{\cG_i}(B_t^p(w))$. Since $u_i$ is $(T,p)$-far from $w$ and $T \geq 3T_i + 4t$, vertex $u_i$ is also $(3T_i +4t, p)$-far from $w$, and so by \cref{cor:F-S-new} we have 
	\begin{equation} \label{eq:lowerinX_i} 
		\bE[|X_i|] \geq \frac{T_i}{300} \geq \frac{\tau_2}{300}.
	\end{equation}
	
	Observe that for any $1 \leq i < j \leq \zeta$, since $t_i < t_j$ and the snapshot at time step $t_j$ contains an edge between a vertex in $B_t^p(w)$ and $u_j$, we have that $\OF_{T_j}^{\cG_j}(u_j) \subseteq \Flood_{T_i}^{\cG_i}(B_t^p(w))$. In particular, this implies
	\begin{equation}\label{eq:R-inc-D}
		\Flood_{\tau_1 +\tau_2}^{\cG}(B_t^p(w)) 
		\supseteq 
		\OF_{T_1}^{\cG_1}(u_1) \cup \Flood_{T_1}^{\cG_1}(B_t^p(w)) 
		\supseteq 
		\bigcup_{i=1}^{\zeta} \OF_{T_i}^{\cG_i}(u_i) 
		\supseteq 
		\bigcup_{i=1}^{\zeta} X_i.
	\end{equation}
	
	Furthermore, we claim that for any $1 \leq i < j \leq \zeta$ the sets $X_i$ and $X_j$ are disjoint. Indeed, suppose this is not the case, i.e., $X_i \cap X_j \neq \emptyset$. This assumption together with $X_j \subseteq \OF_{T_j}^{\cG_j}(u_j) \subseteq \Flood_{T_i}^{\cG_i}(B_t^p(w))$ imply that $X_i \cap \Flood_{T_i}^{\cG_i}(B_t^p(w)) \neq \emptyset$, which contradicts the definition of $X_i$. 
	
	Thus, from \cref{eq:R-inc-D} and the disjointness of $X_i$, $i \in [\zeta]$, we have 
	\[
		\bE\left[ \Flood_{\tau_1 +\tau_2}^{\cG}(B_t^p(w)) \right] \geq \bE\left[\sum_{i=1}^{\zeta}  |X_i|\right]. 
	\]
	Set $\ell =\lfloor \frac{9\tau_1}{220}\rfloor$ and then observe that,  since $|X_i|\leq n$, 
	\[
		\bE\left[\sum_{i=1}^{\zeta}  |X_i|\right] \geq  \bE\left[\left(\sum_{i=1}^{\zeta}  |X_i|\right)\mathbf{1}_{\{\zeta\geq \ell \}} \right] \geq  \sum_{i=1}^{\ell }\bE[  |X_i|\cdot \mathbf{1}_{\{\zeta\geq \ell\}} ] \geq \sum_{i=1}^{\ell}\bE[  |X_i| ] - \ell \cdot n\pr[\zeta<\ell].
	\] 
	Note that $\pr[\zeta<  \ell] \leq  e^{-\tau_1/4400}$ by  \cref{cor:many-linking}, and thus by \eqref{eq:lowerinX_i},   
	\begin{align*}	
		\bE\left[ \Flood_{2T}^{\cG}(B_t^p(w)) \right] 
		&\geq  
		\ell\cdot  \frac{\tau_2}{300} - \ell \cdot n e^{-100\sqrt{n}/4400} \\
		&\geq 
		\Big(  \frac{900\sqrt{n}}{220}-1 \Big) \cdot  \frac{\sqrt{n}}{3} - n^2 e^{-\sqrt{n}/44}\\
		&= 
		\frac{300}{220}n - O(\sqrt{n})    > n, 
	\end{align*} 
	since $n$ is large, a contradiction. 
\end{proof}

\section{Exploration in \texorpdfstring{$O(m)$}{linear in m} time}
\label{sec:linear-in-m}

A result of Erlebach, Hoffman \& Kammer \cite[Theorem 6.1]{ErlebachOnTemporal2021} implies that for any connected graph $G$ and distribution $\mu$ on the spanning trees of $G$ with $m$ edges, and  $\cG \sim (\cT, \mu)$,  there exists an online algorithm $A$ such that, for any constant $c>0$, there exists a constant $C$ such that  
\[
    \Pr{\Dexp_A(\cG) \leq C\cdot m\log n} \geq 1 - n^{-c}.
\] 
The aim of this section is to prove \cref{thm:orderm}, which shows that knowing future snapshots allows for improvements in both exploration time and success probability

\orderm*
\begin{proof}
    For each edge $e$ of $G$ define the weight of $e$ as $w_e=1/p_e$, where $p_e$ is the probability that $e$ appears in any given time step, that is \[ p_e := \sum_{T \in \cT} \mu(T)\cdot \mathds{1}(e\in E(T)).\] 
    Let $T_{\mathsf{min}}$ be any minimum-weight spanning tree of $G$ (see \cref{fig:subfig-a}). By \cite[Theorems 5.1 \& 6.1]{ErlebachOnTemporal2021} (and the proofs thereof), we have	\begin{equation}\label{eq:assumpt} w(T_{\mathsf{min}}) := \sum_{e \in E(T_{\mathsf{min}})} w_e\leq 4m. \end{equation}  We now  set two `thresholds' used throughout the proof:   
    
    \[
        \alpha:= m^{1/6} \qquad  
        \text{and} \qquad\beta:=  m^{1/4}.
    \]
    Using the first one of these we define the following edge subset of $T_{\mathsf{min}}$
    \[
        L :=\left\{e\in E(T_{\mathsf{min}}) : w_e \leq  
        \tfrac{4m}{\alpha}   \right\},
    \]
    and denote by $F$ be the spanning forest of $T_{\mathsf{min}}$ induced by the edges in $L$  (see \cref{fig:subfig-b}). 
    We call the edges in $L$ the \emph{backbone edges}, and forest $F$ the \emph{backbone forest}.
    The backbone edges have sufficiently low weights, and therefore each connected component of $F$ can be traversed efficiently. Our main goal is to establish efficient connections between  the components of $F$.
    
    From \eqref{eq:assumpt} and the definition of $L$, we have 
    \begin{equation}\label{eq:edgesinL}
        |E(T_{\mathsf{min}})\setminus L|\leq \alpha, 
    \end{equation} 
    and thus the backbone forest $F$ has at most $\alpha+1$ connected components.

For a graph with non-negative edge weights, we define the \emph{distance} between two vertices to be the weight of the lightest path between them, and the \emph{diameter} to be the maximum distance between any pair of vertices.  

First, we prove that we can remove a small number of backbone edges in order to break the connected components of $F$ up into components of diameter at most $m/\beta$ and edge weights at most $4m/\alpha^2$. We call these \emph{fast components} and will use them to establish fast connections between the components of $F$.

\begin{claim}[Partitioning into fast components] \label{clm:notmancfastcomps} 
    There exists a spanning forest of $F$ consisting of at most $\alpha^2 +1 + 12\beta$ trees (fast components), each of which has diameter at most $m/\beta$ and consists of edges with weights at most $4m/\alpha^2$. 
\end{claim}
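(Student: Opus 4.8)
The plan is to obtain the fast components by deleting edges from the backbone forest $F$ in two stages: first removing its few heavy edges, then chopping each remaining tree to control its diameter. In the first stage, delete from $F$ every edge of weight more than $4m/\alpha^2$. Since $F$ is a subforest of $T_{\mathsf{min}}$, by \eqref{eq:assumpt} its total edge weight is at most $w(T_{\mathsf{min}}) \le 4m$, so at most $4m / (4m/\alpha^2) = \alpha^2$ edges are deleted here; call the resulting forest $F'$. Every edge of $F'$ then has weight at most $4m/\alpha^2$, and, as $F$ has at most $\alpha+1$ components by \eqref{eq:edgesinL}, $F'$ has at most $\alpha^2 + \alpha + 1$ components.

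In the second stage, put $\theta := m/(2\beta)$ and partition each tree $T$ of $F'$, by deleting edges, into subtrees of diameter at most $2\theta = m/\beta$, using at most $w(T)/\theta$ deletions per tree. Summing over the trees of $F'$ this costs at most $\sum_T w(T)/\theta \le 4m/\theta = 8\beta$ further deletions, so the final number of components is at most $\alpha^2 + \alpha + 1 + 8\beta \le \alpha^2 + 1 + 12\beta$, where we used $\alpha = m^{1/6} \le m^{1/4} = \beta$. Since this stage only removes edges, every resulting component still has all its edge weights at most $4m/\alpha^2$, and by construction diameter at most $m/\beta$; these are the desired fast components.

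It remains to describe the chopping procedure for a tree $T$ of $F'$. Root $T$ arbitrarily and process its vertices in post-order. Once all children of a vertex $v$ have been processed, attach to $v$ the value $h(v)$ equal to the maximum weighted distance from $v$ to a leaf of the subtree currently rooted at $v$ (with $h(v)=0$ for a leaf); when a non-leaf $v$ with children $c_1,\dots,c_k$ is processed, delete the edge $vc_i$ for every $i$ with $h(c_i)+w_{vc_i}\ge\theta$ and then set $h(v):=\max\{h(c_i)+w_{vc_i}: vc_i \text{ not deleted}\}$ (the empty maximum being $0$). A trivial induction gives $h(v)<\theta$ for all $v$, since every retained child contributes strictly less than $\theta$. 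Hence each output component is the subtree finally rooted at some vertex $x$ (the root of $T$, or the child-endpoint of a deleted edge), and every vertex of that subtree lies on an $x$-to-leaf path and so is within weighted distance $h(x)<\theta$ of $x$; the component therefore has radius $<\theta$, hence diameter $<2\theta$. For the number of deletions: each deleted edge $vc_i$ has $h(c_i)+w_{vc_i}\ge\theta$, so $vc_i$ together with the subtree finally rooted at $c_i$ carries total weight at least $\theta$; these edge sets are pairwise disjoint, because the subtrees rooted at the various $c_i$ are vertex-disjoint and each connecting edge $vc_i$ has its upper endpoint $v$ outside all of them. Thus at most $w(T)/\theta$ edges are deleted in $T$.

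The only delicate point is the second stage: one must verify that the post-order procedure bounds the true diameter (equivalently radius) of each output component rather than merely its downward height, and that the weight-charging for the number of cuts is genuinely over disjoint edge sets. The remaining ingredients — counting the heavy edges in the first stage and the arithmetic $\alpha^2+\alpha+1+8\beta\le\alpha^2+1+12\beta$ — are routine volume estimates resting on $w(T_{\mathsf{min}})\le 4m$.
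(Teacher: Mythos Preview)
Your proof is correct. The overall two-stage plan (first remove edges heavier than $4m/\alpha^2$, then chop to bound diameter) matches the paper exactly, but your second stage differs: the paper iteratively finds a diameter-realising path in any too-wide component and deletes a near-median edge, so that each cut leaves two subpaths of weight at least $m/(2\beta)-4m/\alpha^2\ge m/(3\beta)$, charging $m/(3\beta)$ of total weight per resulting piece to bound the cuts by $12\beta$. Your post-order height-based chopping instead charges to each deleted edge the disjoint edge set consisting of that edge together with the final subtree below it, of weight at least $\theta=m/(2\beta)$, yielding the tighter $8\beta$ cuts; you then absorb the extra $\alpha$ from your looser first-stage count ($\alpha^2+\alpha+1$ versus the paper's $\alpha^2+1$, which follows directly since $F'$ is $T_{\mathsf{min}}$ minus at most $\alpha^2$ edges) into the slack $12\beta-8\beta$. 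Your disjoint-charging argument is arguably cleaner, and the radius bound via $h(x)<\theta$ is the standard way to control diameter in such decompositions; the paper's path-splitting is more elementary to state but its counting needs the observation that each final component created by a split retains weight at least $m/(3\beta)$.
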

\begin{poc}
Let $L':=\left\{e\in E(T_{\mathsf{min}}) : w_e \leq  \tfrac{4m}{\alpha^2} \right\}$ and let $F'$ be the spanning forest of $T_{\mathsf{min}}$ induced by the edges in $L'$. Observe that $L' \subseteq L$ and thus $F'$ is a spanning subgraph of $F$.
Furthermore, analogously to \cref{eq:edgesinL}, we have $|E(T_{\mathsf{min}})\setminus L'|\leq \alpha^2$, and thus the forest $F'$ has at most $\alpha^2+1$ components.

 We can further partition $F'$ so that every connected component has diameter at most $m/\beta$ by iteratively performing the following edge removal operation. Pick a connected component of $F'$ with diameter strictly larger than $m/\beta$, find a path with weight strictly larger than $m/\beta$, and remove an edge $e$ of this path whose removal breaks the path into two subpaths with weights as equal as possible. 

Given any connected component of $F'$ with diameter strictly larger than $m/\beta$ we can find a path of weight equal to its diameter. We then find an edge $e$ of this path whose removal breaks the path into two subpaths with weights as equal as possible. 

Observe that each subpath has weight at least $m/(2\beta) - 4m/\alpha^2 \geq m/(3\beta)$, since each edge in $L'$ has weight at most $4m/\alpha^2$ and $m$ is large. We continue doing this until all components have diameter at most $m/\beta$. Now, we started with at most $\alpha^2+1$ components of $F'$ and each time we removed an edge, we broke a single component into two components with diameter, and thus also total weight, at least $m/(3\beta)$. Since the total weight of all edges is at most $4m$ by \eqref{eq:assumpt}, the total number of edges removed is at most $4m/(m/(3\beta)) = 12 \beta.$ Since each edge removed adds one component and every edge in $F'$ has weight at most $4m/\alpha^2$, the claim follows. 
\end{poc}

Denote by $F_{\mathsf{fast}}$ a spanning forest of $F$ with at most  $\alpha^2 +1 + 12\beta$ trees, each of which has diameter at most $m/\beta$ and edge weights at most $4m/\alpha^2$ (see \cref{fig:subfig-c}). This exists by \cref{clm:notmancfastcomps}. 

Next, we show that one can travel swiftly within a fast component. 
For this, for vertices $x,y \in V$, and $i\geq 1$, denote by $\tau_{i}(x,y)$ the minimum number of time steps required to reach vertex $y$ starting from vertex $x$ at time step $i$, i.e.,
\[
    \tau_{i}(x,y)=
    \min\big\{t \geq i \,\mid \, y\text{ is reachable from }x\text{ in }\mathcal{G}[i,t]  
    \big\}.
\]
Observe that, if we fix $x,y \in V$, then for any time steps $i,j \in \mathbb{N}$ the random variables $\tau_{i}(x,y)$ and $\tau_{j}(x,y)$ have the same distribution. Thus, unless the specific time step is important, we will drop the dependence on the starting time step and denote $\tau_i(x,y)$ by $\tau(x,y)$.    
 
\begin{figure}[htbp]
    \begin{tabular}{ccc}
        \begin{subfigure}{0.32\textwidth}
            \begin{tikzpicture}[scale=0.45, every node/.style={circle, draw, fill=white, inner sep=1pt}, thick]
                \foreach \i/\x/\y in {
                  1/0/0, 2/-1/0, 3/-1/-1, 4/0/-1, 5/0/-2, 6/1/-2, 7/2/-3, 8/3/-1, 9/4/0, 10/5/0, 11/6/0, 12/6/1, 13/6/-1, 14/7/-1, 15/6/-2, 16/5/-3, 17/5/-5, 18/6/-4,
                  19/7/-3, 20/7/-4, 21/8/-4, 22/9/-4,
                  23/3/-9, 24/2/-8, 25/1/-7, 26/2/-4, 27/1/-5, 28/2/-5, 29/3/-4}
                  \node (v\i) at (\x,\y) {};
            
                \foreach \a/\b in {
                    1/2, 2/3, 3/4, 4/5, 5/6, 6/7, 7/8, 8/9, 9/10, 10/11, 11/12, 11/13, 13/14, 13/15, 15/16, 16/17, 17/18, 18/19, 18/20, 20/21, 21/22, 7/26, 26/27, 26/28,
                    26/29, 16/24, 23/24, 24/25,
                    1/6, 3/27, 4/9, 8/16, 29/11,
                    12/14, 16/22, 9/22, 7/25, 19/22,
                    23/16, 29/17, 1/12}
                \draw[thin, color=gray] (v\a) -- (v\b);
            
                \foreach \a/\b in {
                    1/2, 2/3, 3/4, 4/5, 5/6, 6/7, 7/8, 8/9, 9/10, 10/11, 11/12, 11/13, 13/14, 13/15, 15/16, 16/17, 17/18, 18/19, 18/20, 20/21, 21/22, 7/26, 26/27, 26/28, 26/29, 16/24, 23/24, 24/25}
                  \draw[ultra thick, color=black!40] (v\a) -- (v\b);
            \end{tikzpicture}
            \caption{(a)}
            \label{fig:subfig-a}
        \end{subfigure} &
        \begin{subfigure}{0.32\textwidth}
            \begin{tikzpicture}[scale=0.45, every node/.style={circle, draw, fill=white, inner sep=1pt}, thick]
                \foreach \i/\x/\y in {
                  1/0/0, 2/-1/0, 3/-1/-1, 4/0/-1, 5/0/-2, 6/1/-2, 7/2/-3, 8/3/-1, 9/4/0, 10/5/0, 11/6/0, 12/6/1, 13/6/-1, 14/7/-1, 15/6/-2, 16/5/-3, 17/5/-5, 18/6/-4,
                  19/7/-3, 20/7/-4, 21/8/-4, 22/9/-4,
                  23/3/-9, 24/2/-8, 25/1/-7, 26/2/-4, 27/1/-5, 28/2/-5, 29/3/-4}
                  \node (v\i) at (\x,\y) {};
            
                \foreach \a/\b in {
                    1/2, 2/3, 3/4, 4/5, 5/6, 6/7, 7/8, 8/9, 9/10, 10/11, 11/12, 11/13, 13/14, 13/15, 15/16, 16/17, 17/18, 18/19, 18/20, 20/21, 21/22, 7/26, 26/27, 26/28, 26/29, 16/24, 23/24, 24/25}
                  \draw[ultra thick, color=black!10] (v\a) -- (v\b);

                \foreach \a/\b in {
                    1/2, 2/3, 3/4, 4/5, 6/7, 7/8, 8/9, 10/11, 11/12, 11/13, 13/14, 13/15, 15/16, 17/18, 18/19, 18/20, 20/21, 21/22, 7/26, 26/27, 26/28, 26/29, 23/24, 24/25}
                  \draw[ultra thick, color=blue!40] (v\a) -- (v\b);
            \end{tikzpicture}
            \caption{(b)}
            \label{fig:subfig-b}
        \end{subfigure} &
        \begin{subfigure}{0.32\textwidth}
            \begin{tikzpicture}[scale=0.45, every node/.style={circle, draw, fill=white, inner sep=1pt}, thick]
                \foreach \i/\x/\y in {
                  1/0/0, 2/-1/0, 3/-1/-1, 4/0/-1, 5/0/-2, 6/1/-2, 7/2/-3, 8/3/-1, 9/4/0, 10/5/0, 11/6/0, 12/6/1, 13/6/-1, 14/7/-1, 15/6/-2, 16/5/-3, 17/5/-5, 18/6/-4,
                  19/7/-3, 20/7/-4, 21/8/-4, 22/9/-4,
                  23/3/-9, 24/2/-8, 25/1/-7, 26/2/-4, 27/1/-5, 28/2/-5, 29/3/-4}
                  \node (v\i) at (\x,\y) {};

                \foreach \a/\b in {
                    1/2, 2/3, 3/4, 4/5, 6/7, 7/8, 8/9, 10/11, 11/12, 11/13, 13/14, 13/15, 15/16, 17/18, 18/19, 18/20, 20/21, 21/22, 7/26, 26/27, 26/28, 26/29, 23/24, 24/25}
                  \draw[ultra thick, color=blue!40] (v\a) -- (v\b);

                \foreach \a/\b in {
                    1/2, 2/3, 4/5, 6/7, 7/8, 8/9, 10/11, 11/12, 13/14, 13/15, 15/16, 17/18, 18/19, 20/21, 21/22, 26/27, 26/28, 26/29, 23/24, 24/25}
                  \draw[ultra thick, color=orange!80] (v\a) -- (v\b);

                \foreach \a/\b in {
                    1/9, 25/28, 8/17, 15/19}
                  \draw[double, line width=0.3pt, double distance=1pt, color=green!80] (v\a) -- (v\b);
            \end{tikzpicture}
            \caption{(c)}
            \label{fig:subfig-c}
        \end{subfigure}
    \end{tabular}
    \caption{(a) The \textcolor{gray}{\textbf{bold}} edges form the spanning tree $T_{\mathsf{min}}$; (b) The \textcolor{blue!40}{\textbf{purple}} edges form the backbone forest $F$; (c) The \textcolor{orange!80}{\textbf{orange}} edges form the subgraph $F_{\mathsf{fast}}$ consisting of the fast components; and the \textcolor{green!80}{green} double edges show the meta-edges.}
    \label{fig:Om-exploration}
\end{figure}
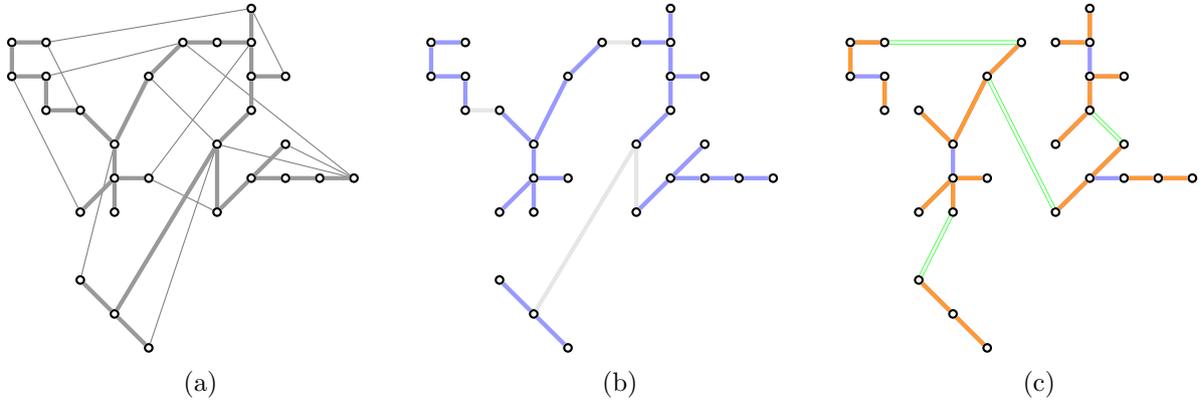

\begin{claim}[Within fast components]\label{clm:fastmovement}
    For any fast component $Q$, and any $q_s,q_t \in V(Q)$, we have
    $
        \pr\Big[\tau(q_s,q_t) >  \frac{5m}{\alpha}   \Big] \leq  e^{-\alpha}.
    $
\end{claim}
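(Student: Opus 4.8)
The plan is to bound $\tau(q_s,q_t)$ by the running time of the naive strategy that walks along the unique $q_s$–$q_t$ path of the tree $Q$, crossing each edge of the path at the first opportunity. Write $P=(e_1,\dots,e_k)$ for this path (if $q_s=q_t$ the claim is trivial). Two facts about $P$ drive everything: since $Q$ is a tree, the total weight $\sum_{j=1}^k w_{e_j}$ equals the weighted distance $\mathrm{dist}_Q(q_s,q_t)$, which is at most $\mathrm{diam}(Q)\le m/\beta$; and since every $e_j$ lies in the fast component $Q$, we have $w_{e_j}\le 4m/\alpha^2$, equivalently $p_{e_j}\ge \alpha^2/(4m)$, for every $j$.

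First I would make the walking strategy precise and identify its running time. Starting at $q_s$ at time step $i$, the explorer waits for the first snapshot (from time $i$ on) that contains $e_1$ and crosses it, then waits for the first strictly later snapshot containing $e_2$, and so on. The waiting time for $e_j$ is $\xi_j\sim\geo{p_{e_j}}$, and because these waiting times are determined by pairwise disjoint windows of the i.i.d.\ snapshot sequence, $\xi_1,\dots,\xi_k$ are mutually independent. Since $\tau$ is a minimum over all strategies, this gives the pathwise bound $\tau_i(q_s,q_t)\le (i-1)+\sum_{j=1}^k\xi_j$; taking $i=1$ (for general $i$ the distribution of $\tau_i(q_s,q_t)-(i-1)$ is the same), it therefore suffices to bound $\Pr{X\ge 5m/\alpha}$ for $X:=\sum_{j=1}^k\xi_j$, a sum of independent geometrics with $\mu:=\mathbb{E}[X]=\sum_{j=1}^k 1/p_{e_j}=\mathrm{dist}_Q(q_s,q_t)\le m/\beta$ and minimum success probability $p_*:=\min_j p_{e_j}\ge \alpha^2/(4m)$.

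I would then invoke the Janson-type tail bound \cref{lem:jansontail} with these $p_*$ and $\mu$ and with $\lambda:=(5m/\alpha)/\mu$. Since $\mu\le m/\beta$ we get $\lambda\ge 5\beta/\alpha=5m^{1/12}\ge 1$, so the bound applies and gives $\Pr{X\ge 5m/\alpha}=\Pr{X\ge\lambda\mu}\le\exp\!\big(-p_*\mu(\lambda-1-\log\lambda)\big)$. The key estimate is $p_*\mu\lambda=p_*\cdot(5m/\alpha)\ge(\alpha^2/4m)(5m/\alpha)=5\alpha/4$, and since $\lambda\ge 5m^{1/12}\to\infty$, for $m$ large enough $\lambda-1-\log\lambda\ge\tfrac45\lambda$, whence the exponent is at least $\tfrac45\cdot\tfrac{5\alpha}{4}=\alpha$. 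Hence $\Pr{\tau(q_s,q_t)>5m/\alpha}\le\Pr{X\ge 5m/\alpha}\le e^{-\alpha}$, which is the claim.

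I do not expect a genuine obstacle here: the only step requiring care is the stochastic‑domination argument — justifying that consecutive waiting times along $P$ are independent geometrics (they are measurable with respect to disjoint blocks of the independent snapshot stream) and that the tree‑path walk indeed upper bounds the minimum $\tau(q_s,q_t)$. After that, the argument is a direct substitution of $\alpha=m^{1/6}$ and $\beta=m^{1/4}$ into \cref{lem:jansontail}, using only that fast components have weighted diameter at most $m/\beta$ and edge weights at most $4m/\alpha^2$.
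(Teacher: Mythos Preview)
Your proposal is correct and follows essentially the same approach as the paper: bound $\tau(q_s,q_t)$ by the sum of independent geometric waiting times along the unique tree path in $Q$, use that the path has total weight at most $m/\beta$ and minimum edge probability at least $\alpha^2/(4m)$, and apply \cref{lem:jansontail} with $\lambda=5m/(\alpha\mu)\ge 5\beta/\alpha\to\infty$ so that $\lambda-1-\log\lambda\ge\tfrac{4}{5}\lambda$ and the exponent is at least $\alpha$. Your justification of the stochastic-domination step (independence of waiting times via disjoint snapshot windows) is in fact more explicit than the paper's own proof.
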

\begin{poc}
    By the definition of the fast components there exists a path $P$ in $Q$ between $q_s$ and $q_t$, such that $P$ has weight at most $m/\beta$ and each edge in this path has weight at most $4m/\alpha^2$. Thus $\tau(q_s,q_t)$ is stochastically dominated by $X= \sum_{e\in P } X_e$, where $X_e \sim \geo{1/w_e}$ are independent, with $\mu:=\Ex{X}\leq m/\beta$. It also follows that  $p_*:= \alpha^2/4m$ is a lower bound on $1/w_e$. Hence, as $\tfrac{ 5m }{\alpha \cdot \mu} \geq  \frac{5\beta}{\alpha}=\omega(1)$, for any $q_s,q_t \in V(Q)$, \cref{lem:jansontail} yields
    \[
        \mathbb{P}\Big[\tau(q_s,q_t) >  \frac{5m}{\alpha}   \Big]
    	\leq \exp( - p_* \cdot \mu \cdot (\tfrac{ 5m }{\alpha \cdot \mu}  -1- \log \tfrac{ 5m }{\alpha \cdot \mu} )) \leq\exp( - \tfrac{\alpha^2}{4m} \cdot \mu \cdot  \tfrac{ 4m }{\alpha\cdot \mu}) = e^{-\alpha},
    \]
    as claimed.
\end{poc}

Now we show that one can move fast from one part to the other in any bipartition of the fast components.
 
 \begin{claim}[Between fast components]\label{clm:fastcrossing} 
    Let $(\cA,\cB)$ be a bipartition of the connected components of $F_{\mathsf{fast}}$ such that both $\cA$ and $\cB$ are non-empty.
    Then there exists a pair $Q \in \cA$, $Q' \in \cB$ such that 
    \[ 
        \mathbb{P}\bigg[\bigcap_{i\in [m^2]} \bigcap_{\substack{q_s\in V(Q) \\ q_t' \in V(Q')}}\Big\{\tau_{i}(q_s,q_t')\leq \frac{11m}{\alpha}\Big\}\bigg] \geq 1- 3n^6\cdot e^{-\alpha}.  
    \]   
 \end{claim}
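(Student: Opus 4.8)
Here is how I would approach \cref{clm:fastcrossing}.

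\textbf{Finding a good pair.} The plan is to single out, using a pigeonhole argument over $\mu$ alone, one pair of fast components between which cut-edges appear often, and then exhibit a clairvoyant walk exploiting it. The bipartition $(\cA,\cB)$ induces a bipartition of $[n]$ into the non-empty sets $V_\cA:=\bigcup_{R\in\cA}V(R)$ and $V_\cB:=\bigcup_{R\in\cB}V(R)$ (the fast components partition $[n]$ since $F_{\mathsf{fast}}$ is spanning). Every snapshot is a spanning tree of the connected graph $G$, hence contains at least one edge crossing this cut, and every such edge joins some $V(Q)$ with $Q\in\cA$ to some $V(Q')$ with $Q'\in\cB$. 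Writing $\pi_{Q,Q'}$ for the probability that $F\sim\mu$ contains an edge between $V(Q)$ and $V(Q')$, the union bound gives $\sum_{Q\in\cA,\,Q'\in\cB}\pi_{Q,Q'}\ge 1$. Since by \cref{clm:notmancfastcomps} there are at most $N:=\alpha^2+1+12\beta$ fast components, hence at most $N^2$ pairs, I would fix a pair $Q\in\cA$, $Q'\in\cB$ with $\pi:=\pi_{Q,Q'}\ge 1/N^2$ and show it works. Here $\alpha^2=m^{1/3}$ dominates $12\beta=12m^{1/4}$, so $N^2=(1+o(1))m^{2/3}$ and $\pi\cdot\tfrac m\alpha\ge(1-o(1))\tfrac{m^{5/6}}{N^2}=(1-o(1))\alpha$; this is precisely where it matters that there are only $O(m^{1/3})$ fast components, which is what the thresholds $\alpha=m^{1/6}$, $\beta=m^{1/4}$ were chosen to guarantee.

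\textbf{A clairvoyant three-phase walk.} Fix $i\in[m^2]$, $q_s\in V(Q)$, $q_t'\in V(Q')$; the goal is to bound $\Pr{\tau_i(q_s,q_t')>11m/\alpha}$. I would split $[i,\,i+11m/\alpha)$ into consecutive windows $W_1,W_2,W_3$ of lengths about $\tfrac92\cdot\tfrac m\alpha$, $2\cdot\tfrac m\alpha$, $\tfrac92\cdot\tfrac m\alpha$. The explorer looks ahead, finds the first $j^\star\in W_2$ at which a snapshot contains an edge $e=\{x,y\}$ with $x\in V(Q)$ and $y\in V(Q')$, and then: during $W_1$ travels inside $Q$ along the $q_s$--$x$ path; waits at $x$ until $j^\star$ and crosses $e$; waits at $y$ until $W_3$ begins; and during $W_3$ travels inside $Q'$ along the $y$--$q_t'$ path. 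Because each of these paths lies in a fast component, it has weight at most $m/\beta$ and all its edges have weight at most $4m/\alpha^2$, so traversing it at the first opportunity takes a time stochastically dominated by a sum of independent geometrics, exactly as in \cref{clm:fastmovement}. The one structural trick is to make $W_3$ a \emph{fixed} window (the explorer idles at $y$ until it starts), so the entry time into $Q'$ is deterministic given $i$ and we never have to union-bound over the value of $j^\star$.

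\textbf{Failure probability.} The walk fails only if: (i) no $V(Q)$--$V(Q')$ edge appears in $W_2$, which has probability at most $(1-\pi)^{|W_2|}\le\exp(-(2-o(1))\alpha)\le e^{-\alpha}$ for $m$ large by the bound on $\pi$; or (ii)/(iii) the travel inside $Q$ (resp. $Q'$) exceeds $|W_1|$ (resp. $|W_3|$), where the same Janson-type computation as in \cref{clm:fastmovement}, with the constant $5$ replaced by $\tfrac92$ (still admissible for large $m$), gives probability at most $e^{-\alpha}$ for any fixed source, target and starting time — and since $x$ (resp. $y$) is a priori unknown, I union-bound over its at most $n$ possible values to get $n\,e^{-\alpha}$ in each case. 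Thus $\Pr{\tau_i(q_s,q_t')>11m/\alpha}\le(2n+1)e^{-\alpha}$. Finally, since event (i) depends only on $i$, the ``slow inside $Q$'' event only on $(i,q_s)$, and the ``slow inside $Q'$'' event only on $(i,q_t')$, summing over all $i\in[m^2]$ and all $q_s\in V(Q)$, $q_t'\in V(Q')$ costs only $m^2e^{-\alpha}+m^2n^2e^{-\alpha}+m^2n^2e^{-\alpha}\le 3m^2n^2e^{-\alpha}\le 3n^6e^{-\alpha}$, using $m<n^2$, as required.

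The main obstacle I anticipate is the bookkeeping in the last paragraph: the window lengths must be chosen so that all three failure probabilities are simultaneously $\le e^{-\alpha}$, which is essentially tight — it forces $|W_2|\approx m/\alpha$, leaving exactly $\approx 10m/\alpha$ for the two in-component traversals — and this only closes because $N=O(m^{1/3})$; and the walk must be laid out (fixed $W_3$, each failure mode touching only two of $i,q_s,q_t'$) so the final union bound sees $m^2n^2\le n^6$ rather than an extra factor of $m$.
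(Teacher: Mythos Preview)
Your proposal is correct and follows essentially the same approach as the paper. The paper also uses pigeonhole over the pairs of fast components (bounding the number of pairs by $\binom{N}{2}\le\alpha^4$ rather than $N^2$, giving $\pi\ge1/\alpha^4$), the same three-phase clairvoyant walk (with window sizes $5/1/5$ instead of your $\tfrac92/2/\tfrac92$, still totalling $11$), and the same union bound yielding $m^2(e^{-\alpha}+2n^2e^{-\alpha})\le3n^6e^{-\alpha}$; you are actually more explicit than the paper about idling at $y$ until the fixed start of $W_3$ to avoid a union over crossing times.
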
 \begin{poc}

 Let $Q_1,Q_2$ be two connected components of $F_{\mathsf{fast}}$. Given a snapshot of $\cG$, we denote by $\Lambda_{Q_1,Q_2}$ the event that the there exists an edge between $Q_1$ and $Q_2$ in the snapshot. 
 
 We start by showing that there exist $Q \in \cA$, $Q' \in \cB$ such that $\pr[\Lambda_{Q,Q'}] \geq 1/\alpha^4$.
 
 For this, denote by $\Lambda$ the event that a given snapshot contains an edge between a connected component in $\cA$ and a connected component in $\cB$.
 Since every snapshot of $\cG$ is connected, we have $\pr[\Lambda] = 1$.
 Furthermore, notice that
 \[
    \sum_{Q_1 \in \cA, Q_2 \in \cB} \pr[\Lambda_{Q_1,Q_2}] \geq \pr[\Lambda].
 \]
 Therefore, since, by \cref{clm:notmancfastcomps}, there are at most $\binom{\alpha^2 + 1+ 12\beta}{2} \leq \alpha^4$ pairs of distinct fast components, we conclude that there exists a pair $Q \in \cA$, $Q' \in \cB$ with 
 \begin{equation}\label{eq:QQ}
     \pr[\Lambda_{Q,Q'}] \geq \pr[\Lambda]/\alpha^4 = 1/\alpha^4,    
 \end{equation}
  as required.

    Now, from \cref{eq:QQ}, it follows that  
    \begin{equation}\label{eq:probcross}
        \mathbb{P}\bigg[\bigcap_{\substack{q\in V(Q) \\ q' \in V(Q')}}\{\tau(q,q')> m/\alpha\}\bigg]\leq (1-1/\alpha^4)^{m/\alpha}\leq e^{-m/\alpha^5} = e^{-\alpha}.
    \end{equation}

Suppose we are at any vertex $q_s\in V(Q)$ at some time step $t_0$ and we want to go to a vertex $q_t'\in V(Q')$. Our strategy is to look\footnote{Our strategy requires us to look into the future, and as we discuss later in this section, this cannot be avoided if we want to achieve an $O(m)$ time exploration.} for a time-edge crossing from some vertex $q\in V(Q)$ to some vertex $q'\in V(Q')$ between the time steps $t_0+\frac{5m}{\alpha}$ and $t_0+\frac{6m}{\alpha}$. Then we get from $q_s$ to $q$ in at most $\frac{5m}{\alpha}$ steps using edges of $Q$, wait (if needed) until we cross the time-edge from $q$ to $q'$, and finally we get from $q'$ to $q_t'$ in at most $\frac{5m}{\alpha}$ steps using edges of $Q'$. Therefore, by the union bound over all $m^2$ time steps, all pairs of vertices in $Q$, and all pairs of vertices in $Q'$ we have 
\begin{align*}
\mathbb{P}\bigg[
    \bigcup_{i\in [m^2]} 
    \bigcup_{\substack{q_s\in V(Q) \\ q_t' \in V(Q')}} 
    \Big\{ \tau_{i}(q_s,q_t') > \frac{11m}{\alpha} \Big\}
\bigg]
&\leq 
m^2 \cdot \Bigg(
    \mathbb{P}\bigg[
        \bigcap_{\substack{q\in V(Q) \\ q' \in V(Q')}} 
        \Big\{ \tau(q,q') > \frac{m}{\alpha} \Big\}
    \bigg] \\
&\qquad + 
n^2 \cdot \bigg(
    \mathbb{P}\Big[ \tau(q_s,q) > \frac{5m}{\alpha} \Big] 
    + \mathbb{P}\Big[ \tau(q',q_t') > \frac{5m}{\alpha} \Big]
\bigg)
\Bigg).
\end{align*}

Consequently, using \eqref{eq:probcross}, \cref{clm:fastmovement}, and as $m^2\leq n^4$, we have 
\[ 
    \mathbb{P}\bigg[\bigcup_{i\in [m^2]} \bigcup_{\substack{q_s\in V(Q) \\ q_t' \in V(Q')}}\Big\{\tau_{i}(q_s,q_t')> \frac{11m}{\alpha}\Big\}\bigg] 
    \leq 
    n^4\cdot \Big(e^{-\alpha} + n^2 \cdot (e^{-\alpha} + e^{-\alpha}) \Big)
    = 
    3n^6\cdot e^{-\alpha}, 
\] 
which implies the claim.  
\end{poc}

\textbf{Building exploration schedule.} 
We now build an exploration schedule using an auxiliary tree $H$, which we construct using \cref{clm:fastcrossing}.  Tree $H$ consists of the backbone edges together with some new `meta-edges' of weight $\frac{11m}{\alpha}$ which connect the components of $F$ (see \cref{fig:subfig-c}). These meta-edges represent the paths between fast components found in \cref{clm:fastcrossing}, and are constructed as follows.

Let $C_0$ be an arbitrary connected component of $F$. Define $\cD_1 := \{ C_0 \}$, and let $\cD_2$ be the set of all connected components of $F$ except $C_0$.
The set $\cD_1$ contains all components of $F$ that we connected so far in a tree using meta-edges, and $\cD_2$ contains all the remaining components of $F$. Thus, when $\cD_2$ becomes empty, we obtained the sought tree $H$. 
Until $\cD_2 \neq \emptyset$ we proceed as follows. 
Since every fast component is a subgraph of some component of $F$, the partition $(\cD_1, \cD_2)$ of the components of $F$ induces a partition $(\cA, \cB)$ of the components of $F_{\mathsf{fast}}$. Applying \cref{clm:fastcrossing} to the partition $(\cA, \cB)$ we find a pair of fast components $Q,Q'$ such that 
\whp there exists a path from any vertex in $Q$ to any vertex in $Q'$ which can be traversed in time $\frac{11m}{\alpha}$ starting at any point $i \in [m^2]$.
Let $C' \in \cD_2$ be the component of $F$ such that $Q'$ is a subgraph of $C'$.
We remove $C'$ from $\cD_2$, add it to $\cD_1$, and add to $H$ a meta-edge of weight $ \frac{11m}{\alpha}$ going between an arbitrary vertex in $Q$ and an arbitrary vertex in $Q'$.

Now, to build an exploration schedule, we follow an Euler tour of $H$ to visit all vertices of the graph. In order to estimate the travel time of this schedule, we notice that each edge of $H$ is traversed at most twice and estimate the time needed to traverse the meta-edges and the backbone edges separately. 

\textbf{Traverse time of the meta-edges.}
By \eqref{eq:edgesinL} there are at most $\alpha$ meta-edges, and thus, by \cref{clm:fastcrossing} and the union bound,
the total time of traversing meta-edges in the Euler tour is at most $2\alpha \cdot \frac{11m}{\alpha} = 22m$ with probability 
at least 
\[
    1- 2\alpha \cdot 3n^6\cdot e^{-\alpha} \geq 1- 6n^7\cdot e^{-\alpha}
\]

\textbf{Traverse time of the backbone edges.}
By \eqref{eq:assumpt}, $F$ has weight at most $4m$. Furthermore, recall that each  edge of $F$ has weight at most $m/\alpha$.
These imply that the traverse time of the edges of $F$ within the Euler tour is stochastically dominated by a sum $X$ of independent geometric random variables with expectation $\mu\leq 8m$ and minimum probability $p_*$ bounded from below by $\tfrac{\alpha}{m}$.
Thus, by applying \cref{lem:jansontail}, as before 
\begin{equation*} 
    \Pr{X \geq 32 m   } = \Pr{X \geq \frac{32 m }{\mu} \cdot  \mu  } \leq e^{ - p_* \cdot \mu \cdot \left(\frac{32 m }{\mu}  -1- \log \frac{32 m }{\mu} \right)} \leq e^{ - \frac{\alpha}{m} \cdot \mu \cdot  \frac{m}{\mu} } = e^{-\alpha}.
\end{equation*} 

It follows that the time to explore $\mathcal{G}$ following the Euler tour of $H$ is at most $22m + 32m = 54m$ with probability at least $ 1- 6n^7\cdot e^{-\alpha} - e^{-\alpha} = 1-  e^{-\Omega(m^{1/6})}$.
\end{proof}

As we noted earlier, the construction of the $O(m)$ time exploration in the proof of \cref{thm:orderm} exploits the knowledge of the future snapshots.
We now show that one cannot hope to show a general $O(m)$ exploration time bound \whp~without knowing the schedule of the edges that appear in the future.  
 
 \begin{proposition}\label{prop:online-lower-bound}    
    For any even $n\geq 4$ and  $q:=q(n)\in (0,1)$ there exists a graph $G$ with $3n-2$ edges and a distribution $\mu$ on the spanning trees $\mathcal{T}$ of $G$ such that for any randomised online exploration algorithm $A$ of $\mathcal{G}\sim (\cT, \mu)$,  
 \[
    \mathbb{P}\left[ \Dexp_A(\cG) \geq \frac{n\ln(1/q)}{4}  \right] \geq  q.
 \]  
\end{proposition}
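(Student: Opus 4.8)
The plan is to build a ``needle in a haystack'' instance in which a single distinguished vertex $z$ is, in every snapshot, a pendant hanging off a \emph{uniformly random} gateway vertex; an online explorer then has no better option than to sit at one gateway and wait for the right snapshot, and a geometric tail bound yields the claim. Concretely, I would take gateway vertices $g_1,\dots,g_n$ and let $G_0$ be any connected graph on $\{g_1,\dots,g_n\}$ with exactly $2n-2$ edges — such a graph exists precisely because $n\ge 4$, since a connected graph on $n$ vertices may have any number of edges in $\{n-1,\dots,\binom n2\}$ and $n-1\le 2n-2\le\binom n2 \iff (n-1)(n-4)\ge 0$. Fix a spanning tree $\mathcal T_0$ of $G_0$, add one new vertex $z$ adjacent to every gateway (so $G$ has $n+1$ vertices and $(2n-2)+n=3n-2$ edges; the precise placement of the $2n-2$ ``padding'' edges inside $G_0$ is immaterial), and for $j\in[n]$ set $T_j:=\{z,g_j\}\cup E(\mathcal T_0)$. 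Each $T_j$ is a spanning tree of $G$ (it has $n$ edges, is connected, hence acyclic). I would let $\mu$ be the uniform distribution on $\{T_1,\dots,T_n\}$ (probability $0$ elsewhere) and fix the start vertex to be $g_1$. Then for $\cG\sim(\cT,\mu)$ the $t$-th snapshot is $T_{J_t}$ with $J_1,J_2,\dots$ i.i.d.\ uniform on $[n]$, and the \emph{only} neighbour of $z$ in that snapshot is $g_{J_t}$.

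The heart of the argument is the estimate that, for every $t\ge 1$ and every (possibly randomised) online algorithm $A$,
\[
  \Pr{A\text{ moves to }z\text{ for the first time at step }t}\ \le\ \tfrac1n .
\]
I would prove this by conditioning on $A$'s internal coins and on the first $t-1$ revealed snapshots $T_{J_1},\dots,T_{J_{t-1}}$: this fixes whether $z$ has already been visited and, if not, which vertex $v_t$ the explorer occupies at the start of step $t$, and all of it is independent of $J_t$. Since the sole $z$-neighbour in $T_{J_t}$ is $g_{J_t}$, a first visit to $z$ at step $t$ forces $v_t=g_{J_t}$, an event of conditional probability $\Pr{J_t=v_t}=1/n$. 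Writing $\tau_z$ for the step of the first visit to $z$, the estimate multiplies over steps,
\[
  \Pr{\tau_z>t}\ =\ \prod_{s=1}^{t}\bigl(1-\Pr{\tau_z=s\mid\tau_z>s-1}\bigr)\ \ge\ \bigl(1-\tfrac1n\bigr)^{t}.
\]

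To finish, since $z$ must be visited, $\Dexp_A(\cG)\ge\tau_z$. Taking $t^\ast:=\lfloor \tfrac{n\ln(1/q)}4\rfloor$, the event $\tau_z>t^\ast$ implies $\Dexp_A(\cG)\ge t^\ast+1>\tfrac{n\ln(1/q)}4$; using $\ln(1-\tfrac1n)\ge-\tfrac1{n-1}$ and $t^\ast\le \tfrac{n\ln(1/q)}4$,
\[
  \Pr{\Dexp_A(\cG)\ge\tfrac{n\ln(1/q)}4}\ \ge\ \Pr{\tau_z>t^\ast}\ \ge\ \bigl(1-\tfrac1n\bigr)^{n\ln(1/q)/4}\ \ge\ e^{-\frac{n}{4(n-1)}\ln(1/q)}\ =\ q^{\,n/(4(n-1))}\ \ge\ q,
\]
where the last inequality holds because $n/(4(n-1))\le 1$ for $n\ge 2$ and $q\in(0,1)$ (in fact one gets $q^{1/3}$ for $n\ge 4$, so the constant $4$ is comfortably safe).

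The one genuinely delicate step — the one I would check most carefully — is the key estimate: one must rule out that a clever online strategy (oscillating near several gateways, exploiting the structure of $G_0$, correlating shared randomness with future draws, and so on) can beat the per-step success probability $1/n$. Everything rests on two facts: the explorer occupies a single vertex at each decision moment, and the identity $J_t$ of $z$'s gateway in the current snapshot is a fresh uniform draw independent of the explorer's entire history — which is also precisely where being online (as opposed to clairvoyant, cf.\ \cref{lem:EHKwalk}) bites. The padding graph $G_0$ plays no role in this and exists only to realise exactly $3n-2$ edges; the minor off-by-one in passing from ``$z$ unvisited in the first $t^\ast$ snapshots'' to the stated inequality is absorbed by the floor in the definition of $t^\ast$.
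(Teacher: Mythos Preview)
Your per-step estimate $\Pr{\tau_z=t\mid\tau_z>t-1}\le 1/n$ and the geometric tail computation are both correct, and your construction hits the stated $3n-2$ edge count exactly (on $n+1$ vertices). The gap is the starting vertex. You ``fix the start vertex to be $g_1$'', but in the paper's model the algorithm chooses its own initial position (this is made explicit in the proof of \cref{prop:online-stars}, and the present proof is likewise phrased for an arbitrary start). Against your instance the algorithm simply starts at $z$: at step $1$ it moves to $g_{J_1}$, and since every snapshot contains all of $\mathcal T_0$ it then visits the $n$ gateways via an Euler tour of $\mathcal T_0$ in at most $2(n-1)$ further steps. This algorithm finishes in $2n-1$ steps with probability $1$, so for, say, $q=e^{-n}$ the claimed bound $\Dexp_A(\cG)\ge n\ln(1/q)/4=n^2/4$ fails outright.

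The paper avoids this by making the bottleneck two-sided. Its $G$ is the $n$-vertex ladder (two $k$-paths, $k=n/2$, together with all $k$ rungs); $\mu$ keeps both paths present and selects one rung uniformly at random in each snapshot. Wherever the explorer starts, it lies on one of the two paths and must eventually cross to the other; crossing at step $t$ forces the explorer's current vertex to coincide with the endpoint of the active rung, which by exactly your ``fresh uniform draw independent of history'' reasoning has probability $1/k=2/n$. The first-crossing time is therefore $\geo{2/n}$, and the same tail calculation gives the bound. Your needle-in-a-haystack is the natural one-sided version of this idea; to make it survive free choice of start you would need a second pendant with its own independent random gateway, so that whichever of the two the algorithm begins at, the other still requires geometric time to reach.
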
 
 
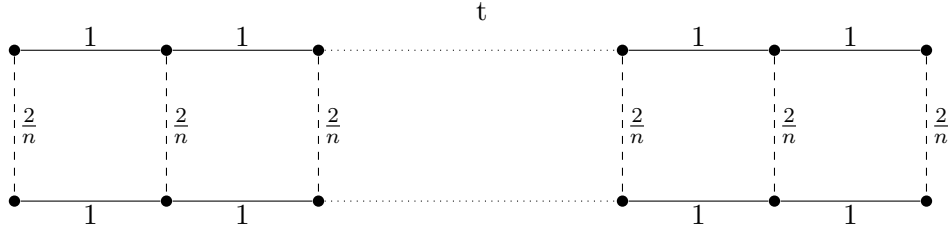
\begin{figure}{t}
    \centering

    	\begin{tikzpicture}[every node/.style={circle,fill=black,inner sep=1.5pt}, 
	edge label/.style={draw=none,fill=none,shape=rectangle}]

	\node (v1t) at (0,2) {};
	\node (v2t) at (2,2) {};
	\node (v3t) at (4,2) {};
	
	\node (v1) at (0,0) {};
	\node (v2) at (2,0) {};
	\node (v3) at (4,0) {};

	\node (vn2t) at (8,2) {};
	\node (vn1t) at (10,2) {};
	\node (vnt)  at (12,2) {};

	\node (vn2) at (8,0) {};
	\node (vn1) at (10,0) {};
	\node (vn)  at (12,0) {};

	\draw (v1t) -- node[midway, above, edge label] {$1$} (v2t);
	\draw (v2t) -- node[midway, above, edge label] {$1$} (v3t);
	\draw[dotted] (v3t) -- (vn2t);
	\draw (vn2t) -- node[midway, above, edge label] {$1$} (vn1t);
	\draw (vn1t) -- node[midway, above, edge label] {$1$} (vnt);
	
	\draw (v1) -- node[midway, below, edge label] {$1$} (v2);
	\draw (v2) -- node[midway, below, edge label] {$1$} (v3);
	\draw[dotted] (v3) -- (vn2);
	\draw (vn2) -- node[midway, below, edge label] {$1$} (vn1);
	\draw (vn1) -- node[midway, below, edge label] {$1$} (vn);

	\draw[dashed] (v1t) -- node[midway, right, edge label] {$ \frac{2}{n}$} (v1); 
	\draw[dashed] (v2t) -- node[midway, right, edge label] {$\frac{2}{n}$} (v2); 
	\draw[dashed] (v3t) -- node[midway, right, edge label] {$\frac{2}{n}$} (v3); 
	\draw[dashed] (vnt) -- node[midway, right, edge label] {$\frac{2}{n}$} (vn); 
	\draw[dashed] (vn1t) -- node[midway, right, edge label] {$\frac{2}{n}$} (vn1); 
	\draw[dashed] (vn2t) -- node[midway, right, edge label] {$\frac{2}{n}$} (vn2); 
 
\end{tikzpicture}
    \caption{Ladder example}
    \label{fig:ladder}
\end{figure}

 \begin{proof} Let $G$ be the ladder on $n=2k$ vertices consisting of two $k$-vertex paths, the first has vertex set $[k]$, and the second $[k+1, 2k]$. For each $i\in [k]$, let $T_i$ be the spanning tree of $G$ consisting of both paths and a single edge, which we call a bridge, between vertices $i$ and $k+i$.
 Let $\mu(T)=1/k$ if $T=T_i$ for some $i\in [k]$ and $\mu(T)=0$ otherwise, see \cref{fig:ladder}. Let $\mathcal{G}\sim (\cT, \mu)$, where $\cT$ is the set of spanning trees of $G$.
 	

    We interpret a randomised online exploration algorithm $A$ as an explorer on the vertices of $\mathcal{G}$.
    Observe that in order to visit all vertices of $\mathcal{G}$ the explorer must cross from one path to the other at some point. Crossing requires a bridge at the vertex where the explorer currently resides.
%
%
    Thus, the exploration time $\Dexp_A(\cG)$ is lower bounded by the time it takes for the explorer to catch a rabbit moving along the path with vertex set $[k]$, where each time step the rabbit jumps to a vertex chosen independently and uniformly at random. In the remainder of the proof, we establish a lower bound on the time required to catch the rabbit.



    We associate with the rabbit a sequence $(b_t)_{t \geq 1}$, where each $b_t \in [k]$ is the rabbit's position at time step $t$.
%
    We represent the explorer's movement on the path
    by a sequence $(a_t)_{t \geq 1}$, where $a_1\in [k]$ is the explorer's initial position (chosen by the algorithm without seeing the rabbit), 
    and, for each $t \geq 2$, the value $a_t \in [n]$ denotes the vertex to which the explorer moves at the end of time step $t-1$.   
    For every $t \geq 1$, the decision $a_t$ at time step $t-1$ is based on the explorer’s movement history $(a_1, a_2, \ldots, a_{t-1})$, the rabbit's positions up to time $t-1$, represented by $(b_1, b_2, \ldots, b_{t-1})$, and the random bits used by the algorithm. 
    The explorer catches the rabbit once $a_t=b_t$.
 	
 	By \cref{Prop:JimmysPrinciple}, the number of steps $t$ such that $a_t=b_t$ within an interval of length $T$ is distributed as $\operatorname{Bin}(T,1/k)$. 
    So the time it takes the explorer to catch the rabbit is geometrically distributed with parameter $1/k$. 
    However, for any integer $\lambda>0$ 
 	\[
 	\mathbb{P}[\geo{1/k} > \lambda ] =  (1 - 1/k)^{\lambda } = \exp\Big(\lambda \ln(1-1/k)\Big) \geq \exp\Big(\lambda \cdot \frac{- 1/k}{1 -1/k}\Big)\geq \exp(-2\lambda/k), 
 	\]
 	where we have used the inequality $\ln(1 + x)\geq \frac{x}{1+x}$, which holds for all $x>-1$, and the assumption that $k\geq 2$. If we let $\lambda = \frac{k\ln(1/q)}{2} = \frac{n\ln(1/q)}{4}$, then the time to catch the rabbit, and therefore $\Dexp_A(\cG)$, is greater than $\lambda$ with probability at least $q$, as claimed.
\end{proof} 
 
By plugging in $q=1/\operatorname{poly}(n)$ in \cref{prop:online-lower-bound}, we conclude the existence of sparse graphs, i.e., graphs with $m=O(n)$, for which any online algorithm needs time $\Omega(n\log n) = \Omega(m \log m)$ to explore the graphs with certainty $1-n^{-\Omega(1)}$. This shows that \cite[Theorem 6.1]{ErlebachOnTemporal2021} is best possible for an online explorer. 

Furthermore, taking $q=e^{-\Omega(n^{1/6})}=e^{-\Omega(m^{1/6})}$ in \cref{prop:online-lower-bound}, shows that, for sparse graphs, for a probability guarantee matching the one for the $O(m)$ exploration time of an offline explorer given in \cref{thm:orderm}, an online explorer needs time at least $\Omega(m^{7/6})$.

\section{Random Temporal Graphs on Stars}\label{sec:uniform}

This section contains several results derived from examples of the RST model $(\mathcal{S},\unif)$ where the set $\mathcal{S}$ consists of stars, or `star-like' graphs, and $\unif$ is the uniform measure. In \cref{sec:arbstars} we prove \cref{thm:urst-some-stars} which determines the time to explore $\mathcal{G}\sim(\mathcal{S},\unif)$, where $\mathcal{S}$ is a set of $k$ distinct $n$-vertex stars. In \cref{sec:lowerdeg} we consider  a `star-like' RST with a tunable maximum degree that is difficult to explore. In \cref{sec:onlinestars} we show that w.h.p.\ any \textit{online} explorer needs time $\Omega(n^2)$ to explore $\mathcal{G}\sim(\mathcal{S},\unif)$ when $\mathcal{S}$ is a set of $\lfloor n/2\rfloor$ distinct $n$-vertex stars, which is larger than the $O(n^{3/2})$ bound on the exploration time guaranteed by \cref{thm:general-upper}. We begin in \cref{sec:bottleneck} with some results and notation that play a key role in several of the proofs. 

\subsection{Auxillary Results}\label{sec:bottleneck}
Given a star, we call the unique vertex with degree greater than one its \emph{centre}. Given a set $\mathcal{S}$ of stars on vertex set $[n]$, let $K$ denote the set of centres of the stars in $\mathcal{S}$, and $L = [n] \setminus K$ denote the set of non-centres. Of course, depending on which star is sampled in a particular snapshot, it could be that at that time step a vertex $v\in K$ is currently a leaf (thus itself a `non-centre'), however it is never the case that a vertex $u \in L$ can be the centre of a star.

We will establish two lemmas and introduce some notation that is of use throughout this section.  
The first one is at the heart of our arguments and is used implicitly in \cite{ErlebachOnTemporal2021}. 
In this lemma, we use a slightly more general notion of ``stars'' that allows for isolated vertices. Let $\cS^*$ be a set of graphs on a fixed vertex set $[n]$, where each graph contains exactly one vertex of degree greater than one; that is, each graph in $\cS^*$ consists of a star together with a set of isolated vertices. We refer to each such graph as a \emph{generalised star}, and the unique vertex of degree greater than one as the \emph{centre} of the generalised star. Moreover, we assume that for each vertex $v \in [n]$, there is at most one graph in $\cS^*$ with centre $v$. Let $K \subseteq [n]$ denote the set of vertices $v$ for which $\cS^*$ contains a generalised star centred at $v$, and define $L := [n] \setminus K$.

\begin{lemma}\label{lem:pathsthroughC} 
    Let $\mathcal{G}$ be a temporal graph where each snapshot is from $\cS^*$, and assume $|L|\geq 2$. Let $P$ be any temporal walk between two distinct vertices in $L$.
    Then, there exists a vertex $v \in K$ that is the generalised star centre in snapshots at two distinct time steps between the times of the first and the last time-edges of $P$.
\end{lemma}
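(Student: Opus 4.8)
The plan is to proceed by contradiction, following the vertex that the walk $P$ occupies after each of its time-edges and using the one structural fact about the snapshots (each lies in $\cS^*$, hence is a generalised star): every edge of a generalised star is incident to its centre, the centre lies in $K$, and the two endpoints of $P$ lie in $L = [n]\setminus K$. First I would fix notation: write $P = ((e_1,t_1),\dots,(e_k,t_k))$ with $t_1 < \dots < t_k$, let $v_0, v_1, \dots, v_k$ be the sequence of vertices visited by $P$ (so that $v_0, v_k \in L$ are the two distinct endpoints and $e_i = \{v_{i-1}, v_i\}$ for each $i \in [k]$), and let $c_i$ be the centre of the snapshot at time $t_i$, so $c_i \in K$. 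The basic observation is that $c_i \in \{v_{i-1}, v_i\}$ for every $i$, since $e_i$ is an edge of a generalised star centred at $c_i$. An immediate consequence is $k \ge 2$: if $k = 1$, then $e_1$ joins $v_0, v_k \in L$ but is incident to $c_1 \in K$, which is impossible since $K \cap L = \emptyset$.

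For the main argument, assume towards a contradiction that no vertex of $K$ is the centre of a snapshot at two distinct time steps in the interval $[t_1, t_k]$. Since the times $t_1 < \dots < t_k$ are pairwise distinct, this forces $c_1, \dots, c_k$ to be pairwise distinct vertices, since a repeat $c_i = c_j$ with $i \ne j$ would be a vertex serving as a centre at the two distinct times $t_i, t_j \in [t_1, t_k]$. Because $v_0 \in L$ and $c_1 \in K \cap \{v_0, v_1\}$, we must have $c_1 = v_1$. I would then show by induction on $i$ that $c_i = v_i$ for all $i \in [k]$: assuming $c_{i-1} = v_{i-1}$, the fact that $c_i \in \{v_{i-1}, v_i\}$ but $c_i \ne c_{i-1} = v_{i-1}$ yields $c_i = v_i$. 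Finally, $v_k \in L$ and $c_k \in K \cap \{v_{k-1}, v_k\}$ force $c_k = v_{k-1}$; together with $c_k = v_k$ from the induction, this gives $v_{k-1} = v_k$, contradicting the fact that $e_k = \{v_{k-1}, v_k\}$ is a genuine edge. Hence some vertex of $K$ is a centre at two distinct time steps in $[t_1, t_k]$, which proves the lemma; in fact those two time steps can be taken among $t_1, \dots, t_k$.

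I do not anticipate a real obstacle here: the argument is short and purely combinatorial. The two points that need care are (i) keeping the direction of the endpoint cases straight — since $v_0$ (respectively $v_k$) cannot be a centre, the centre of the first (respectively last) time-edge must be the \emph{other} endpoint $v_1$ (respectively $v_{k-1}$) — and (ii) the step converting ``no centre is repeated among times in $[t_1,t_k]$'' into ``$c_1,\dots,c_k$ are pairwise distinct'', which relies precisely on the time-edges of a temporal walk having strictly increasing time stamps. The hypothesis $|L| \ge 2$ is only needed so that a temporal walk between two distinct vertices of $L$ can exist in the first place, and the assumption that at most one graph of $\cS^*$ is centred at each vertex is not needed for this lemma.
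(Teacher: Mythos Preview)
Your proof is correct and rests on the same key observation as the paper's: every centre $c_i$ lies in $\{v_{i-1},v_i\}\cap K$, and since $v_0,v_k\in L$ this forces all the $c_i$ to be internal vertices of the walk. The paper then finishes in one line by pigeonhole (there are $k$ centres drawn from the at most $k-1$ internal vertices $v_1,\dots,v_{k-1}$), whereas you unroll this into an inductive chain showing $c_i=v_i$ under the assumption that the $c_i$ are distinct. Both arguments are essentially the same; the paper's is slightly shorter, and your final contradiction could be shortened too, since $c_k=v_k$ already contradicts $c_k\in K$, $v_k\in L$ without passing through $v_{k-1}=v_k$.
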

\begin{proof}
    Denote by $s$ the number of time-edges in $P$.
    An edge between two vertices exists in a snapshot only when one of them is the centre of the generalised star in that snapshot. Hence, for each time-edge in $P$, one of its endpoints has to be the centre of the generalised star in the corresponding snapshot. As both start and end vertices of $P$ are non-centres, there are $s$ distinct time steps corresponding to the time-edges of $P$ in which one of the $s-1$ internal vertices of $P$ is the generalised star centre. Thus, by the pigeonhole principle, one of these $s-1$ vertices is the generalised star centre in at least two snapshots at distinct time steps between the times of the first and the last time-edges of $P$. 
\end{proof}    

Let $\cG$ be a temporal graph of infinite lifetime with snapshots from $\cS$, and assume without loss of generality that the set $K$ of centres is $[k]$. Associate with $\cG$ a sequence $\cX(\cG) = (X_i)_{i \geq 1}$, where $X_i \in [k]$ is the centre of the star in the $i$-th snapshot of $\cG$.

Given a sequence $\cX(\cG)$, let $t_0 = 0$, and for every $i \geq 1$, define 
\begin{equation}\label{eq:t_i}
    t_i = \min\{t  : \textup{ there exists }t_{i-1}< a<b\leq t \textup{ with }X_a=X_b \}. 
\end{equation}
In words, $t_i$ is the minimum index $t$ such that the subsequence $(X_{1}, X_{2}, \ldots, X_{t})$ can be partitioned into $i$ intervals each of which contains an element from $[k]$ twice. These intervals are $(X_{t_s+1}, X_{t_s+2}, \ldots, X_{t_{s+1}})$, $s \in \{0, 1, \ldots, i-1 \}$.

Given a set of vertices $R \subseteq [n]$, denote by $T_{\cG}(R)$ the shortest time needed to visit all vertices in $R$ in temporal graph $\cG$ starting from a worst case vertex in $[n]$.
The next lemma shows that $T_{\cG}(L)$ is sandwiched between $t_{\ell-1}$ and $t_{\ell}$, where $\ell = |L|$ and $t_{\ell-1},t_{\ell}$ are defined in \eqref{eq:t_i}.

\begin{lemma}\label{lem:leaves-exp}
    Let $\mathcal{G}$ be a temporal graph where each snapshot is from $\mathcal{S}$, and assume $\ell := |L|\geq 2$.
    Then 
    \[
        t_{\ell-1} \leq T_{\cG}(L) \leq t_{\ell}
    \]
\end{lemma}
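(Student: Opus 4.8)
The plan is to prove the two inequalities separately, both times exploiting the block decomposition $(X_1,\dots,X_{t_i}) = I_1 I_2\cdots I_i$ with $I_s = (X_{t_{s-1}+1},\dots,X_{t_s})$, together with the characterisation (stated right after \eqref{eq:t_i}) that $t_i$ is the \emph{smallest} index $t$ for which $(X_1,\dots,X_t)$ partitions into $i$ consecutive blocks, each of which contains some value of $[k]$ twice. The bridge between temporal walks on non-centres and repeated centres will be \cref{lem:pathsthroughC}. For the upper bound I would use the explicit blocks $I_1,\dots,I_\ell$ to construct an exploring walk; for the lower bound I would use the minimality of $t_{\ell-1}$ to turn a structural feature of an arbitrary exploring walk into a numerical bound.

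\emph{Upper bound $T_{\cG}(L)\le t_\ell$.} I would fix an arbitrary start vertex $v$ and an arbitrary enumeration $L = \{w_1,\dots,w_\ell\}$, and build a single temporal walk that visits $w_s$ during block $I_s$. Inside $I_s$ there are time steps $t_{s-1} < a_s < b_s \le t_s$ with $X_{a_s} = X_{b_s} =: c_s$, so the snapshots at times $a_s$ and $b_s$ are stars centred at $c_s$, and the centre $c_s\in K$ is adjacent there to every other vertex. Hence, from whichever vertex is occupied just before snapshot $t_{s-1}+1$, one may move (if not already there) to $c_s$ at time $a_s$ and then from $c_s$ to $w_s$ at time $b_s$ — the latter edge exists because $w_s\in L$ and $c_s\in K$ force $w_s\ne c_s$ — arriving at $w_s$ by time $b_s\le t_s$ using only snapshots of $I_s$. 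Concatenating the $\ell$ stages (start at $v$; reach $w_1$ by $t_1$; reach $w_2$ by $t_2$; \dots; reach $w_\ell$ by $t_\ell$) yields a temporal walk within the first $t_\ell$ snapshots that visits $v$ and all of $L$; since $v$ was arbitrary, this gives $T_{\cG}(L)\le t_\ell$.

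\emph{Lower bound $T_{\cG}(L)\ge t_{\ell-1}$.} I would take any start vertex $v$ and any temporal walk $P$, using only the first $T$ snapshots, that visits every vertex of $L$, and aim to force $T\ge t_{\ell-1}$. List the vertices of $L$ as $z_1,\dots,z_\ell$ in the order $P$ first visits them, and let $P_q$ (for $q\in[\ell-1]$) be the sub-walk of $P$ from the first visit of $z_q$ to the first visit of $z_{q+1}$. Each $P_q$ is a temporal walk between two distinct non-centres, and it has at least two time-edges, since a single star snapshot cannot contain an edge with both endpoints in $L$. By \cref{lem:pathsthroughC} some centre is repeated at two distinct time steps lying between the first and last time-edge of $P_q$; and because the first-visit times strictly increase along $P$, the last time-edge of $P_q$ strictly precedes the first time-edge of $P_{q+1}$, so the time-spans of $P_1,\dots,P_{\ell-1}$ are pairwise disjoint and occur in order inside $[1,T]$. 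This yields time steps $a_1 < b_1 < a_2 < b_2 < \cdots < a_{\ell-1} < b_{\ell-1} \le T$ with $X_{a_q} = X_{b_q}$, which exhibits a partition of $(X_1,\dots,X_T)$ into $\ell-1$ consecutive blocks (cut just after each $b_q$, extending the last block out to index $T$), each containing a repeated centre. By the minimality characterisation of $t_{\ell-1}$ this forces $T\ge t_{\ell-1}$; taking the worst-case $v$ gives $T_{\cG}(L)\ge t_{\ell-1}$.

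\emph{Main obstacle.} The part that needs the most care is the disjointness bookkeeping in the lower bound: I must make sure that the repeated-centre pair \cref{lem:pathsthroughC} produces for $P_q$ lies strictly before the one it produces for $P_{q+1}$, so that the $\ell-1$ repeats really witness $\ell-1$ separate blocks rather than possibly overlapping ones. This will hinge on the elementary but essential observation that consecutive vertices of $L$ along $P$ cannot be adjacent in one snapshot (both would then be leaves), so each $P_q$ spans at least two snapshots and its last time-edge is strictly earlier than the first time-edge of $P_{q+1}$. The remaining steps are routine chaining once the block characterisation of $t_i$ is in hand.
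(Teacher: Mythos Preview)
Your proposal is correct and follows essentially the same approach as the paper's proof: for the upper bound, use the repeated centre in each block $I_s$ to route to the next unvisited vertex of $L$; for the lower bound, decompose an optimal exploring walk at the first-visit times of the $L$-vertices and apply \cref{lem:pathsthroughC} to each of the $\ell-1$ sub-walks between consecutive non-centres, yielding $\ell-1$ disjoint intervals each containing a repeated centre. The disjointness concern you flag is handled exactly as you describe and matches the paper's (implicit) reasoning.
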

\begin{proof}
    We begin with the lower bound. Let $v_1, v_2, \ldots, v_{\ell}$ be the vertices in $L$, listed in the order they are visited by a temporal walk $P$ with earliest arrival time. For each $i \in [\ell-1]$, let $P_i$ denote the subwalk of $P$ whose first time-edge is the one immediately following the time-edge that visits $v_i$ for the first time, and whose last time-edge is the one that visits $v_{i+1}$ for the first time.
    
    By \cref{lem:pathsthroughC}, between the times of the first and the last time-edges of each subwalk $P_i$ there is a vertex that becomes a centre of a snapshot in two distinct time steps.
    This implies that in the prefix of the sequence $\cX(\cG)$ consisting of the first $T_{\cG}(L)$ elements, there are $\ell - 1$ disjoint intervals, each of which contains some element appearing at least twice. Consequently, $t_{\ell-1} \leq T_{\cG}(L)$.

    We now prove the upper bound.
    For $i \in [\ell]$, consider the temporal graph $\cH_i := \cG_{[t_{i-1}+1, t_{i}]}$. We claim that for any $u \in [n]$ and any $w \in L$, the temporal graph $\cH_i$ contains a temporal path from $u$ to $w$. Indeed, let $v \in K$ be a vertex that becomes the star centre in two distinct snapshots of $\cH_{i}$, and let $s_1,s_2$, where $t_{i-1}+1 \leq s_1 < s_2 \leq t_{i}$, be the corresponding time steps. Then $((u,v),s_1), ((v,w),s_2)$ is a temporal path from $u$ to $w$ in $\cH_i$, as claimed. 
    We can thus use $\cH_1, \cH_2, \ldots, \cH_{\ell}$ to visit the verities in $L$ one by one by using the first $t_{\ell}$ snapshots of $\cG$, which implies $T_{\cG}(L) \leq t_{\ell}$.
\end{proof}

\subsection{Arbitrary Set of Stars}\label{sec:arbstars}

\renewcommand{\xshift}{0}
\renewcommand{\scale}{1.5}
\renewcommand{\snapshotscale}{0.75}
\tikzset{
  knoten/.style={
    draw=black,
    circle,
    thick,
    minimum size=0.45cm,
    inner sep=0pt,
    fill=white,
    text centered
  },
  visistedknoten/.style={
    knoten,
    fill=green!30
  },
  edge/.style={
    gray
  },
  visitedge/.style={
    line width=2pt,
    green!70!black
  }
}

\begin{figure}
				\centering
				\begin{tikzpicture}[scale=\snapshotscale]
					
					\foreach \vertex/\name/\x/\y in 
					{c1/$c_1$/0/2, c2/$c_2$/0/0, c3/$c_3$/0/-2, l1/$l_1$/3/2, l2/$l_2$/3/1, l3/$l_3$/3/0, l4/$l_4$/3/-1, l5/$l_5$/3/-2} 
					{
						\node[knoten] (\vertex) at (\x, \y) {\name};
					}
					
					\foreach \from/\to in 
					{c1/c2, c1/l1, c1/l2, c1/l3, c1/l4, c1/l5} 
					{
						\draw[edge] (\from) -- (\to);
					}
                    \draw[edge] (c1) to[out=-135,in=135] (c3);
                    
					\node[below] at (-2.5+\xshift,-2.5) {{\Large $\mu:$}};
						\node[below] at (1.5+\xshift,-2.5) {{\large $1/3$}};
			\renewcommand{\xshift}{7}
					
				\foreach \vertex/\name/\x/\y in 
					{c1/$c_1$/0/2, c2/$c_2$/0/0, c3/$c_3$/0/-2, l1/$l_1$/3/2, l2/$l_2$/3/1, l3/$l_3$/3/0, l4/$l_4$/3/-1, l5/$l_5$/3/-2} 
				{
					\path (\x,\y) ++(\xshift,0) node[knoten] (\vertex) {\name};
				}
					
					\foreach \from/\to in 
					{c2/c1, c2/c3, c2/l1, c2/l2, c2/l3, c2/l4, c2/l5} 
					{
						\draw[edge] (\from) -- (\to);
					}
					\node[below] at (1.5+\xshift,-2.5) {{\large $1/3$}};
			
			\renewcommand{\xshift}{14}
					
				\foreach \vertex/\name/\x/\y in 
					{c1/$c_1$/0/2, c2/$c_2$/0/0, c3/$c_3$/0/-2, l1/$l_1$/3/2, l2/$l_2$/3/1, l3/$l_3$/3/0, l4/$l_4$/3/-1, l5/$l_5$/3/-2} 
				{
					\path (\x,\y) ++(\xshift,0) node[knoten] (\vertex) {\name};
				}
					
					\foreach \from/\to in 
					{c3/c2, c3/l1, c3/l2, c3/l3, c3/l4, c3/l5} 
					{
						\draw[edge] (\from) -- (\to);
					}
                    \draw[edge] (c1) to[out=-135,in=135] (c3);
					\node[below] at (1.5+\xshift,-2.5) {{\large $1/3$}};
				\end{tikzpicture}
				
				\caption{Distribution $\mu$ for \Cref{thm:urst-some-stars} with $k=3$ and $n=8$. The construction is a randomised version of \cite[Figure 1]{ErlebachOnTemporal2021}.}
                \label{fig:stars}
			\end{figure}
The aim of this subsection is to establish fairly tight bounds on the exploration time of arbitrary sets of $k$ distinct $n$-vertex stars. The RST $(\cS, \unif)$ we consider  is a randomised version of the lower-bound construction by Erlebach, Hoffmann, and Kammer~\cite[Lemma 3.1]{ErlebachOnTemporal2021}, which was used to show that the $O(n^2)$ upper bound on the exploration time of deterministic always-connected temporal graphs is tight.  
An adversarial star-based construction was also used by Avin, Kouck{\'y}, and Lotker  \cite{AvinKL18} for temporal graphs which are hard to cover by random walks. 

The main result (\cref{thm:urst-some-stars}) is somewhat technical, so we will defer its statement to later in the section, however we give the following weaker, yet cleaner, corollary here.

\begin{restatable}{theorem}{weakstars}\label{thm:weak-stars}
   Let $\cS$ be a set of $k$ distinct $n$-vertex stars, $\cG \sim (\cS,\unif)$,  and $T=(n -k)\sqrt{\tfrac{\pi  }{2}k }$. Then, for any $\varepsilon>0$ there exists some $N_\varepsilon $ such that if  $N_\varepsilon \leq  k  \leq n - N_\varepsilon \sqrt{n}\ln n$, then    
   \[
    \Pr{  \left| \texp(\cG)  -T  \right|\leq \varepsilon \cdot T} \geq 1- \varepsilon.
    \]
\end{restatable}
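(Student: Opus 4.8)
The plan is to sandwich $\texp(\cG)$ between the renewal times $t_{\ell-1}$ and $t_{\ell}$ of \eqref{eq:t_i} (up to a lower-order additive error) and then to show that $t_{\ell-1}$ and $t_{\ell}$ concentrate around $T$. Write $K$ for the set of $k$ centres and $\ell:=|L|=n-k$. Since the centre sequence $\cX(\cG)=(X_i)_{i\ge1}$ consists of i.i.d.\ uniform elements of $K$, the increments $t_i-t_{i-1}$ are i.i.d.\ copies of the \emph{birthday stopping time} $Z:=\min\{t:\ X_a=X_b\text{ for some }a<b\le t\}$, so $t_{\ell}=Z_1+\dots+Z_{\ell}$ with the $Z_j$ independent. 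For the lower bound I would observe that the lower-bound half of the proof of \cref{lem:leaves-exp} holds from \emph{any} fixed start vertex: any walk visiting $L$ meets its $\ell$ vertices in some order $v_1,\dots,v_{\ell}$, each of the $\ell-1$ legs between consecutive $v_i,v_{i+1}$ is a temporal walk between two non-centres and hence (by \cref{lem:pathsthroughC}) contains a repeated centre strictly inside it, and these $\ell-1$ repeated-centre windows are pairwise disjoint, forcing the total length to be at least $t_{\ell-1}$ by the greedy definition \eqref{eq:t_i}. As visiting all $n$ vertices is at least as hard, $\texp(\cG)\ge t_{\ell-1}$.

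For the upper bound I would run a two-phase schedule: first visit all of $K$, then all of $L$. The second phase costs at most an additional $t_{\ell}$ steps by the window argument in the proof of \cref{lem:leaves-exp} applied to the suffix of $\cG$: in the $i$-th window (of length $Z_i$) some centre $v_i$ appears twice, so from the current vertex one jumps to $v_i$ at its first appearance and out to a still-unvisited leaf at its second. The crucial observation is that the first phase needs only $O(k\log k)$ steps, not the naive $\Theta(k^2)$: maintain a position $p\in K$; in a snapshot with centre $c$, if $p\ne c$ the only available move takes $p$ to $c$ (so $c$ gets visited), while if $p=c$ one jumps to an arbitrary unvisited centre; by a coupon-collector bound, after $\kappa_0=O(k\log k)$ steps every element of $K$ has appeared as a centre and hence been visited, except with probability $k^{-\Omega(1)}$. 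Since the two phases use disjoint, independent blocks of snapshots, $\texp(\cG)\le\kappa+t_{\ell}'$, where $\kappa\le\kappa_0$ with probability $1-k^{-\Omega(1)}$ and, conditionally on the first phase, $t_{\ell}'$ is distributed as $t_{\ell}$.

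For the concentration I would use that $\E[Z]=\sum_{t\ge0}\prod_{0\le j<t}(1-j/k)$ is (up to a constant) the Ramanujan $Q$-function, whose classical asymptotics give $\E[Z]=(1+o_k(1))\sqrt{\pi k/2}$, while $1-x\le e^{-x}$ yields $\Pr{Z>t}\le e^{-t(t-1)/(2k)}$ and hence, summing $\sum_{t\ge0}(2t+1)\Pr{Z>t}$, the bound $\mathrm{Var}(Z)\le\E[Z^2]=O(k)$. Therefore $\E[t_{\ell}]=\ell\,\E[Z]=(1+o_k(1))T$ and $\mathrm{Var}(t_{\ell})=\ell\,\mathrm{Var}(Z)=O(\ell k)=O(T^2/\ell)$, so Chebyshev gives $\Pr{|t_{\ell}-T|>\tfrac{\varepsilon}{2}T}=O\!\big(1/(\varepsilon^2\ell)\big)+o_k(1)$, and the same bound holds for $t_{\ell-1}$ (as $\E[Z_{\ell}]=o(T)$). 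One then fixes $N_\varepsilon$ large enough that, whenever $N_\varepsilon\le k\le n-N_\varepsilon\sqrt n\ln n$, one has $k\to\infty$, $\ell=n-k\to\infty$, and $\kappa_0=O(k\log k)\le\tfrac{\varepsilon}{2}T$ (using $k\log k/T=O(\sqrt k\log k/\ell)=O(1/N_\varepsilon)$); a union bound over the three bad events $\{t_{\ell-1}<(1-\varepsilon)T\}$, $\{\kappa>\kappa_0\}$, $\{t_{\ell}'>(1+\tfrac{\varepsilon}{2})T\}$ then yields $\Pr{|\texp(\cG)-T|\le\varepsilon T}\ge1-\varepsilon$.

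The step I expect to be the main obstacle is the $O(k\log k)$ bound on visiting the centre set $K$: the obvious strategy of waiting for each centre to reappear costs $\Theta(k^2)$, which dominates $T$ precisely in the regime $k=n-o(n)$ (the interesting end of the stated range), so one genuinely needs the "follow the current centre, detour to an unvisited one" idea, and must verify it is legitimate—in a star every forced move lands exactly on the current centre, so no step is wasted. A secondary point requiring care is the sharp-ish variance bound $\mathrm{Var}(Z)=O(k)$ (rather than the trivial $O(k^2)$), which is needed for Chebyshev to beat $T^2$ when $\ell$ is small, and which follows from the sub-Gaussian tail $\Pr{Z>t}\le e^{-t(t-1)/(2k)}$.
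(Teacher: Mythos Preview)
Your proof is correct and follows essentially the same route as the paper. The paper packages the two ingredients you isolate---the renewal/birthday bound on $T_{\cG}(L)$ (your $t_{\ell-1}\le\texp(\cG)$ and phase-2 upper bound) and the coupon-collector bound $T_{\cG}(K)=O(k\log k)$ (your phase-1 strategy)---into the more quantitative \cref{thm:urst-some-stars}, and then derives \cref{thm:weak-stars} from it; you instead argue directly. The one methodological difference is that the paper uses the sub-Gaussian Hoeffding inequality (\cref{thm:subGHoff}) on the birthday increments to get exponential tails on $t_\ell$, whereas you use Chebyshev via $\mathrm{Var}(Z)=O(k)$; for the weak $1-\varepsilon$ guarantee of \cref{thm:weak-stars} your approach is sufficient and arguably more elementary, while the paper's sub-Gaussian bound is what yields the sharper error terms in \cref{thm:urst-some-stars} and \cref{cor:urst-half-stars}. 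Your identification of the $O(k\log k)$ centre-visiting step as the main potential obstacle is apt: it is exactly the content of \cref{clm:coveringleaves}, and the ``follow the current centre'' strategy you describe is the one used there.
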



This statement is far weaker than \cref{thm:urst-some-stars} as the latter holds for all non-trivial $k$ and gives explicit error bounds in terms of $n$ and $k$. \cref{thm:urst-some-stars} also shows that for large $k$, above the range considered in \cref{thm:weak-stars}, the behaviour changes and the time to explore the temporal graph is w.h.p.\ $\Theta(n\log n)$. The motivation for proving such a precise result is that, due to the importance of stars in the adversarial constructions \cite{ErlebachOnTemporal2021},  it is interesting to determine their precise behaviour in the random setting. 
The choice of $k$ maximising the bound is around  $k=\lceil n/3 \rceil$, 
and a more careful application of \cref{thm:urst-some-stars} gives the following corollary. \begin{theorem}\label{cor:urst-half-stars}
    Let $\cS$ be a set of $\lceil n/3 \rceil$ distinct  $n$-vertex  stars, and let $\cG \sim (\cS,\unif)$. Then,  \[\Pr{  \left|\texp(\cG) - \sqrt{\tfrac{ 2\pi}{27} }    \cdot n^{3/2} \right|\leq  n\ln^2 n } \geq 1- n^{-\omega(1)} .\] 
\end{theorem}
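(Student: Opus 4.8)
The plan is to obtain \cref{cor:urst-half-stars} from \cref{thm:urst-some-stars} by specialising to $k=\lceil n/3\rceil$ while tracking error terms carefully enough to replace the relative error of \cref{thm:weak-stars} by the additive window $n\ln^2 n$. Recall the structure behind these results: writing $K$ for the set of star centres and $L=[n]\setminus K$ with $\ell:=|L|=n-k$, \cref{lem:leaves-exp} sandwiches the time to visit all of $L$ between $t_{\ell-1}$ and $t_\ell$, where (by the i.i.d.\ nature of the centres and the fact that each $t_{i-1}$ is a stopping time) the increments $W_i:=t_i-t_{i-1}$ from \cref{eq:t_i} are i.i.d.\ copies of the birthday collision time $W$ for an i.i.d.\ uniform sequence on $[k]$. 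The lower-bound argument of \cref{lem:leaves-exp}, via \cref{lem:pathsthroughC}, applies to any exploration walk from any start vertex, so $\texp(\cG)\ge t_{\ell-1}$; and $W\le k+1$ deterministically by pigeonhole, so $t_\ell-t_{\ell-1}\le k+1\le n/3+2$. For the upper bound, after an optimal $L$-visiting walk of length $t_\ell$ the explorer can step to the current centre and thereafter sit on the centre of every subsequent snapshot (each later centre is a leaf in the next star); a coupon-collector estimate then shows all $\le k$ remaining centres appear within $k\ln^{3/2}n=o(n\ln^2 n)$ further steps with failure probability $\le k\,e^{-\ln^{3/2}n}=n^{-\omega(1)}$. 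Hence $|\texp(\cG)-t_\ell|=o(n\ln^2 n)$ with probability $1-n^{-\omega(1)}$ (this is what \cref{thm:urst-some-stars} packages), and it remains to analyse $t_\ell=\sum_{i=1}^{\ell}W_i$.

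For the mean, the classical birthday estimate $\Ex{W}=1+Q(k)$ with $Q(k)=\sqrt{\pi k/2}+O(1)$ (Ramanujan's $Q$-function) gives $\Ex{t_\ell}=(n-k)\sqrt{\pi k/2}+O(n)$. Substituting $k=\lceil n/3\rceil=n/3+O(1)$ and $n-k=2n/3+O(1)$, one computes $(n-k)\sqrt{\pi k/2}=\tfrac{2n}{3}\sqrt{\pi n/6}+O(\sqrt n)=\sqrt{2\pi/27}\,n^{3/2}+O(\sqrt n)$, so $\Ex{t_\ell}=\sqrt{2\pi/27}\,n^{3/2}+O(n)$; the rounding of $n/3$ and the $O(1)$ error in $Q(k)$ perturb the leading coefficient only at order $O(\sqrt n)=o(n\ln^2 n)$. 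This also explains the choice $k=\lceil n/3\rceil$: the map $k\mapsto(n-k)\sqrt{\pi k/2}$ has vanishing derivative at $k=n/3$, so this $k$ realises the extremal constant $\sqrt{2\pi/27}$, which is exactly the optimisation underlying \cref{thm:urst-some-stars}.

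For the concentration of $t_\ell$ we use that $W_i\in[2,k+1]$ with $\Ex{W}=\Theta(\sqrt k)$ and $\mathrm{Var}(W)\le\Ex{W^2}=\sum_{j\ge0}(2j+1)\prod_{i=0}^{j-1}(1-i/k)\le\sum_{j\ge0}(2j+1)e^{-j(j-1)/(2k)}=O(k)$, whence $\mathrm{Var}(t_\ell)=\ell\,\mathrm{Var}(W)=O(n^2)$ while $|W_i-\Ex{W}|\le k+1=O(n)$. Bernstein's inequality applied with $s=\tfrac12 n\ln^2 n$ then gives $\Pr{|t_\ell-\Ex{t_\ell}|\ge s}\le 2\exp\!\big(-\Omega\big(\tfrac{s^2}{n^2+ns}\big)\big)=2\exp(-\Omega(\ln^2 n))=n^{-\omega(1)}$. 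Combining $\texp(\cG)=t_\ell+o(n\ln^2 n)$, $t_\ell=\Ex{t_\ell}\pm\tfrac12 n\ln^2 n$, and $\Ex{t_\ell}=\sqrt{2\pi/27}\,n^{3/2}+o(n\ln^2 n)$ yields $\big|\texp(\cG)-\sqrt{2\pi/27}\,n^{3/2}\big|\le n\ln^2 n$ with probability $1-n^{-\omega(1)}$ for $n$ large.

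The step I expect to be the real obstacle is the concentration of $t_\ell$: the range of $W$ is of order $n$, far larger than its typical value of order $\sqrt n$, so a Hoeffding bound would only yield a useless $n^{3/2}$-scale window; one must instead route through the variance bound $\mathrm{Var}(W)=O(k)$ and a Bernstein/Bennett-type inequality, and verify that this estimate together with the expansion $Q(k)=\sqrt{\pi k/2}+O(1)$ is sufficiently quantitative. The remaining points are routine bookkeeping: neither the rounding $\lceil n/3\rceil$ nor the cost of collecting the centres not already met on the $L$-visiting walk exceeds $o(n\ln^2 n)$, and the two-sided deviation is controlled using $\texp(\cG)\ge t_{\ell-1}$ together with $t_\ell-t_{\ell-1}\le k+1=O(n)$. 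Granting these, \cref{cor:urst-half-stars} is a specialisation of \cref{thm:urst-some-stars}.
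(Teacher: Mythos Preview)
Your proposal is correct and follows essentially the same route as the paper: the paper's proof is the one-line ``apply \cref{thm:urst-some-stars} with $k=\lceil n/3\rceil$ and $C(k)=\log 2k$,'' and your argument is a re-derivation of that theorem's content for this parameter choice. One remark: the concentration step you flag as the ``real obstacle'' is already packaged in \cref{lem:birthday} via the sub-Gaussian Hoeffding bound $\Pr{|t_\ell-\ell\,\Ex{\tau_1}|\ge t}\le 2\exp(-ct^2/(\ell k))$ (the sub-Gaussian norm of $\tau_1$ is $O(\sqrt{k})$, not $O(k)$), so a separate Bernstein argument is unnecessary --- the naive range-based Hoeffding would indeed fail, but the sub-Gaussian version does not.
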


\cref{cor:urst-half-stars} shows that the $O(n^{3/2})$ bound on the exploration time of random temporal graphs given in \cref{thm:general-upper} is tight. Interestingly this does not match the worst case choice of $k$ in the star-based adversarial construction from \cite{ErlebachOnTemporal2021}, which is around $k=n/2$. We conjecture that \cref{cor:urst-half-stars} is essentially the worst case for exploration of RST's. 

\begin{conjecture}\label{conj:half-stars}
  Let $\cT$ be a set of trees on vertex set $[n]$, $\mu$ be a probability distribution on $\cT$, and $\cG \sim (\cT, \mu)$. Then, for any $\varepsilon >0$ there exists some $N_\varepsilon$ such that for any $n\geq N_\varepsilon$ 
  \[\Pr{ \texp(\cG) \leq  (1+\varepsilon) \cdot \sqrt{\tfrac{2 \pi}{27} }   \cdot n^{3/2}   } \geq 1- \varepsilon .\] 
\end{conjecture}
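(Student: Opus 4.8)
The plan is to prove a sharp‑constant strengthening of the $O(n^{3/2})$ bound of \cref{thm:general-upper}, matching the lower bound of \cref{cor:urst-half-stars}. The proof of \cref{thm:general-upper} is wasteful on two fronts: it works with $(700\sqrt n,1/9)$-closeness, so each meta-edge costs a $\tfrac19$-geometric number of attempts of length $700\sqrt n$, and it uses an Euler tour of the meta-forest. The conjectured constant $\sqrt{2\pi/27}$ is exactly the value of $(n-k)\sqrt{\pi k/2}$ at $k=n/3$, which is the cost of the optimal schedule on the extremal star RST of \cref{cor:urst-half-stars}: there one first \emph{follows the centre sequence} for $O(n\log n)$ steps to visit every centre cheaply, and then visits the $n-k$ non-centre vertices one at a time, each through a fresh ``birthday collision'' among the $k$ centres at cost $\approx\sqrt{\pi k/2}$ (this is precisely what \cref{lem:leaves-exp} quantifies). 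A proof must isolate the analogous two-class structure for a general RST and run the same birthday calculation with all error terms $o(n^{3/2})$.

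Concretely, I would proceed as follows. \emph{Step 1 (dichotomy).} Partition $[n]$ into a set $K$ of ``cheap'' vertices and $L=[n]\setminus K$ of ``expensive'' vertices, where $K$ is a slight enlargement of the set of vertices that, with probability $1-o(1)$, can be reached by the walk whenever it is near the current ``action'' — in a star these are exactly the centres — and show that $T_{\cG}(K)=o(n^{3/2})$ \whp by a coupon-collector/greedy argument that follows a sequence of such vertices, generalising the ``follow the centre sequence'' move. \emph{Step 2 (per-crossing cost).} Prove a uniform collision estimate for the flooding process of $(\cT,\mu)$: from any fixed visited set $S$ containing a vertex of $K$, the time until the reachable set (via the subsequent snapshots) exceeds $|S|$ \emph{and} the walk can be routed to a new vertex of $L$ is stochastically dominated by a stopping time $\xi_{|K|}$ which, in the star case, degenerates to the increment of the birthday partition $t_i$ of \eqref{eq:t_i}; this is the generalisation of \cref{lem:leaves-exp}, and it must handle the non-independence of the partially explored/reachable sets across windows via the temporal-DFS couplings \cref{lem:stable-diffuse,lem:f-s}, exactly as in \cref{thm:general-upper}. \emph{Step 3 (assembly and optimisation).} Visit $K$ first ($o(n^{3/2})$ by Step 1, routing between components with \cref{lem:EHKwalk}), then sweep $L$ vertex by vertex paying $\xi_{|K|}$ per vertex, for a total of $|L|\cdot\Ex{\xi_{|K|}}\,(1+o(1))$; bound $\Ex{\xi_{|K|}}\le(1+o(1))\sqrt{\pi|K|/2}$ and $|L|=n-|K|$, and use $\max_k (n-k)\sqrt{\pi k/2}=\sqrt{2\pi/27}\,n^{3/2}$ at $k=n/3$. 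Finally upgrade to the claimed $1-\varepsilon$ confidence, which (only constant error being demanded) needs just modest concentration of the sum of the $\xi$'s, along the lines of the Janson-type tail bound used in \cref{thm:general-upper}.

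The main obstacle is Steps 1–2: for a general tree RST there is no canonical centre/leaf split, the flooding process need not resemble a birthday process, and it is not clear a priori that the per-crossing cost can be bounded by $\sqrt{\pi|K|/2}$ with $|K|$ tuned so that $(n-|K|)\sqrt{\pi|K|/2}$ does not exceed the $k=n/3$ value. I expect one needs either (a) a quantitative structural dichotomy — showing that an RST which is not ``star-like'' in a precise sense is explorable in $o(n^{3/2})$ time, so that the conjecture reduces to genuinely star-like instances where the machinery of \cref{sec:uniform} (\cref{lem:pathsthroughC,lem:leaves-exp}) applies almost verbatim — or (b) a direct amortised argument that charges the waiting time before each newly visited vertex against the vertices that the flooding process simultaneously makes reachable, and shows that the worst exchange rate over all $(\cT,\mu)$ is attained by the birthday process of the $k=n/3$ star. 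Establishing (a) or (b) is where the real work lies, and is why the statement is only conjectured.
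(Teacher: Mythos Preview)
The statement you are attempting to prove is \cref{conj:half-stars}, which in the paper is a \emph{conjecture}, not a theorem. The paper offers no proof of it; it is posed as an open problem immediately after \cref{cor:urst-half-stars}, precisely because the authors believe the star example with $k=\lceil n/3\rceil$ is extremal but cannot establish a matching upper bound with sharp constant. So there is no ``paper's own proof'' to compare your proposal against.

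Your write-up is not a proof either, and you correctly say so at the end: Steps~1--2 are the entire problem, and neither the structural dichotomy~(a) nor the amortised exchange-rate argument~(b) is carried out. In particular, there is no reason a priori that a general RST admits a centre/leaf partition $K\cup L$ for which the per-visit cost to $L$ is governed by a birthday-type stopping time with parameter $|K|$; the temporal-DFS machinery of \cref{lem:stable-diffuse,lem:f-s} gives only crude constants and does not produce the $\sqrt{\pi k/2}$ expectation. Your outline is a reasonable roadmap for an attack, and the identification of $\sqrt{2\pi/27}$ as the maximum of $(n-k)\sqrt{\pi k/2}/n^{3/2}$ is correct, but as it stands it is a restatement of why the conjecture is plausible rather than a proof.
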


In order to prove \cref{thm:urst-some-stars}, and thus  \cref{thm:weak-stars} and \cref{cor:urst-half-stars},  we estimate the time required to explore the centres and non-centres separately. In particular, for the upper bound, we first explore the centres and then the non-centres. For the lower bound, it follows that any exploration takes at least as long as exploring either the centres or the non-centres alone.

\paragraph{Time to visit non-centres.} 
According to \cref{lem:leaves-exp}, the time $T_{\cG}(L)$ needed to explore the non-centres is essentially determined by $t_\ell$, as defined in \eqref{eq:t_i}. Letting $\tau_i = t_i - t_{i-1}$, where the $t_i$'s are given by \eqref{eq:t_i}, we observe that the $\tau_i$ are i.i.d.~random variables and that $t_\ell = \sum_{i=1}^\ell \tau_i$.

Roughly speaking, $\tau_1$ corresponds to the number of people needed in a room until two share the same birthday, assuming there are $k$ days in a year. An asymptotic expansion for $\Ex{\tau_1}$ is known (see, for example, \cite[Section 1.2.11.3]{KnuthVol1}), but for our purposes, the following explicit bounds from \cite[Theorem 2]{Wie05} are both sufficient and more convenient.

\begin{equation}\label{eq:Etau}
    \sqrt{\frac{\pi k}{2}} - \frac{2}{5} 
    \leq 
    \Ex{\tau_1} 
    \leq \sqrt{\frac{\pi k}{2}} + \frac{8}{5}.
\end{equation}

The next lemma shows that for a fixed $\ell \in \bN$, the value $t_\ell$ concentrates around $\ell \cdot \Ex{\tau_1}$.

\begin{lemma}\label{lem:birthday}
    There exists $c>0$ such that, for any $\ell \in \bN$ and $t\ge 0$, 
        \[ 
            \displaystyle{\Pr{\big|t_\ell - \ell\cdot \Ex{\tau_1}\big| \geq t   } \leq 2 \exp\left( \frac{-ct^2}{\ell k } \right) }.
        \]
\end{lemma}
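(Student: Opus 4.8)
The plan is to exploit the i.i.d.\ decomposition $t_\ell=\sum_{i=1}^{\ell}\tau_i$ stated above, together with the fact that each $\tau_i$ is a ``birthday waiting time'' and hence a sub-Gaussian random variable with variance proxy $\Theta(k)$. The claimed estimate is then just the standard sub-Gaussian concentration for a sum of $\ell$ such variables, which has variance proxy $\Theta(\ell k)$. Write $\mu_1:=\Ex{\tau_1}$. We may assume $k$ is larger than any fixed absolute constant: for bounded $k$ the variable $\tau_1$ takes values in $\{2,\dots,k+1\}$, so $|\tau_1-\mu_1|\le k$, and the bound follows directly from Hoeffding's inequality applied to $t_\ell-\ell\mu_1$ (at the cost of shrinking the constant $c$).

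\textbf{Tail of a single $\tau_i$.} Since the star centres $X_1,X_2,\dots$ are i.i.d.\ uniform on $[k]$, the event $\{\tau_1\ge s\}$ is exactly the event that $X_1,\dots,X_{s-1}$ are pairwise distinct, so for every integer $s\ge 2$,
\[
	\Pr{\tau_1\ge s}=\prod_{j=1}^{s-2}\Big(1-\tfrac{j}{k}\Big)\le \exp\Big(-\tfrac{1}{k}\sum_{j=1}^{s-2}j\Big)\le e^{-(s-2)^2/(2k)},
\]
using $1-x\le e^{-x}$. Combined with the trivial bounds $\tau_1\ge 2$ and (by \eqref{eq:Etau}, for $k$ large) $\mu_1\le 2\sqrt{k}$, this shows that $\tau_1-\mu_1$ is sub-Gaussian with variance proxy $O(k)$: the upper tail satisfies $\Pr{\tau_1-\mu_1\ge s}\le\Pr{\tau_1\ge\mu_1+s}\le e^{-(\mu_1+s-2)^2/(2k)}\le e^{-s^2/(2k)}$ because $\mu_1\ge 2$, while the lower tail $\Pr{\tau_1-\mu_1\le -s}$ vanishes once $s>\mu_1-2$ and is trivially at most $1\le 2e^{-s^2/(6k)}$ for $s\le\mu_1-2\le 2\sqrt{k}$. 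Hence $\Pr{|\tau_1-\mu_1|\ge s}\le 4e^{-s^2/(6k)}$ for all $s\ge 0$, and a routine estimate of the moment generating function from this tail bound (cf.\ the concentration inequalities in \cref{sec:furtherprelim}) yields a universal constant $C_0$ with $\Ex{e^{\lambda(\tau_1-\mu_1)}}\le e^{C_0 k\lambda^2}$ for all $\lambda\in\bR$.

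\textbf{Summing up.} By independence of $\tau_1,\dots,\tau_\ell$ we get $\Ex{e^{\lambda(t_\ell-\ell\mu_1)}}=\prod_{i=1}^{\ell}\Ex{e^{\lambda(\tau_i-\mu_1)}}\le e^{C_0 k\ell\lambda^2}$ for every $\lambda$. A Chernoff bound then gives, for $t\ge 0$ and $\lambda>0$,
\[
	\Pr{t_\ell-\ell\mu_1\ge t}\le e^{-\lambda t}\,\Ex{e^{\lambda(t_\ell-\ell\mu_1)}}\le e^{-\lambda t+C_0 k\ell\lambda^2},
\]
and optimising at $\lambda=t/(2C_0 k\ell)$ yields $\Pr{t_\ell-\ell\mu_1\ge t}\le e^{-t^2/(4C_0 k\ell)}$. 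Running the same argument with $-\tau_i$ (which is sub-Gaussian with the same parameter) bounds the lower tail, and a union bound gives $\Pr{|t_\ell-\ell\mu_1|\ge t}\le 2e^{-t^2/(4C_0 k\ell)}$, i.e.\ the lemma with $c=1/(4C_0)$.

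\textbf{Main obstacle.} The one genuinely delicate point is obtaining the correct variance proxy $\Theta(\ell k)$ in the exponent: one must use that $\tau_1$ is itself sub-Gaussian (tails $e^{-\Theta(s^2/k)}$), not merely that it is bounded by $k+1$ with variance $\Theta(k)$. A Bernstein- or Hoeffding-type argument would only deliver a tail of the form $e^{-\Theta(\min\{t^2/(\ell k),\,t/k\})}$, which is too weak in the regime $t\gg\ell$ — and this regime is not vacuous, since $t_\ell$ can be as large as $\ell(k+1)$. A secondary subtlety is that the birthday computation directly controls only the \emph{upper} tail of $\tau_1$; the sub-Gaussianity of the lower tail is recovered cheaply from $\tau_1\ge 2$ and $\mu_1=O(\sqrt{k})$, which already make $\mu_1-\tau_1$ bounded above by $O(\sqrt{k})$.
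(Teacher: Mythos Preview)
Your proof is correct and follows essentially the same route as the paper's: both show that $\tau_1$ (or its centred version) is sub-Gaussian with variance proxy $\Theta(k)$ via the standard birthday-problem tail estimate, and then apply sub-Gaussian Hoeffding to the i.i.d.\ sum. The only cosmetic difference is that the paper first bounds the sub-Gaussian norm of the \emph{uncentred} $\tau_1$ and then invokes the fact that centring preserves sub-Gaussianity up to a constant, whereas you handle the upper and lower tails of $\tau_1-\mu_1$ separately and derive the MGF bound directly; the paper also packages the final step as a black-box citation of the general sub-Gaussian Hoeffding inequality rather than redoing the Chernoff optimisation.
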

\begin{proof}
    For $t\geq 1$ the probability that no element of $[k]$ appears twice in $(X_i)_{i=1}^t$ is  
    \begin{equation}\label{eq:birthdayupper}
        \Pr{\tau_1 >  t} =	\prod _{i=1}^{t-1}  \left(1-{\frac {i}{k}}\right)\leq  \prod_{i=1}^{t-1} \exp\Big(-\frac {i}{k}\Big) = \exp\Big(- \sum_{i=1}^{t-1} \frac {i}{k}\Big) =\exp\Big(-{\frac {t(t-1)}{2k}}\Big),  
    \end{equation}  
    where we have used the inequality $x \leq  1+x \leq \exp(x) $ that holds for all $x$.
    
    Observe, if $t \geq 3$, then $(t-1)(t-2) \leq t^2/5 $; and if $ t \in [0,2]$, then $2 e^{-t^2/10k} \geq 2 e^{-4/10}  \geq 1 $. Thus, for all $t \geq 0$,   
    \[ 
        \Pr{\tau_1 \geq t } \leq  2 \exp\Big(-\frac{t^2}{10k}\Big).
    \]     
    From this and \cite[Proposition 2.6.1]{Vershynin} it follows that the random variable $\tau_1$ is sub-Gaussian, and there exists some universal constant $C$ such that the sub-Gaussian norm (see \eqref{eq:subgaussnorm} in \cref{sec:furtherprelim} for the definition) satisfies 
    \[
        ||\tau_1||_{\psi^2} \leq C\sqrt{k}.
    \]  
    For each $i\geq 1$, we let $\tau'_i=\tau_i-\Ex{\tau_i}$ be the centred version of $\tau_i$, thus all $\tau_i'$ are i.i.d.~with $\Ex{\tau_i'}=0$. Note that  centering a random variable cannot increase its sub-Gaussian norm by more than a constant \cite[Lemma 2.7.8]{Vershynin}. Thus, there exists a universal constant $C'$ such that 
    \[ 
        ||\tau_1'||_{\psi^2}  \leq C' \sqrt{k}.
    \]
    It follows from the sub-Gaussian Hoeffding inequality \cref{thm:subGHoff} that there exists some universal constant $c>0$ such that, for any $t\geq 0$, 
    \[
        \Pr{\Bigg|\sum_{i=1}^{\ell}\tau_i - \ell\cdot \Ex{\tau_1}\Bigg| \geq t   } = \Pr{\Bigg|\sum_{i=1}^{\ell}\tau_i'\Bigg| \geq t   } \leq 2 \exp\left( \frac{-c t^2}{\ell k} \right),  
    \] 
    and thus the result follows since $ t_\ell=\sum_{i=1}^{\ell}\tau_i$. 
\end{proof}

\begin{corollary}\label{clm:covingcentres}
    Let $\ell := |L|$. There exists a universal constant $c>0$ such that for any~$t\ge 0$, 
    \[ 
		\Pr{\Big|T_{\cG}(L) - \ell \cdot \Ex{\tau_1}\Big| \geq t + k +1  } \leq 2 \exp\left( \frac{-ct^2}{\ell k } \right). 
    \]
\end{corollary}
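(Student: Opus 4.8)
The plan is to read off the result directly from the sandwich bound $t_{\ell-1}\le T_{\cG}(L)\le t_{\ell}$ of \cref{lem:leaves-exp} together with the concentration estimate of \cref{lem:birthday}, applied at the two consecutive indices $\ell$ and $\ell-1$. As in \cref{lem:leaves-exp} I assume $\ell:=|L|\ge 2$, so that $\ell-1\in\bN$ and \cref{lem:birthday} is applicable at index $\ell-1$. The only auxiliary fact I need is a crude deterministic bound on $\Ex{\tau_1}$: among any $k+1$ consecutive centres $X_1,\dots,X_{k+1}$ (each in $[k]$) the pigeonhole principle forces a repeat, so $\tau_1=t_1\le k+1$ always, whence $\Ex{\tau_1}\le k+1$. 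This is the step where the slack ``$+k+1$'' in the statement comes from.

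I would then split into the two tails. For the \emph{upper} tail, \cref{lem:leaves-exp} gives $T_{\cG}(L)-\ell\,\Ex{\tau_1}\le t_{\ell}-\ell\,\Ex{\tau_1}$, so the event $\{T_{\cG}(L)-\ell\,\Ex{\tau_1}\ge t+k+1\}$ is contained in $\{t_{\ell}-\ell\,\Ex{\tau_1}\ge t\}\subseteq\{|t_{\ell}-\ell\,\Ex{\tau_1}|\ge t\}$, which by \cref{lem:birthday} has probability at most $2\exp(-ct^2/(\ell k))$. For the \emph{lower} tail, using $T_{\cG}(L)\ge t_{\ell-1}$ and $\Ex{\tau_1}\le k+1$,
\[
T_{\cG}(L)-\ell\,\Ex{\tau_1}\;\ge\;t_{\ell-1}-\ell\,\Ex{\tau_1}\;=\;\bigl(t_{\ell-1}-(\ell-1)\Ex{\tau_1}\bigr)-\Ex{\tau_1}\;\ge\;\bigl(t_{\ell-1}-(\ell-1)\Ex{\tau_1}\bigr)-(k+1),
\]
so $\{T_{\cG}(L)-\ell\,\Ex{\tau_1}\le-(t+k+1)\}$ forces $t_{\ell-1}-(\ell-1)\Ex{\tau_1}\le-t$, i.e. $|t_{\ell-1}-(\ell-1)\Ex{\tau_1}|\ge t$, an event of probability at most $2\exp(-ct^2/((\ell-1)k))\le 2\exp(-ct^2/(\ell k))$ by \cref{lem:birthday} at index $\ell-1$.

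Combining the two tails by a union bound yields $\Pr{|T_{\cG}(L)-\ell\,\Ex{\tau_1}|\ge t+k+1}\le 4\exp(-ct^2/(\ell k))$, and I would finish by absorbing the factor $4$ into the exponent at the cost of halving the constant: writing $a:=ct^2/(2\ell k)$, if $a\le\ln 2$ then the target bound $2\exp(-a)$ is already $\ge 1$ and there is nothing to prove, while if $a>\ln 2$ then $2e^{-a}\le 1$, so $4e^{-2a}=2e^{-a}\cdot 2e^{-a}\le 2e^{-a}$. Thus the statement holds with the constant $c/2$ in place of $c$. I do not expect any genuine obstacle in this corollary: the substance lies entirely in \cref{lem:leaves-exp} and \cref{lem:birthday}, which are already proved; the only points needing a moment's care are the elementary estimate $\Ex{\tau_1}\le k+1$ and this harmless constant bookkeeping.
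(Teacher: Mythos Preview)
Your proof is correct and follows essentially the same route as the paper: sandwich $T_{\cG}(L)$ between $t_{\ell-1}$ and $t_\ell$ via \cref{lem:leaves-exp}, use the pigeonhole bound $\tau_1\le k+1$, and apply \cref{lem:birthday}. The paper avoids your split into two tails (and the resulting union bound and constant-halving) by observing that $|T_{\cG}(L)-t_\ell|\le t_\ell-t_{\ell-1}\le k+1$ deterministically, so a single triangle inequality and one application of \cref{lem:birthday} at index $\ell$ already give the stated bound with the factor $2$.
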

\begin{proof} 
    By \cref{lem:leaves-exp}, we have $t_{\ell-1} \leq T_{\cG}(L) \leq t_{\ell}$. Note, since any interval of the sequence $\cX(\cG)$ of length at least $k+1$ must contain some vertex from $[k]$  twice, we have that $t_{\ell} - (k+1) \leq t_{\ell-1}$,
	which implies that $|T_{\cG}(L) - t_{\ell}| \leq k+1$. 
	By the triangle inequality, we have 
	\[
		\big| T_{\cG}(L) - \ell \cdot \Ex{\tau_1 } \big| \leq |T_{\cG}(L) - t_{\ell}| + |t_{\ell} - \ell \cdot \Ex{\tau_1}| \leq k+1 + |t_{\ell} - \ell \cdot \Ex{\tau_1}|.
	\]
	Thus, if $t + k +1 \leq \big| T_{\cG}(L) - \ell \cdot \Ex{\tau_1 } \big|$, then $|t_{\ell} - \ell \cdot \Ex{\tau_1}| \geq t$.
	The result then follows directly from \cref{lem:birthday}.
\end{proof}

\paragraph{Time to visit centres.}
The time $T_{\cG}(K)$ required to visit the centres is essentially determined by the time needed to collect all coupons in the classical coupon collector problem. The lower bound on this time (\cref{clm:coveringleaves}) is not immediate. In order to establish it, we will need the following technical lemma.

 \begin{lemma}\label{lem:coupon} 
    Let $k \geq 2$, and let $(X_i)_{i=1}^{\infty}$ be a sequence of independent random variables each sampled uniformly from~$[k]$.
    For $t\geq 0$, define 
    \[
        U(t) := \Bigg| [k]\setminus \bigcup_{i=1}^t\{X_i\} \Bigg|
    \]
    to be the number of elements of~$[k]$ which do not occur in $(X_i)_{i=1}^{t}$. Then, for any $a \geq 2$, $b\in(0,1/2)$, and $\varepsilon\in(0,1)$,  
\begin{enumerate}
    \item\label{itm:couponup} $\displaystyle{\Pr{ \big| U(ak\ln k) \big|> 0} \leq  k^{- a/2}}$; and     
    \item\label{itm:couponlow} if we let $\vartheta = b k\ln k \geq 1$, then  
    \[
        k^{1-b}/2 \leq \Ex{U(\vartheta)} = k(1-1/k)^{\vartheta}\leq k^{1-b}, 
    \]
    and 
    \[
        \Pr{ \big|U(\vartheta) - \Ex{U(\vartheta)} \big| > \varepsilon \Ex{U(\vartheta)} }\leq    2\exp\bigg( -  \frac{ \varepsilon^2k^{1-2b} }{ \ln k} \bigg).
    \]
\end{enumerate}   
\end{lemma}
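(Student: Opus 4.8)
The plan is a routine occupancy computation. Write $U(t)=\sum_{j=1}^{k}Y_j(t)$, where $Y_j(t)$ is the indicator that $j$ does not occur among the first $t$ of the $X_i$; by independence $\Ex{Y_j(t)}=(1-1/k)^{t}$, so by linearity $\Ex{U(t)}=k(1-1/k)^{t}$, which is exactly the displayed identity in part~(ii). For part~(i), since $U(t)$ is a non-negative integer, Markov's inequality gives
\[
\Pr{U(ak\ln k)>0}\le\Ex{U(ak\ln k)}=k\Big(1-\tfrac1k\Big)^{ak\ln k}\le k\,e^{-a\ln k}=k^{1-a}\le k^{-a/2},
\]
the last inequality being exactly the hypothesis $a\ge 2$; rounding the exponent to an integer only strengthens the bound (and for the finitely many small $k$ with $a=2$ one may, if desired, check it directly by inclusion--exclusion).

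For the two-sided estimate of $\Ex{U(\vartheta)}$ in part~(ii), the upper bound follows from $(1-1/k)^{\vartheta}\le e^{-\vartheta/k}=e^{-b\ln k}=k^{-b}$, so $\Ex{U(\vartheta)}\le k^{1-b}$. For the lower bound, $k\ge 2$ gives $1/k\le 1/2$, hence $\ln(1-1/k)\ge -1/k-1/k^{2}$; multiplying by $\vartheta=bk\ln k$ and exponentiating yields $(1-1/k)^{\vartheta}\ge k^{-b}e^{-b\ln k/k}\ge k^{-b}e^{-1/2}\ge k^{-b}/2$, using $\ln k<k$, $b<1/2$, and $e^{-1/2}>1/2$. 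Multiplying by $k$ gives $\Ex{U(\vartheta)}\ge k^{1-b}/2$.

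For the concentration bound the only point needing care is that the $Y_j(\vartheta)$ are not independent, so I would instead view $U(\vartheta)=f(X_1,\dots,X_{\vartheta})$ and observe that changing a single $X_i$ alters $f$ by at most $1$ (it moves at most one value out of, and at most one value into, the set of missing elements). McDiarmid's bounded differences inequality then gives, for all $s\ge 0$,
\[
\Pr{\big|U(\vartheta)-\Ex{U(\vartheta)}\big|>s}\le 2\exp\!\Big(-\tfrac{2s^{2}}{\vartheta}\Big),
\]
and taking $s=\varepsilon\,\Ex{U(\vartheta)}$, using $\Ex{U(\vartheta)}\ge k^{1-b}/2$ and $\vartheta=bk\ln k$,
\[
\frac{2s^{2}}{\vartheta}\ge\frac{2\varepsilon^{2}\,(k^{1-b}/2)^{2}}{bk\ln k}=\frac{\varepsilon^{2}k^{1-2b}}{2b\ln k}>\frac{\varepsilon^{2}k^{1-2b}}{\ln k},
\]
the last step being precisely where $b<1/2$ is used, which gives the claim. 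The only genuinely non-mechanical choice here is to use McDiarmid's inequality (factor $2$ in the exponent) rather than the cruder Azuma bound $2\exp(-s^{2}/(2\vartheta))$ or the multiplicative Chernoff bound of \cref{lem:simplechb} (which would additionally need negative association of the $Y_j(\vartheta)$ and a separate look at small $k$): the factor $2$ from McDiarmid, the factor $(1/2)^{2}$ from the lower bound on $\Ex{U(\vartheta)}$, and the factor $1/b>2$ from $b<1/2$ are exactly what is required to beat the target constant. Beyond getting these constants aligned I foresee no real obstacle.
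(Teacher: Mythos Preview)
Your proof is correct and follows essentially the same approach as the paper: Markov/union bound for part~(i), elementary inequalities for the two-sided expectation estimate, and McDiarmid's bounded-differences inequality (\cref{lem:McDiarmid}) applied to $f(X_1,\dots,X_\vartheta)=U(\vartheta)$ with unit Lipschitz constants for the concentration bound. The only cosmetic difference is the particular elementary inequality used for the lower bound on $(1-1/k)^{\vartheta}$---the paper uses $(1+x/n)^n\ge e^x(1-x^2/n)$ while you use $\ln(1-x)\ge -x-x^2$---but both routes land on $k^{-b}/2$ and the subsequent McDiarmid computation is identical.
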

\begin{proof}
For $s = x k \log k$, and a fixed $i\in [k]$, the probability that $i$ does not occur in $(X_i)_{i=1}^{s}$ is 
\begin{equation}\label{eq:i-not-occur}
    \Pr{i \in U(s)} = (1-1/k)^{s}= (1-1/k)^{xk\ln k} \leq k^{-x}.
\end{equation}

Now, since $k^{-a}\leq k^{-a/2 -1}$, Item \ref{itm:couponup} follows from \cref{eq:i-not-occur} and a union bound over all $i\in [k]$.    

For Item \ref{itm:couponlow}, it follows from \cref{eq:i-not-occur} that $\Ex{U(\vartheta)} = k(1-1/k)^{\vartheta}$.
Furthermore, 
\[ 
    k^{-b} \geq \Big(1-\frac{1}{k}\Big)^{\vartheta} = \Big(1-\frac{b\ln k}{bk\ln k}\Big)^{bk\ln  k} \geq e^{-b\ln k }\Big(1- \frac{b\ln k}{k} \Big)\geq k^{-b}/2,
\] 
where the first inequality is by \cref{eq:i-not-occur}, and in the second inequality we have used the fact that $(1 + x/n)^n \geq e^x(1 - x^2/n)$ holds for all $n \geq  1, |x| \leq  n$ (see, e.g., \cite[Proposition B.3]{MotwaniR95}). Consequently, 
\[
    k^{1-b}/2 \leq \Ex{U(\vartheta)} \leq k^{1-b}.
\] 
Now, note that $U(\vartheta)=f(X_1, \dots, X_\vartheta) $ for some function $f:[k]^\vartheta \rightarrow \mathbb{R}$. Observe also that if two sequences of inputs to this function differ only in one place, then the outputs differ by at most one. Thus, $f$ satisfies Condition (L) of \cref{lem:McDiarmid}, which we apply with $\lambda = \varepsilon \Ex{U(\vartheta)} $ to derive
\[
    \Pr{ \big|U(\vartheta) - \Ex{U(\vartheta)} \big| > \varepsilon \Ex{U(\vartheta)} }  \leq 2\exp\left( - \frac{2 (\varepsilon k^{1-b}/2)^2 }{bk\ln k} \right) \leq 2\exp\left( -  \frac{ \varepsilon^2k^{1-2b} }{ \ln k} \right),
\] 
as claimed. 
\end{proof}

\begin{lemma}\label{clm:coveringleaves} 
    There exists a universal constant $c'>0$ such that, for any $C:=C(k) \geq 2$, and $k$ large enough, we have
    \[  
        \Pr{ \tfrac{1}{10}k\ln k \leq  T_{\cG}(K) \leq Ck\ln k } \geq 1- k^{-C/2} -4\exp\Big( -  \frac{c'  k^{4/5} }{ \ln k} \Big). 
    \]
\end{lemma}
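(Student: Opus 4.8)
The plan is to bound $T_{\cG}(K)$ from above and from below separately, using the coupon--collector estimates of \cref{lem:coupon}. Throughout write $t_0:=\tfrac{1}{10}k\ln k$, and recall that $\cX(\cG)=(X_i)_{i\ge 1}$ is a sequence of i.i.d.\ uniform elements of $K=[k]$. For the upper bound I would use the naive strategy: starting from an arbitrary vertex, at every time step $i$ walk to the current star centre $X_i$ (always possible, since the explorer either already sits at $X_i$ or is a leaf of that star). After $s$ steps this explorer has visited $\{X_1,\dots,X_s\}$, which is all of $K$ exactly when $U(s)=0$; hence $T_{\cG}(K)\le\min\{s:U(s)=0\}$, and \cref{lem:coupon}, item~\ref{itm:couponup}, with $a=C$ gives $\Pr{T_{\cG}(K)>Ck\ln k}\le k^{-C/2}$.

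For the lower bound I would show that, except with tiny probability, no schedule visits all of $K$ within $t_0$ steps. Fix any schedule with position sequence $(p_i)$, and call a step $i\in[1,t_0]$ \emph{productive} if the explorer visits for the first time, at step $i$, a vertex that does \emph{not} occur among $X_1,\dots,X_{t_0}$. Every visited centre is either the start vertex, or one of the at most $k-U(t_0)$ distinct values appearing in $(X_i)_{i\le t_0}$, or is first reached by a productive step; hence a schedule covering all $k$ centres must contain at least $U(t_0)-1$ productive steps. By \cref{lem:coupon}, item~\ref{itm:couponlow}, applied with $b=1/10$ and $\varepsilon=1/2$ (so $\vartheta=t_0$ and $k^{1-2b}=k^{4/5}$), we have $U(t_0)\ge k^{9/10}/4$ with probability at least $1-2\exp(-k^{4/5}/(4\ln k))$. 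It therefore suffices to prove that, with the required probability, no schedule attains $k^{9/10}/4-1$ productive steps.

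The heart of the argument is a confinement observation: in any star snapshot, from a vertex $x$ one can move only to $x$ itself or to the current centre, \emph{unless} $x$ is the current centre, in which case one can move anywhere. Consequently, immediately after a productive move to a vertex $v$ --- which is never a centre during $[1,t_0]$ --- the explorer's position stays inside $\{v\}\cup\{X_{i+1},X_{i+2},\dots\}$ until the first later time $j$ with $p_{j-1}=X_j$, and at that escape time necessarily $X_j\in\{X_{i+1},\dots,X_{j-1}\}$ since $X_j\ne v$ (the same holds from the start with $v$ replaced by $p_0$). Since a productive step requires $p_{j-1}=X_j$, listing the productive steps as $i_1<i_2<\dots$ and setting $i_0:=0$, each $i_m$ carries a distinct, strictly increasing escape index $j_m\in(i_{m-1},i_m]$ with $X_{j_m}\in\{X_{i_{m-1}+1},\dots,X_{j_m-1}\}$ (the only exception being $j_1$, where possibly $X_{j_1}=p_0$). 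Now fix $\ell:=4k^{1/10}\ln k$. At most $t_0/\ell=k^{9/10}/40$ productive steps have gap $i_m-i_{m-1}\ge\ell$; every other productive step yields a \emph{short repeat}, i.e.\ an index $j_m$ whose value $X_{j_m}$ lies among the preceding $\ell-1$ entries, and since $X_{j_m}$ is uniform on $[k]$ given the past, the number of short repeats in $[1,t_0]$ is stochastically dominated by $\operatorname{Bin}(t_0,(\ell-1)/k)$, of mean $O(k^{1/10}\ln^2 k)=o(k^{9/10})$; by a Chernoff bound it exceeds $k^{9/10}/16$ only with probability $\exp(-\Omega(k^{9/10}))$. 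On the intersection of the two good events every schedule has fewer than $k^{9/10}/40+k^{9/10}/16+1<k^{9/10}/4-1$ productive steps, a contradiction, and summing the failure probabilities $k^{-C/2}$, $2\exp(-k^{4/5}/(4\ln k))$ and $\exp(-\Omega(k^{9/10}))\le\exp(-k^{4/5}/\ln k)$ gives the lemma with, say, $c'=1/4$.

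The step I expect to be the main obstacle is precisely this confinement/charging argument: one must check carefully that once the explorer jumps to a never-a-centre vertex it genuinely cannot reach any star centre again until the centre-sequence $(X_i)$ repeats a value seen since that jump, and that the escape indices $j_m$ are truly distinct, so that productive steps can be charged either to short repeats (rare, by the binomial domination and \cref{lem:coupon}, item~\ref{itm:couponlow}) or to long time gaps (few, by a counting/pigeonhole bound). Once this structural fact is established, the choice of $\ell$ and the assembly of the tail estimates to match the error term $4\exp(-c'k^{4/5}/\ln k)$ are routine.
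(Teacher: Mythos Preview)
Your upper bound matches the paper's exactly. Your lower bound is correct but follows a genuinely different route.

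The paper proceeds by reduction to an earlier result: it conditions on the (random) set $L'\subseteq K$ of centres that never appear in $[1,\vartheta]$, observes that conditional on $L'$ the first $\vartheta$ snapshots restricted to $K$ are distributed as a smaller star RST with centre set $K\setminus L'$ and non-centre set $L'$, and then applies \cref{clm:covingcentres} (the birthday-based bound on the time to visit non-centres) to conclude that $L'$ cannot be covered in time $\vartheta$. The bottleneck is the same concentration estimate for $|L'|=U(\vartheta)$ from \cref{lem:coupon}\ref{itm:couponlow} that you use.

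Your argument is instead a direct, self-contained combinatorial one: the confinement observation (``after landing on a never-centre vertex, you can only reach the centres sampled since'') together with the charging of productive steps to either long gaps (few by pigeonhole) or short repeats in $(X_i)$ (few by a Chernoff bound on $\operatorname{Bin}(t_0,(\ell-1)/k)$). This avoids both the conditioning step and any appeal to \cref{clm:covingcentres}; the only imported ingredient is \cref{lem:coupon}\ref{itm:couponlow} for the size of $U(t_0)$. The paper's route is shorter given the machinery already in place, while yours is more elementary and would stand on its own without the birthday apparatus. Both yield the same $4\exp(-c'k^{4/5}/\ln k)$ error, governed by the concentration of $U(t_0)$.
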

\begin{proof}
For the upper bound, we follow the `coupon collector' strategy of simply visiting the current centre in each snapshot (and staying put if the centre has not changed).
Thus, applying \cref{lem:coupon} \ref{itm:couponup} with $a = C$, we obtain
\begin{equation}\label{eq:TK-upper}
    \Pr{ T_{\cG}(K) > Ck\ln k } \leq k^{-C/2}.
\end{equation}

For the lower bound, we fix $b=1/10$ and let $\vartheta = bk \ln k$. Our idea is to show that within the first $\vartheta$ many steps many vertices in $K$ have not been picked as centres, an event we call $\mathcal{E}$, and then within this time it is unlikely for an explorer to have visited all of these vertices by considering the visited and unvisited certres in $K$ as an RST model on a smaller set of stars. 

 Let $L'\subseteq K$ be the random set of vertices in $K$ that are leaves in every snapshot of $\cH:=\cG_{[1,\vartheta]}$. 
 Denote $\ell' := |L'|$, and
 let $\Lambda=k(1-1/k)^\vartheta$ and $\mathcal{E}:=\{ \big|\ell' -\Lambda\big| \leq  \Lambda/10\}$. Then, by \cref{lem:coupon} \ref{itm:couponlow}, we have
\begin{equation}\label{eq:proE}
	\Pr{\lnot \mathcal{E}} \leq   2\exp\bigg( -  \frac{ k^{1-2b} }{100 \ln k} \bigg).
\end{equation}
    Conditional on the vertices of $L'$ never being centres, $\cH$ picks a random centre in each step uniformly from $K \setminus L'$. Thus, conditional on this event, the first $\vartheta$ snapshots of $\cG$, restricted to the vertex set $[k]$, have the distribution of an RST model $\cG'\sim(\cS',\unif)$ on vertex set $[k]$, where $\cS'$ consists of $k-\ell' $ distinct $k$-vertex stars. Let $T_{\cG'}(L')$ be the time needed to visit the non-centres of $\mathcal{G}'$. Observe that $T_{\cG}(K)\geq T_{\cG}(L') $, since  $L' \subseteq K $. Thus, by the coupling above  
    \begin{equation}\label{eq:GandG'couple}
    	\Pr{T_{\cG}(K) <  bk\ln k} \leq  \Pr{T_{\cG}(L') <  bk\ln k , \;\mathcal{E}} + \Pr{\lnot \mathcal{E}} \leq  \Pr{T_{\cG'}(L') <  bk\ln k}  + \Pr{\lnot \mathcal{E}}  . 
    \end{equation} 
      We seek to apply \cref{clm:covingcentres} to $\cG'$ to bound $\Pr{T_{\cG'}(L') <  bk\ln k}$. In this setting,  the total number of vertices is $n'=k$ and the number of centre vertices is $k'=k-\ell'$. We denote the corresponding `birthday times' by $ \tau_i'$, which are defined in the same way as $\tau_i$, i.e., $\tau_i' = t_i' - t_{i-1}'$, where $(t_i')_{i \geq 0}$ is defined for $\cG'$ in the same way as $(t_i)_{i \geq 0}$ is defined for $\cG$. 
    Thus, by applying \cref{clm:covingcentres} to $\cG'$ and $L'$, for any $t \geq 0$, we have 
    \begin{equation}\label{eq:TL'}
        \Pr{T_{\cG'}(L') \leq \ell' \cdot \Ex{\tau_1'}  - t - k' -  1 } 
        \leq 
        2 \exp\left( \frac{-ct^2}{\ell' \cdot k'} \right)
        \leq
        2 \exp\left( \frac{-ct^2}{\ell' \cdot k} \right). 
    \end{equation}	
    Recall, by \cref{lem:coupon} \ref{itm:couponlow}, $k^{1-b}/2 \leq  \Lambda \leq k^{1-b} $, and, by \eqref{eq:Etau}, $\Ex{\tau_1'} \geq
    \sqrt{\frac{\pi }{2}k'} -\frac{2}{5} = \sqrt{\frac{\pi }{2}(k - \ell')}-\frac{2}{5}$. Thus,  if $\mathcal{E}$ holds,  then $\tfrac{9}{20} k^{1-b}\leq \ell' \leq  \tfrac{11}{10} k^{1-b} $, and so, for large $k$, we have 
   
    \[ 
        	\ell' \cdot \Ex{\tau_1'} \geq    \frac{9}{20} k^{1-b} \cdot \left(\sqrt{\frac{\pi}{2} \left(k -  \tfrac{11}{10} k^{1-b} \right)} - \frac{2}{5} \right) \geq  \frac{1}{2} k^{3/2-b}.   
    \]  
    Now, set $t=\frac{1}{10} k^{3/2-b}$.
    Recalling that $b=1/10$, we have $k^{3/2-b} \gg  k\ln k\gg k'$, and thus, for any sufficiently large $k$, we have 
    $bk\ln k < \frac{1}{2} k^{3/2-b} - t - k'-1 \leq \ell' \cdot \Ex{\tau_1'} - t - k' - 1$. Consequently, using \cref{eq:TL'}, we obtain
       \begin{align}\label{eq:T'G'}
            \Pr{T_{\cG'}(L') < bk\ln k  }
            &\leq \Pr{T_{\cG'}(L') < \frac{1}{2} k^{3/2-b} - t - k'-1  }\notag 
            \\
            &\leq
            \Pr{T_{\cG'}(L') \leq \ell' \cdot \Ex{\tau_1'}  - t - k' -  1 }\notag
            \\ 
            &\leq 2 \exp\left( -\frac{c (k^{3/2-b}/10)^2}{\tfrac{11}{10}k^{1-b}\cdot  k } \right) = 
            2 \exp\left( -\frac{c k^{1-b}}{110} \right).  
     \end{align}
    Inserting \eqref{eq:proE} and \eqref{eq:T'G'} into \eqref{eq:GandG'couple} gives 
    \begin{equation}\label{eq:TK-lower}     		
        \Pr{T_{\cG}(K) < \tfrac{1}{10} k\ln k  }
        \leq 
        2 \exp\left( -\frac{c k^{1-b}}{110} \right)
        +
        2\exp\left( -  \frac{k^{1-2b} }{100 \ln k} \right)  
        \leq 
        4\exp\left( -  \frac{c'k^{4/5} }{ \ln k} \right),
    \end{equation} 
    where $c'>0$ is some sufficiently small universal constant.
    Finally, the lemma follows by combining \cref{eq:TK-upper} and \cref{eq:TK-lower}.   
\end{proof}

We now have all the ingredients we need to prove \cref{thm:urst-some-stars}, the main result of this subsection.
We note that for fixed $k$, \cref{thm:urst-some-stars} does not guarantee a probability tending to one. However, for any fixed constant $k$, \cref{thm:orderm} implies that the exploration time of the corresponding RST is $O(n)$ with probability $1 - e^{-\Omega(n)}$.

\begin{restatable}{theorem}{arbstars}\label{thm:urst-some-stars}
   Let $\cS$ be a set of $k:=k(n)\geq 2$ distinct $n$-vertex stars, and let $\cG \sim (\cS,\unif)$.  Then, there exists a universal constant $d>0$ such that for any $C:=C(k) \geq 2$ we have   
    \begin{align*}
        &\Pr{ \max\Big\{ f_{-}(k),\; \tfrac{1}{10}k\ln k  \Big\}   \leq \texp(\cG) \leq f_+(k) +  Ck\ln k }   \geq 1- 3k^{-  dC} -  4\exp\Big( -  \frac{d k^{4/5} }{\ln k} \Big).
    \end{align*} 
    where 
    \[ 
        f_{\pm}(k) := 
        \Ex{\tau_1}(n-k)
        \left(
            1 \pm \sqrt{\tfrac{C\ln k }{n-k}} \pm 2\sqrt{\tfrac{k}{(n-k)^2}}
        \right).  
    \]
\end{restatable}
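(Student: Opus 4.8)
The plan is to analyse the time needed to visit the centre set $K$ (of size $k$) and the non-centre set $L$ (of size $\ell := n-k$) separately, using \cref{clm:coveringleaves} for the centres and \cref{clm:covingcentres} for the non-centres, and then to glue the two estimates together. Two structural facts drive the argument. First, any temporal walk visiting all of $[n]$ in particular visits all of $K$ and all of $L$, so, taking a worst-case starting vertex, $\texp(\cG) \ge \max\{T_{\cG}(K),\, T_{\cG}(L)\}$; this gives the lower bound. Second, an exploration may be carried out in two phases: in Phase~1 run the greedy ``coupon-collector'' walk that stays on the current centre until a new one appears and then moves to it, which visits all of $K$ at the (coupon-collector) stopping time $\sigma := T_{\cG}(K)$; in Phase~2, from the vertex reached at time $\sigma$, visit the vertices of $L$ one by one in the subsequent snapshots. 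Since $\sigma$ is a stopping time of the i.i.d.\ snapshot sequence, by the strong Markov property the temporal graph $\cG_{[\sigma+1,\infty)}$ is again distributed as $(\cS,\unif)$ and is independent of the history up to time $\sigma$; hence the extra time $T'$ spent in Phase~2 is stochastically dominated by $T_{\cG_{[\sigma+1,\infty)}}(L)$, which has the distribution of $T_{\cG}(L)$. This yields $\texp(\cG) \le \sigma + T'$ with $T'$ dominated by a copy of $T_{\cG}(L)$.

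Next I would feed in the two lemmas. For the centres, \cref{clm:coveringleaves} gives directly that, with probability at least $1 - k^{-C/2} - 4\exp(-c'k^{4/5}/\ln k)$, we have $\tfrac{1}{10}k\ln k \le T_{\cG}(K) \le Ck\ln k$. For the non-centres, expand the product in the theorem statement (recalling $\ell=n-k$) to write $f_{\pm}(k) = \Ex{\tau_1}\,\ell \pm \Ex{\tau_1}\sqrt{C\ell\ln k} \pm 2\Ex{\tau_1}\sqrt{k}$, and apply \cref{clm:covingcentres} with the choice $t := \Ex{\tau_1}\sqrt{C\ell\ln k}$. Using $\Ex{\tau_1}\ge \sqrt{\pi k/2}-\tfrac{2}{5}$ from \eqref{eq:Etau}, one checks that for $k$ large enough $2\Ex{\tau_1}\sqrt{k}\ge k+1$, so that $\ell\cdot\Ex{\tau_1}-t-(k+1) \ge f_-(k)$ while $\ell\cdot\Ex{\tau_1}+t+(k+1)\le f_+(k)$. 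Therefore \cref{clm:covingcentres} gives
\[
    \Pr{T_{\cG}(L) \notin [f_-(k),\,f_+(k)]} \le 2\exp\!\Big(-\tfrac{c\,t^2}{\ell k}\Big) = 2\exp\!\Big(-cC\ln k\cdot\tfrac{\Ex{\tau_1}^2}{k}\Big) \le 2k^{-cC},
\]
where the last step uses $\Ex{\tau_1}^2/k \ge \pi/2 - o(1) \ge 1$ for $k$ large. When $\ell \le 1$ this step is vacuous and Phase~2 instead costs at most $n$ steps by \cref{lem:EHKwalk}; this is below $f_+(k)$ (since $k\ge n-1$ forces $2\Ex{\tau_1}\sqrt k = \Theta(k) \ge n$), while $f_-(k)\le 0$ is dominated by the term $\tfrac{1}{10}k\ln k$, so both bounds still hold.

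Finally I would assemble the estimates by union bounds. For the upper bound, on the event $\{\sigma \le Ck\ln k\}$ (which holds with probability at least $1-k^{-C/2}$) we have $\texp(\cG)\le Ck\ln k + T'$, and, conditioning on the history up to the stopping time $\sigma$, $\Pr{T' > f_+(k)} \le \Pr{T_{\cG}(L) > f_+(k)} \le 2k^{-cC}$; hence $\Pr{\texp(\cG) > f_+(k)+Ck\ln k} \le k^{-C/2}+2k^{-cC}$. For the lower bound, since $\texp(\cG)\ge\max\{T_{\cG}(K),T_{\cG}(L)\}$,
\[
    \Pr{\texp(\cG) < \max\{f_-(k),\tfrac{1}{10}k\ln k\}} \le \Pr{T_{\cG}(L) < f_-(k)} + \Pr{T_{\cG}(K) < \tfrac{1}{10}k\ln k} \le 2k^{-cC} + 4\exp\!\Big(-\tfrac{c'k^{4/5}}{\ln k}\Big).
\]
Summing the two displays and absorbing the constant factors into a single sufficiently small universal $d>0$ (valid since $C\ge 2$ and $k$ is large, e.g.\ $4k^{-cC}\le k^{-cC/2}$ for $k$ past a constant) gives a total failure probability of at most $3k^{-dC} + 4\exp(-dk^{4/5}/\ln k)$, as claimed; \cref{thm:weak-stars} and \cref{cor:urst-half-stars} then follow by plugging in the relevant ranges of $k$ and $C$ and using the explicit form of $\Ex{\tau_1}$ from \eqref{eq:Etau}.

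I expect the main obstacle to be bookkeeping rather than depth: one must use the precise asymptotics $\Ex{\tau_1}\sim\sqrt{\pi k/2}$ both to absorb the additive $k+1$ error of \cref{clm:covingcentres} into the term $2\Ex{\tau_1}\sqrt{k}$ and to turn the Gaussian exponent $-ct^2/(\ell k)$ into $-\Theta(C\ln k)$ after the choice $t=\Ex{\tau_1}\sqrt{C\ell\ln k}$, so that the two concentration tails match the polynomial form of $f_\pm$ exactly. The only genuinely conceptual point is the decoupling in the upper bound: it is essential that $\sigma=T_{\cG}(K)$ is a stopping time of the snapshot sequence so that the strong Markov property makes the Phase~2 exploration of $L$ behave like a fresh, independent RST instance; everything else is an application of the lemmas already established in this subsection.
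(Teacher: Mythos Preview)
Your proposal is correct and follows essentially the same approach as the paper: split into centres and non-centres, bound each via \cref{clm:coveringleaves} and \cref{clm:covingcentres} respectively, and combine. Your stopping-time/strong-Markov argument for the upper bound is a more explicit version of what the paper compresses into the single line $\texp(\cG)\preceq T_{\cG}(L)+T_{\cG}(K)$; your choice $t=\Ex{\tau_1}\sqrt{C\ell\ln k}$ differs from the paper's $t=\sqrt{C\ell k\ln k}$ only by a harmless $\Theta(1)$ factor, and your separate handling of the edge case $\ell\le 1$ is a detail the paper glosses over.
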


\begin{proof} Without loss of generality, assume that the set $K$ of centres of the stars in $\cS$ is $[k]$.

We will split the exploration into two tasks: exploring the non-centres in $L$, and exploring the centres in $K$. For these tasks the explorer may visit any vertices in the graph, but must visit all vertices in the target set. 
Since any exploration schedule must explore both $L$ and $K$, and the distributions of $T_{\cG}(L)$ and $T_{\cG}(K)$ do not depend on specific $\cG \sim (\cS,\unif)$, we have the  stochastic domination 
\begin{equation}\label{eq:texpbdds}
    \max\{ T_{\cG}(L), T_{\cG}(K)\} 
   \preceq
    \texp(\mathcal{G}) \preceq T_{\cG}(L) + T_{\cG}(K). \
\end{equation}

Thus, it remains to bound $T_{\cG}(L)$ and $T_{\cG}(K)$. 
Before doing so, observe that due to the presence of the non-explicit universal constant $d > 0$ in the statement of the result, we may assume that $k$ is at least some sufficiently large constant throughout.

For centres, by \cref{clm:coveringleaves}, there exists a constant $c' > 0$ such that
\begin{equation}\label{eq:Texp-TK}
    \Pr{ \tfrac{1}{10}k\ln k \leq  T_{\cG}(K) \leq Ck\ln k } \geq 1- k^{-C/2} -4\exp\Big( -  \frac{c'  k^{4/5} }{ \ln k} \Big). 
\end{equation}  

For non-centres, by \cref{clm:covingcentres}, there exists a constant $c > 0$ such that 
\begin{equation}\label{eq:Texp-TL-1}
    \Pr{
    \big| T_{\cG}(L) - (n-k) \cdot \Ex{\tau_1} \big| \leq t + k +1} \geq 1 - 2 \exp\left( \frac{-c t^2}{(n-k) k } \right).
\end{equation}
From \cref{eq:Etau}, for any sufficiently large $k$, we have 
$\Ex{\tau_1} \geq \sqrt{k/2}$.
Thus, for $t= \sqrt{ C(n-k)k \ln k}$,  
\begin{align*}
    (n-k) \cdot \Ex{\tau_1} - t - k - 1 
    &= 
    \Ex{\tau_1}(n-k) \left(1 - \sqrt{\tfrac{Ck\ln k }{\Ex{\tau_1}^2 (n-k)}} - \frac{k+1}{\Ex{\tau_1} (n-k)}\right) \\
    &\geq \Ex{\tau_1}(n-k)  \left(1 - \sqrt{\tfrac{C\ln k }{n-k}} - 2\sqrt{\tfrac{k}{(n-k)^2}} \right) = f_{-}(k), 
\end{align*}
and, similarly,
\begin{align*}
    (n-k) \cdot \Ex{\tau_1} + t + k + 1 
    &\leq \Ex{\tau_1}(n-k)  \left(1 + \sqrt{\tfrac{C\ln k }{n-k}} + 2\sqrt{\tfrac{k}{(n-k)^2}} \right) = f_{+}(k).
\end{align*}
Consequently, from \cref{eq:Texp-TL-1}, we derive
\begin{equation}\label{eq:Texp-TL}
    \Pr{f_{-}(k) \leq T_{\cG}(L) \leq f_{+}(k)} \geq 1 - 2 \exp\left( \frac{-c t^2}{(n-k) k } \right) = 1 - 2k^{-c\cdot C}. 
\end{equation}

Now, by setting $d := \min\{1/2, c, c'\}$, from \cref{eq:texpbdds}, \cref{eq:Texp-TK}, and \cref{eq:Texp-TL}, we obtain
\begin{align*}
    &\Pr{ \max\Big\{ f_{-}(k),\; \tfrac{1}{10}k\ln k  \Big\}   \leq \texp(\cG) \leq f_+(k) +  Ck\ln k }   \geq 1- 3k^{-dC} -  4\exp\Big( -  \frac{d k^{4/5} }{ \ln k} \Big),
\end{align*} 
as desired.
\end{proof}

\cref{thm:weak-stars} now follows readily from \cref{thm:urst-some-stars}. 

\begin{proof}[Proof of \cref{thm:weak-stars}]
Let  $w \leq  k  \leq n - w \sqrt{n}\ln n$, where we will be taking $w$ to be a suitably large constant. Then,  by \eqref{eq:Etau} we have 
\begin{equation}\label{eq:cheap1}
  \big(1 - 2w^{-1} \big)\cdot\sqrt{\frac{\pi k }{2}    }\leq   \Ex{\tau_1} \leq  \big(1 + 2w^{-1} \big)\cdot\sqrt{\frac{\pi k }{2}   }.  \end{equation}  

We will apply \cref{thm:urst-some-stars} with $C(k) = 2 $, note that in this case, for large $w$,   
\begin{equation}\label{eq:cheap2}
 \sqrt{\frac{C\ln  k}{n-k}}  \leq  w^{-1}  , \quad   2\sqrt{\frac{k}{(n-k)^2}} \leq w^{-1/2} \quad \text{and} \quad 3k^{-  dC}  + 4\exp\Big( -  \frac{d k^{4/5} }{\ln k} \Big) \leq w^{-d} .\end{equation} Recalling that $T= (n-k)\sqrt{\tfrac{\pi  }{2}k}$, it follows from \eqref{eq:cheap1} and \eqref{eq:cheap2}   that for large $w$ 
 \begin{equation*}   f_{+}(k)  \leq  \big(1 +2w^{-1} \big)(1+w^{-1} + w^{-1/2})\cdot T \leq \big(1 +8w^{-1/2} \big)\cdot T,      \end{equation*} and similarly $f_-(k) \geq (1-8w^{-1/2})\cdot T $. It follows that, 
  \begin{equation}\label{eq:convergebounds}    \left| f_{\pm}(k)  -T  \right| \leq 8w^{-1/2} \cdot T, \qquad \tfrac{1}{10}k\ln k  <  f_{-}(k), \qquad \text{and} \qquad  Ck\ln k \leq 2w^{-1} \cdot T.   \end{equation}Thus, using  \eqref{eq:convergebounds} and \eqref{eq:cheap2} in  \cref{thm:urst-some-stars}   gives 
 \[\Pr{  \left| \texp(\cG)  -T  \right|\leq    10w^{-1/2}  \cdot  T} \geq 1 - w^{-d}.  \] Thus if we choose $w$ suitably large satisfying $w\geq \max\{100/\varepsilon^2 , \varepsilon^{-1/d} \}  $ then the result follows. 
\end{proof}

\cref{cor:urst-half-stars} follows from \cref{thm:urst-some-stars} by taking $k = k(n)=\lceil \frac{n}{3} \rceil$ and $C(k) = \log 2k \approx \log n$.  Another corollary is obtained by taking $k = k(n) = n$ and $C=2$.

\begin{corollary}\label{cor:urst-all-stars}
    There exists a constant $c > 0$ such that, for any sufficiently large $n \in \bN$, if
    $\cS$ is a set of $n$ distinct  $n$-vertex  stars, and $\cG \sim (\cS,\unif)$, then
    \[
        \Pr{ \tfrac{1}{10}k\ln k  \leq \texp(\cG) \leq  3k\ln k } \geq 1- n^{-c}.
    \] 
\end{corollary}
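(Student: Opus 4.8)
The plan is to reduce the statement directly to \cref{clm:coveringleaves}. When $\cS$ consists of $n$ distinct $n$-vertex stars, the set $K$ of centres has size $n$, hence $K=[n]$ and the set $L=[n]\setminus K$ of non-centres is empty. Consequently every vertex of $\cG$ is the centre of some star in $\cS$, and visiting all vertices of $\cG$ is exactly the task of visiting all vertices of $K$. By the definitions of $\texp(\cG)$ and of $T_{\cG}(\cdot)$ (both measured from a worst-case starting vertex and both requiring that the entire target set be visited), this gives $\texp(\cG)=T_{\cG}(K)$ for every realisation of $\cG$; in particular the two random variables have the same distribution.

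It then remains to invoke \cref{clm:coveringleaves} with $C=2$ and $k=n$, which produces a universal constant $c'>0$ such that, for all sufficiently large $n$,
\[
\Pr{\tfrac{1}{10}\,n\ln n \le T_{\cG}(K) \le 2\,n\ln n}\;\ge\;1-n^{-1}-4\exp\!\Big(-\tfrac{c'\,n^{4/5}}{\ln n}\Big).
\]
Since $2n\ln n\le 3n\ln n$, the event on the left implies $\{\tfrac{1}{10}n\ln n\le \texp(\cG)\le 3n\ln n\}$ (using $\texp(\cG)=T_{\cG}(K)$), and because the second summand in the failure probability decays faster than any polynomial in $n$, we have $n^{-1}+4\exp(-c'n^{4/5}/\ln n)\le n^{-1/2}$ for all large $n$. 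Taking $c=1/2$ and recalling $k=n$ yields exactly the bound claimed in the corollary.

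I do not anticipate any real obstacle. The only point requiring a word of care is the degenerate case $L=\emptyset$: this is precisely the $k=n$ endpoint of \cref{thm:urst-some-stars}, where the $T_{\cG}(L)$ contribution (and the quantities $f_\pm(k)$, which contain a factor $n-k$ and a division by $n-k$) become vacuous rather than meaningful, so one simply discards that part of the argument and keeps only the centre-covering estimate of \cref{clm:coveringleaves}. Everything else is a routine substitution of constants.
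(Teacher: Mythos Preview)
Your proposal is correct and matches the paper's intent: the paper derives the corollary by taking $k=n$ and $C=2$ in \cref{thm:urst-some-stars}, and when $k=n$ that theorem's proof collapses exactly to the centre-covering estimate of \cref{clm:coveringleaves} (since $L=\emptyset$ makes $T_{\cG}(L)=0$ and the $f_{\pm}$ terms vacuous). Your route of invoking \cref{clm:coveringleaves} directly is the same argument, just with the degenerate $L=\emptyset$ case handled more carefully than the paper's one-line derivation.
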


\subsection{Exploration lower bound based on degree}\label{sec:lowerdeg}

The following theorem is a lower bound on the worst case exploration time of temporal graphs which have snapshots (and even underlying graphs) of bounded degree. Similarly to the previous subsection, this is a randomisation of a deterministic example studied in \cite[Proposition 3.3]{ErlebachOnTemporal2021}. 

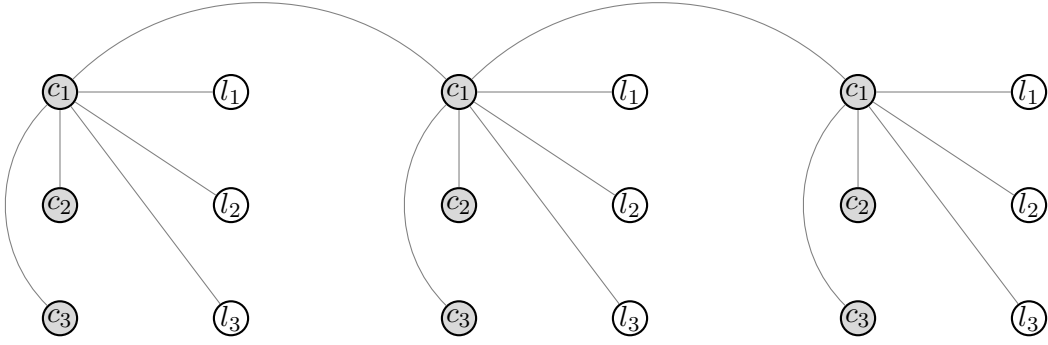
\begin{figure}
				\centering
				\begin{tikzpicture}[scale=\snapshotscale]
					
					\foreach \vertex/\name/\x/\y in 
					{c11/$c_1$/0/2, c2/$c_2$/0/0, c3/$c_3$/0/-2} 
					{
						\node[knoten, fill=gray!30] (\vertex) at (\x, \y) {\name};
					}

                    \foreach \vertex/\name/\x/\y in 
					{l1/$l_1$/3/2, l2/$l_2$/3/0, l3/$l_3$/3/-2} 
					{
						\node[knoten] (\vertex) at (\x, \y) {\name};
					}
					
					\foreach \from/\to in 
					{c11/c2, c11/l1, c11/l2, c11/l3} 
					{
						\draw[edge] (\from) -- (\to);
					}
                    \draw[edge] (c11) to[out=-135,in=135] (c3);
                    
			\renewcommand{\xshift}{7}
					
				\foreach \vertex/\name/\x/\y in 
					{c12/$c_1$/0/2, c2/$c_2$/0/0, c3/$c_3$/0/-2 } 
				{
					\path (\x,\y) ++(\xshift,0) node[knoten, fill=gray!30] (\vertex) {\name};
				}

                \foreach \vertex/\name/\x/\y in 
					{ l1/$l_1$/3/2, l2/$l_2$/3/0, l3/$l_3$/3/-2} 
				{
					\path (\x,\y) ++(\xshift,0) node[knoten] (\vertex) {\name};
				}
					
					\foreach \from/\to in 
					{c12/c2, c12/l1, c12/l2, c12/l3} 
					{
						\draw[edge] (\from) -- (\to);
					}
                    \draw[edge] (c12) to[out=-135,in=135] (c3);
			
			\renewcommand{\xshift}{14}
					
				\foreach \vertex/\name/\x/\y in 
					{c13/$c_1$/0/2, c2/$c_2$/0/0, c3/$c_3$/0/-2} 
				{
					\path (\x,\y) ++(\xshift,0) node[knoten, fill=gray!30] (\vertex) {\name};
				}
                \foreach \vertex/\name/\x/\y in 
					{ l1/$l_1$/3/2, l2/$l_2$/3/0, l3/$l_3$/3/-2} 
				{
					\path (\x,\y) ++(\xshift,0) node[knoten] (\vertex) {\name};
				}
					
					\foreach \from/\to in 
					{c13/c2, c13/l1, c13/l2, c13/l3} 
					{
						\draw[edge] (\from) -- (\to);
					}
                    \draw[edge] (c13) to[out=-135,in=135] (c3);
                    \draw[edge] (c11) to[out=45,in=135] (c12);
                    \draw[edge] (c12) to[out=45,in=135] (c13);
				\end{tikzpicture}
				
				\caption{An example of a tree from $\cT$ constructed in \Cref{thm:bounded-degree-lower} with $d=7$ and $n=18$. 
                Centres are indicated with gray, and non-centres with white.}
                \label{fig:degree-lower}
			\end{figure}

\begin{restatable}{theorem}{bdddeglower}
    \label{thm:bounded-degree-lower}
    For each odd $d \geq 5$ and $n = k(d - 1)$, where $k \in \mathbb{N}$, there exists a set $\mathcal{T}$ of $n$-vertex trees with maximum degree at most $d$ such that, for $\mathcal{G} \sim (\mathcal{T}, \unif)$, we have
    \[
        \Pr{\texp(\mathcal{G}) > \frac{\sqrt{d} \cdot  n}{50} } \geq 1- e^{-\Omega(n)}. 
    \]  
\end{restatable}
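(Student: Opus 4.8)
The plan is to build a randomised, degree‑bounded analogue of the star construction of \cref{sec:uniform} (a randomisation of the deterministic example of \cite[Proposition 3.3]{ErlebachOnTemporal2021}): replace the single global star of each snapshot by $k=n/(d-1)$ independent \emph{local} stars strung along a path, so that the maximum degree stays bounded while a birthday‑type bottleneck is retained inside each local star. Concretely, partition $[n]$ into consecutive blocks $B_1,\dots,B_k$, each of size $d-1$; inside a block designate $m:=\lfloor (d-1)/2\rfloor=\Theta(d)$ vertices as \emph{centre‑candidates}, declare one of the remaining vertices a \emph{port}, and call the other $d-2-m\ge 1$ vertices \emph{free leaves}. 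Let $\cT$ consist of all trees obtained by choosing, for each block $i$, a centre‑candidate $a_i$, joining $a_i$ to the other $d-2$ vertices of $B_i$ (a local star), and adding the fixed ``port edges'' that link $B_1,\dots,B_k$ into a path through their ports. Each such graph has exactly $n-1$ edges, is connected, and has maximum degree $\max\{d-2,3\}\le d$; take $\mu=\unif$, so the centres $(a_1,\dots,a_k)$ of a snapshot are independent and uniform. No free leaf or port is ever a centre, so the non‑centre set $L$ (in the sense of \cref{sec:bottleneck}) contains all $\ell:=\Theta(n)$ free leaves, and hence $\texp(\cG)\ge T_{\cG}(L)$. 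We may also assume $d$ is at least a large absolute constant, since otherwise $\sqrt d\cdot n/50<n-1\le\texp(\cG)$ holds trivially.

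The heart of the argument is a block‑local analogue of \cref{lem:pathsthroughC}: any temporal walk that starts at a free leaf $x$ of block $B_i$ must force the random sequence $\bigl(a_i(t)\bigr)_t$ to repeat a value within its time span before the walk can leave $B_i$ or reach another vertex of $B_i$. Indeed, $x$ is always a leaf of $B_i$'s current local star, so the walk is confined to $B_i$; to reach either the port of $B_i$ or another free leaf it must sit at a centre‑candidate $c$ at some time $t$ and later traverse the edge from $c$ to that target, which exists at time $t'>t$ only when $c$ is again the centre of $B_i$, i.e.\ $a_i(t')=a_i(t)$ — and the same conclusion is forced if the walk first hops between centre‑candidates. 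Now fix any shortest exploration, list the first visits to free leaves in time order as $u_1,\dots,u_\ell$, and let $W_j$ be the sub‑walk strictly between the first visits of $u_j$ and $u_{j+1}$. The $W_j$ are pairwise time‑disjoint, and by the block‑local analogue each $W_j$ forces a value‑repeat in $(a_{b})_t$ for $b=b(j)$ equal to the block of $u_j$ (the walk must first leave that block) during $W_j$'s span; conditionally on everything preceding $W_j$ the index $b(j)$ is fixed and the snapshots along $W_j$ are fresh i.i.d., so the classical birthday estimate (as in \eqref{eq:birthdayupper}) gives $\Pr{\,|W_j|<\varepsilon\sqrt m\mid\mathcal F_{j-1}}\le \varepsilon^2/2$ for every $\varepsilon\in(0,1)$.

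It remains to sum and concentrate. Writing $Z_j$ for the indicator of $|W_j|\ge\varepsilon\sqrt m$, we have $\Pr{Z_j=1\mid\mathcal F_{j-1}}\ge 1-\varepsilon^2/2$, so $\sum_{j<\ell}Z_j$ stochastically dominates $\operatorname{Bin}(\ell-1,1-\varepsilon^2/2)$, and a Chernoff bound yields $\sum_j Z_j\ge(1-\varepsilon^2)\ell$ except with probability $e^{-\Omega(\varepsilon^2\ell)}=e^{-\Omega(n)}$; on that event
\[
    \texp(\cG)\ \ge\ T_{\cG}(L)\ \ge\ \sum_{j<\ell}|W_j|\ \ge\ \varepsilon\sqrt m\,(1-\varepsilon^2)\,\ell\ \ge\ \frac{\sqrt d\cdot n}{50},
\]
after fixing $\varepsilon$ to a suitable small absolute constant and using $m=\Theta(d)$, $\ell=\Theta(n)$. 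The step I expect to be the main obstacle is the block‑local analogue of \cref{lem:pathsthroughC} together with the bookkeeping around it: one must rule out that the exploring walk escapes the ``repeat tax'' by wandering through several blocks between two consecutive free‑leaf first‑visits (the delicate case being $u_j$ and $u_{j+1}$ in distinct blocks, where the forced repeat must be pinned to a well‑defined block whose candidate pool still has size $\Theta(d)$), and one must make precise the conditional‑independence claim that lets the per‑segment birthday estimates be combined into an exponential lower‑tail bound; a clean way to do this is a charging argument that assigns to each of the $\Theta(n)$ time‑disjoint segments its own block‑repeat event and then invokes only the disjointness of the corresponding time windows.
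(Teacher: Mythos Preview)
Your construction differs from the paper's in a way that creates a genuine gap. You let each block choose its centre independently, so the snapshot in block $i$ is governed by its own sequence $(a_i(t))_t$ and $|\cT|=m^k$. The paper instead uses a \emph{synchronised} construction: there are only $(d-1)/2$ trees, and in tree $T_i$ \emph{every} block has its $i$-th centre-candidate as centre (with the inter-block edges joining these centres). Thus a single sequence $(X_t)_t\in[(d-1)/2]^{\bN}$ controls all blocks simultaneously.

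This matters precisely at the step you flag. In the paper's model, any temporal walk visiting all $\ell=n/2$ non-centres can be cut into $\ell-1$ time-disjoint subwalks, and by \cref{lem:pathsthroughC} each forces a repeat in the \emph{same} sequence $(X_t)$. So $\texp(\cG)\le T$ implies $t_{\ell-1}\le T$, where $t_{\ell-1}$ is defined purely from $(X_t)$ and is independent of the exploration strategy; the lower bound then follows from \cref{lem:birthday}. No conditional-independence or filtration argument is needed.

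In your model, the repeat forced by segment $W_j$ lives in block $b(j)$'s sequence, and both $b(j)$ and the time window of $W_j$ are determined by the \emph{optimal clairvoyant} walk, which is a function of all of $\cG$ including the very centres you want to treat as ``fresh i.i.d.'' The sentence ``conditionally on everything preceding $W_j$ the index $b(j)$ is fixed and the snapshots along $W_j$ are fresh i.i.d.'' is therefore not valid: a clairvoyant explorer can steer toward blocks whose sequences happen to repeat unusually quickly in the upcoming window, so the per-segment birthday estimate $\Pr{|W_j|<\varepsilon\sqrt m\mid\cF_{j-1}}\le\varepsilon^2/2$ is unjustified. Indeed, the only exploration-free consequence your argument yields is ``each block has at least $f-1$ disjoint repeats in $[1,T]$,'' which for $T=\Theta(\sqrt d\, n)$ and $n\gg d$ holds with probability close to $1$ in every block and gives no contradiction. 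Making your approach work would require either a nontrivial scheduling lower bound (accounting for the explorer being in one block at a time) or a union bound over strategies---neither of which is sketched. The paper's synchronised construction sidesteps the issue entirely.
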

\begin{proof}
    Let $d$ and $n$ be as in the statement of the lemma.
    We define the following set $\cT$ of trees. 
    Let $\cS$ be a set of $\frac{d-1}{2}$ many stars over vertex set $[d-1]$ with centres in $K:=\left[ \frac{d-1}{2} \right]$. For $i \in K$, denote by $R_i$ the star in $\mathcal{S}$ with centre $i$. For each $R_i \in \cS$, we include in $\cT$ exactly one tree on vertex set $[n]$, denoted $T_i$, which is obtained by creating $n/(d-1)$ vertex-disjoint copies of $R_i$ and connecting their centres in a path such that the centre of the $j$-th copy is connected to the centre of the $(j+1)$-th copy for $1 \leq j < n/(d-1)$. See \Cref{fig:degree-lower} for an example of a tree in $\cT$.
    
    We assume that the vertex set of the $j$-th copy of any star in $\mathcal{S}$ is $[(j-1)(d-1)+1,\, j(d-1)]$, where the first half of this interval, denoted $K_j$, corresponds to the centre vertices, and the second half, denoted $L_j$, corresponds to the non-centre vertices.

    Let $\mathcal{G} \sim (\mathcal{T}, \unif)$, and let $W$ be a temporal walk that visits all vertices in $\mathcal{G}$.  
    Define the following sets
    \[
        M_j := 
        \begin{cases}
            L_j \cup K_{j+1}, & \text{if } j = 1, \\
            K_{j-1} \cup L_j, & \text{if } j = n / (d - 1), \\
            K_{j-1} \cup L_j \cup K_{j+1}, & \text{if } 1 < j < n/(d-1).
        \end{cases}
    \]
    We interpret each set $M_j$ as the set of non-centres of generalised stars on vertex set $M_j \cup K_j$ whose centres lie in $K_j$.
    Note that any temporal walk between a vertex in $K_j \cup L_j$ and a vertex outside of $K_j \cup L_j$ must pass through a vertex in $K_{j-1}$ or $K_{j+1}$, that is, a vertex in $M_j \setminus L_j$.
    This implies that we can partition $W$ into temporal subwalks, each of which, for some $j$, either connects two vertices within $L_j$, or connects one vertex in $L_j$ to another in $M_j \setminus L_j$, with all internal vertices lying in~$K_j$.

    Each such subwalk can be viewed as a temporal walk between two non-centres in a temporal graph whose snapshots are generalised stars on vertex set $M_j \cup K_j$ with centres in $K_j$ and non-centres in $M_j$. 
    Therefore, by \cref{lem:pathsthroughC}, there exists a snapshot that appears at two distinct time steps between the times of the first and last time-edges of the  subwalk.
 
    Associate with $\mathcal{G}$ a sequence $\mathcal{X}(\mathcal{G}) = (X_i)_{i \geq 1}$, where $X_i = j \in \left[ \frac{d-1}{2} \right]$ indicates that the $i$-th snapshot of $\mathcal{G}$ is $T_j$.  
    Define the sequence $(t_i)_{i \geq 0}$ as in \cref{eq:t_i}, and let $\tau_i := t_i - t_{i-1}$ for $i \geq 1$.
    The above discussion (cf. \cref{lem:leaves-exp}) implies that all $n/2$ vertices in $L = \bigcup_{i=1}^{n/(d-1)} L_i$ cannot be visited earlier than time $t_{\ell-1}$, where $\ell = |L| = n/2$.

    Recalling that $d\geq 5$ and $n\geq 4$, and using \eqref{eq:Etau}, we have 
    \[
        (\ell-1) \cdot \Ex{\tau_1}\geq \Big(\frac{n}{2}-1\Big) \cdot \left( \sqrt{\frac{\pi}{2} \cdot \frac{d-1}{2}    }   - \frac{2}{5}\right) \geq     \frac{n}{4}\cdot \frac{\sqrt{d}}{10},
    \] 
    and thus, by \cref{lem:birthday} with $t=n\sqrt{d}/200$,
    \[
        \Pr{t_{\ell-1} \leq \frac{\sqrt{d} \cdot n }{50} } 
        \leq  
        \Pr{ t_{\ell-1} \leq    (\ell-1) \cdot \Ex{\tau_1} - t   } \leq  2 \exp\left( \frac{-c d n^2/200^2}{(\tfrac{n}{2}-1)\cdot \frac{d-1}{2} } \right) = e^{- \Omega(n) },
    \]
    which implies the desired result.
\end{proof}

 \subsection{Lower Bound for Online Exploration}\label{sec:onlinestars}

In this section, we show that any online exploration algorithm requires $\Omega(n^2)$ time steps to explore the RST $(\cS, \unif)$ with $\lceil n/2 \rceil$ stars. This demonstrates that knowledge of future snapshots is essential for the upper bound in \cref{cor:urst-half-stars}. For convenience, we assume $n$ is even, but the result also holds for odd values of $n$.

\begin{proposition}\label{prop:online-stars} 
    Let $n \in \bN$ be even and $n \geq 4$.
    Let $\cS$ be a set of $n/2$ distinct $n$-vertex stars, and $\cG \sim (\cS,\unif)$.
    Then, any randomised online exploration algorithm $A$ of $\mathcal{G} $ satisfies 
    \[ 
        \Pr{\Dexp_A(\cG) > \frac{n^2}{15} } \geq 1 - \exp\Big(- \frac{n}{100}\Big).  
    \] 
\end{proposition}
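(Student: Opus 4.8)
The plan is to reduce online exploration of $\cG\sim(\cS,\unif)$ to a pursuit argument in the spirit of \cref{prop:online-lower-bound}, but with the stronger $\Omega(n^2)$ lower bound coming from the fact that here the set $L$ of non-centres has size $n/2$. First I would set up notation: let $K$ be the set of star centres with $|K|=n/2$ and $L=[n]\setminus K$ the non-centres, and recall from \cref{lem:pathsthroughC} that any temporal walk between two distinct vertices of $L$ forces some centre vertex to be the star centre at two distinct time steps in the spanned interval. Consequently, as in \cref{lem:leaves-exp}, the time to visit all of $L$ is at least $t_{\ell-1}$, where $\ell=|L|=n/2$ and the $t_i$ are defined by \eqref{eq:t_i} with $k=n/2$ days in a ``year''. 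But this bound alone only gives $\Omega(n^{3/2})$; the extra factor comes from the \emph{online} restriction, which prevents the explorer from positioning itself at the right centre vertex in advance.

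The key step is the online pursuit argument. Between consecutive ``birthday'' times the explorer can traverse at most one useful time-edge out of $L$ per step, and to leave the component of $L$ it currently sits in it must be at a vertex that is a centre in the current snapshot — but the current centre is uniformly random in $K$ and independent of the explorer's position, which is fixed before the snapshot is revealed. So I would model the situation as the explorer trying to ``catch'' the sequence of revealed centres: at each time step, the probability that the explorer's current vertex coincides with the revealed centre (allowing it to make progress on visiting a new vertex of $L$) is at most $1/|K| = 2/n$. By \cref{Prop:JimmysPrinciple} (the principle used in \cref{prop:online-lower-bound}), the number of ``successful'' steps within any window of length $T$ is stochastically dominated by $\operatorname{Bin}(T,2/n)$; since the explorer must accumulate $\Omega(n)$ successes to visit all $n/2$ vertices of $L$ (each success can reveal essentially one new leaf, as moving from a leaf of one star into another requires being at a centre), the total time is the sum of $\Omega(n)$ i.i.d. geometric-$(2/n)$ waiting times, which has expectation $\Omega(n^2)$ and concentrates.

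Concretely, I would let $N := n/4$ say, observe that to explore all of $L$ the online explorer needs at least $N$ steps in which its current position equals the snapshot's centre, bound the waiting time for each such step below by an independent $\geo{2/n}$ variable via \cref{Prop:JimmysPrinciple}, and then apply a Chernoff/Janson-type tail bound (\cref{lem:jansontail}) to a sum $X$ of $N$ independent $\geo{2/n}$ random variables with mean $\mu = N\cdot n/2 = n^2/8$. Taking the target threshold $n^2/15 < \mu/(1+\text{small})$, the tail bound gives $\Pr{X \le n^2/15} \le \exp(-\Omega(n))$, and tracking constants should comfortably yield the claimed $\exp(-n/100)$. The main obstacle is the first part of the second paragraph: carefully justifying that each new vertex of $L$ genuinely costs a fresh ``catch the centre'' event — i.e. that the explorer cannot batch-visit many leaves of a single star without re-incurring the cost when it needs to move to the next star — and handling the randomised/adaptive nature of the algorithm cleanly so that \cref{Prop:JimmysPrinciple} applies conditionally on the history. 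This is exactly the kind of adaptivity subtlety that \cref{prop:online-lower-bound} navigates via the rabbit-chasing formalism, so I would mirror that argument: associate with $\cG$ the centre sequence $(X_i)$, represent the explorer's trajectory as an adapted process, and note that $a_t$ depends only on information available before $X_t$ is revealed, so $\Pr{a_t \in \{X_t\} \mid \text{history}} \le 2/n$ regardless of the strategy.
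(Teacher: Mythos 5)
Your proposal is essentially the same argument as the paper's. Both rest on the same two ingredients: (1) since every edge in a star snapshot is incident to the centre, the explorer can visit a new non-centre vertex at time $t$ only if its position $a_t$ equals the revealed centre $c_t$, so at least $|L|-1 = n/2-1$ ``catches'' are needed; and (2) by \cref{Prop:JimmysPrinciple}, because $a_t$ is determined before $c_t$ is revealed, the number of catches in $T$ steps is stochastically dominated by $\operatorname{Bin}(T,2/n)$. The paper then directly applies the binomial Chernoff bound (\cref{lem:simplechb}) to the count $c(A,T)$; you instead pass to the dual formulation as a sum of $N=n/4$ independent $\geo{2/n}$ waiting times. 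These are equivalent, but note a small technical snag in your version: \cref{lem:jansontail} only gives an \emph{upper} tail bound $\Pr{X \geq \lambda\mu}$ for sums of geometrics, whereas you need a \emph{lower} tail $\Pr{X \leq n^2/15}$. The clean fix is exactly what the paper does---use $\Pr{\sum_i \geo{p} \leq T} = \Pr{\operatorname{Bin}(T,p) \geq N}$ and apply \cref{lem:simplechb}---rather than Janson. Also, the opening detour through \cref{lem:pathsthroughC} and the birthday times $t_{\ell-1}$ is unnecessary here: the online lower bound does not use those quantities at all, only the simple degree observation about stars.
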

\begin{proof}
Let $k=n/2$.
Let $K := [k]$ be the centres of the stars in $\cS$,
and $L := [n] \setminus K$ to be the non-centres.

We represent the graph $\mathcal{G}$ by a sequence $(c_t)_{t \geq 1}$, where each $c_t \in K$ denotes the centre of the star in the $t$-th snapshot of $\mathcal{G}$.  
Furthermore, we interpret a randomised online exploration algorithm $A$ as an explorer moving on the vertices of $\mathcal{G}$, and represent the explorer's walk by a sequence $(a_t)_{t \geq 1}$, where $a_1$ is the initial position (chosen by the algorithm without seeing any snapshots of $\cG$), and, for each $t \geq 2$, the value $a_t \in [n]$ denotes the vertex to which the explorer moves at the end of time step $t-1$.  
For every $t \geq 1$, the decision $a_t$ at time step $t-1$ is based on the explorer’s movement history $(a_1, a_2, \ldots, a_{t-1})$, the revealed snapshots of $\mathcal{G}$ up to time $t-1$, represented by $(c_1, c_2, \ldots, c_{t-1})$, and the random bits used by the algorithm.

Since no two vertices in $L$ are connected by an edge in any snapshot of $\mathcal{G}$, the only way the explorer can visit a vertex in $L$ at time step $t$ is by being located at the star center of the current snapshot at the start of time step $t$. That is, the explorer must have moved to a vertex $v \in K$ at the end of time step $t-1$, and the $t$-th sampled snapshot of $\mathcal{G}$ must be the star centered at $v$. This condition is equivalent to the equality $a_t = c_t$.

Thus, if the explorer visits all vertices of $\cG$ by time $T$, then
$c(A,T) = \sum_{t=1}^{T} \mathbf{1}(a_t=b_t)$ is at least $|L|-1 = n/2-1$ (where the "$-1$" accounts for the possibility that the explorer's initial position lies in $L$).
This and the fact that $c(A,T)$ is stochastically dominated by $Z \sim \operatorname{Bin}(T,1/k) = \operatorname{Bin}(T,2/n)$ (by \cref{Prop:JimmysPrinciple}) imply
\[
    \Pr{\Dexp_A(\cG) \leq T} 
    \leq 
    \Pr{c(A,T) \geq n/2-1}
    \leq 
    \Pr{Z \geq n/2-1}.
\]

To complete the proof, we choose $T = n(n - 2)/6$ and $\delta = 1/2$ so that $(1 + \delta)\cdot \frac{2T}{n} = n/2 - 1$. Then, since $n^2/15 \leq T$ for $n \geq 4$, applying \cref{lem:simplechb} yields 
\begin{align*}
    \Pr{\Dexp_A(\cG) \leq n^2/15} 
    &\leq
    \Pr{\Dexp_A(\cG) \leq T}
    \leq 
    \Pr{Z \geq (1+\delta)2T/n} \\ &\leq \exp\left(\frac{-(1/2)^2}{3}\cdot \frac{2T}{n}   \right) = \exp\left( -  \frac{n-2}{36}   \right) \leq 
    \exp\left( -  \frac{n}{100}   \right), 
\end{align*}
which implies the claim. 
\end{proof}

\iftoggle{anonymous}{
}{%
	\bigskip
	\textbf{Acknowledgments.}
We thank Thomas Erlebach for some interesting and enlightening discussions, in particular at the Algorithmic Aspects of Temporal Graphs VII workshop. We also thank Patrick Totzke for raising a nice question on the power of online exploration algorithms, which we answered in \cref{prop:online-stars}. This work was supported
by a Royal Society International Exchanges grant IES$\backslash$R1$\backslash$231140.
}

\bibliographystyle{abbrv}
 \bibliography{bibliography}

\appendix

\section{Concentration and Auxillary Results}\label{sec:furtherprelim}

\paragraph{Concentration inequalities.} A random variable $X$ is \emph{binomially distributed} with parameters $n\geq 1$ and $p\in [0,1]$, denoted $X\sim \operatorname{Bin}(n,p)$, if $\mathbb{P}( X = k ) = \binom{n}{k}p^k(1-p)^{n-k} $. We will make use of the following classic Chernoff bound for binomial random variables.  

\begin{lemma}[{\cite[Theorems 4.4 \& 4.5]{mitzenmacher2017probability}}]\label{lem:simplechb}
    Let $X\sim\operatorname{Bin}(n,p)$, where $n\in \mathbb{N}_{+} $ and $p\in (0,1)$. Let $\mu := \bE[X]$. Then, for any $0<\delta <1$, \[\mathbb{P}[X \leq (1-\delta) \mu ]\leq e^{-\delta^{2}\mu /2}, \qquad \text{and}\qquad \mathbb{P}[X \geq (1+\delta) \mu ]\leq e^{-\delta^{2}\mu /3}.\]
\end{lemma}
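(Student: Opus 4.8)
This is a classical Chernoff--Cram\'er tail bound, and the quoted statement is exactly \cite[Theorems 4.4 \& 4.5]{mitzenmacher2017probability}, so one could simply cite it; I sketch the standard argument via the exponential moment method for completeness. Write $X = \sum_{i=1}^{n} Y_i$ with the $Y_i$ independent $\operatorname{Ber}(p)$. For the upper tail, fix $\lambda > 0$ and combine Markov's inequality applied to $e^{\lambda X}$ with independence:
\[
    \mathbb{P}[X \geq (1+\delta)\mu]
    = \mathbb{P}\bigl[e^{\lambda X} \geq e^{\lambda(1+\delta)\mu}\bigr]
    \leq e^{-\lambda(1+\delta)\mu}\prod_{i=1}^{n}\mathbb{E}[e^{\lambda Y_i}]
    = e^{-\lambda(1+\delta)\mu}\,(1 - p + p e^{\lambda})^{n}.
\]
Using $1 + x \leq e^{x}$ gives $(1 - p + p e^{\lambda})^{n} \leq \exp\bigl(np(e^{\lambda}-1)\bigr) = \exp\bigl(\mu(e^{\lambda}-1)\bigr)$, and minimising $\mu(e^{\lambda}-1) - \lambda(1+\delta)\mu$ over $\lambda$ (at $\lambda = \ln(1+\delta) > 0$) yields $\mathbb{P}[X \geq (1+\delta)\mu] \leq \bigl(e^{\delta}/(1+\delta)^{1+\delta}\bigr)^{\mu}$. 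For the lower tail one repeats the argument with $e^{-\lambda X}$, $\lambda > 0$, optimising at $\lambda = -\ln(1-\delta)$, to obtain $\mathbb{P}[X \leq (1-\delta)\mu] \leq \bigl(e^{-\delta}/(1-\delta)^{1-\delta}\bigr)^{\mu}$.

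It then remains to establish the two elementary inequalities $e^{\delta}/(1+\delta)^{1+\delta} \leq e^{-\delta^{2}/3}$ and $e^{-\delta}/(1-\delta)^{1-\delta} \leq e^{-\delta^{2}/2}$ for $\delta \in (0,1)$; taking logarithms, these are equivalent to $f(\delta) := \delta - (1+\delta)\ln(1+\delta) + \tfrac{\delta^{2}}{3} \leq 0$ and $g(\delta) := -\delta - (1-\delta)\ln(1-\delta) + \tfrac{\delta^{2}}{2} \leq 0$. Both functions vanish at $0$, so it suffices to check $f' \leq 0$ and $g' \leq 0$ on $(0,1)$. For $g$ this is immediate, since $g'(\delta) = \delta + \ln(1-\delta) \leq 0$ using $\ln(1-\delta) \leq -\delta$. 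For $f$, we have $f'(\delta) = \tfrac{2\delta}{3} - \ln(1+\delta)$ with $f'(0) = 0$ and $f''(\delta) = \tfrac{2}{3} - \tfrac{1}{1+\delta}$, so $f'$ is decreasing on $(0,\tfrac12]$ and increasing on $[\tfrac12,1)$ while $f'(1) = \tfrac{2}{3} - \ln 2 < 0$; hence $f' \leq 0$ throughout $(0,1)$, so $f$ is non-increasing. Exponentiating and substituting into the two bounds above completes the proof.

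There is no genuine conceptual obstacle here: the moment generating function computation for sums of independent Bernoullis is entirely routine. The only mildly delicate point is the verification of $f(\delta) \leq 0$ --- unlike the lower-tail function, $f'$ is not monotone on $(0,1)$, so one needs the second-derivative analysis together with the numerical fact $\ln 2 > \tfrac{2}{3}$ to rule out a sign change near $\delta = 1$.
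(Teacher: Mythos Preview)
Your proof is correct and is the standard exponential-moment derivation of the Chernoff bound. The paper itself does not prove this lemma at all: it is stated in the appendix purely as a citation of \cite[Theorems 4.4 \& 4.5]{mitzenmacher2017probability}, with no accompanying argument. So there is nothing to compare against; your write-up simply supplies the textbook proof that the paper chose to omit.
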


A random variable $X$ is \textit{geometrically distributed} with parameter $p\in [0,1]$, denoted $X\sim \geo{p}$, if $\Pr{X=k} = (1-p)^{k-1}p $ for any integer $k\geq 1$.

We will use the following Chernoff bound for  sums of geometric random variables.

\begin{lemma}[{\cite[Theorem 2.1]{JansonTail}}] \label{lem:jansontail}
    For any $n\geq 1$ and $p_1, \dots, p_n \in (0,1]$, let $X=\sum_{i=1}^nX_i$, where $X_i \sim \geo{p_i}$. Let $p_*=\min_{i\in [n]}p_i$ and $\mu=\Ex{X}=\sum_{i=1}^n\frac{1}{p_i}$. Then, for any $\lambda\geq 1$, \[\Pr{X \geq \lambda \mu  } \leq \exp( - p_* \cdot \mu \cdot \left(\lambda -1- \log \lambda\right)).\]
\end{lemma}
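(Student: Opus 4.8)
This is a standard Chernoff-type tail bound for sums of independent geometric variables (it is the content of \cite[Theorem 2.1]{JansonTail}); I sketch the moment-generating-function route one would take. Write $\mu := \Ex{X} = \sum_{i=1}^{n} 1/p_i$ and $p_* := \min_i p_i$. The plan is to apply the exponential Markov inequality $\Pr{X \ge \lambda\mu} \le e^{-t\lambda\mu}\,\Ex{e^{tX}}$ for a well-chosen $t > 0$, use independence to factor $\Ex{e^{tX}} = \prod_i \Ex{e^{tX_i}}$, bound each factor, and then optimise over $t$.

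First I would record the geometric MGF: for $t \ge 0$ with $e^t(1-p_i) < 1$ one has $\Ex{e^{tX_i}} = p_i e^t / \bigl(1 - (1-p_i)e^t\bigr)$. With the substitution $s := e^t - 1$ this equals $\frac{p_i(1+s)}{p_i - s(1-p_i)}$, and an elementary manipulation gives the clean comparison $\Ex{e^{tX_i}} \le (1 - s/p_i)^{-1}$ for $0 \le s < p_i$, which is exactly the MGF $\Ex{e^{sZ_i}}$ of a rate-$p_i$ exponential. Multiplying over $i$ yields $\log\Ex{e^{tX}} \le -\sum_i \log(1 - s/p_i)$. The key reduction is that the map $p \mapsto -p\log(1 - s/p)$ is non-increasing on $(s,\infty)$: its derivative equals $-\log(1-s/p) - \tfrac{s}{p-s} \le 0$ by the inequality $-\log(1-x) \le x/(1-x)$. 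Hence $-\log(1-s/p_i) \le \tfrac{p_*}{p_i}\bigl(-\log(1 - s/p_*)\bigr)$, and summing over $i$ gives $\log\Ex{e^{tX}} \le -p_*\mu\,\log(1 - s/p_*)$. This is precisely the MGF bound one would use for a ``Gamma of shape $p_*\mu$'', and it is here that the parameters enter only through the product $p_*\mu$.

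Plugging this into exponential Markov gives $\Pr{X \ge \lambda\mu} \le \exp\bigl(-t\lambda\mu - p_*\mu\log(1 - (e^t-1)/p_*)\bigr)$; writing $s = e^t - 1$ and minimising $\phi(s) := -\lambda\log(1+s) - p_*\log(1 - s/p_*)$ over $s \in [0, p_*)$ (the stationary point is $s = p_*(\lambda-1)/(p_*+\lambda)$), and simplifying the resulting expression with the appropriate elementary inequalities, should deliver $\phi(s) \le -p_*(\lambda - 1 - \log\lambda)$, hence $\Pr{X \ge \lambda\mu} \le \exp\bigl(-p_*\mu(\lambda - 1 - \log\lambda)\bigr)$. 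The main obstacle is exactly this last optimisation: the convex function $\lambda \mapsto \lambda - 1 - \log\lambda$ is the sharp large-deviation rate for sums of exponentials, and a crude surrogate for the geometric MGF — Poissonising too early, or the coupling $\lceil\mathrm{Exp}(-\log(1-p))\rceil \sim \geo{p}$ — loses a constant and fails to reach it, so one must carry the exact geometric MGF through the optimisation and tilt at the exponential-type optimum rather than a Poissonised one. If one is content with a weaker constant, however, either surrogate together with the classical Gamma tail bound already yields a bound of the same exponential shape.
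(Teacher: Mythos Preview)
The paper does not give its own proof of this lemma; it is simply quoted from Janson with a citation and no argument. So there is nothing in the paper to compare your sketch against. That said, your sketch has a genuine gap, not merely an unfinished optimisation.

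The problem is the order of your two reductions. You first pass from the geometric MGF to an exponential one via $\mathbb{E}[e^{tX_i}] \le (1-s/p_i)^{-1}$ with $s=e^t-1$, and only afterwards reduce to $p_*$. This detour forfeits exactly the constant you need. Carrying out your optimisation at the stationary point $s=p_*(\lambda-1)/(p_*+\lambda)$ gives
\[
g(s)=-\lambda\log\lambda+(\lambda+p_*)\log\frac{p_*+\lambda}{1+p_*},
\]
and the difference $\Delta(\lambda):=g(s)+p_*(\lambda-1-\log\lambda)$ satisfies $\Delta(1)=\Delta'(1)=0$ and $\Delta''(\lambda)=p_*^2/\bigl(\lambda^2(p_*+\lambda)\bigr)>0$; hence $\Delta$ is strictly convex and $\Delta(\lambda)>0$ for every $\lambda>1$. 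So your route yields a bound of the right form but with a \emph{strictly weaker} exponent, and no further ``elementary inequality'' can recover the stated rate---the inequality simply points the wrong way.

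What does work (and what Janson in fact does) is to perform the $p_*$-reduction directly on the geometric MGFs, without the exponential intermediary. The map $p\mapsto p\log\mathbb{E}\bigl[e^{tX(p)}\bigr]$, with $X(p)\sim\geo{p}$, is non-increasing: writing $u=\mathbb{E}[e^{tX(p)}]$, its $p$-derivative is $\log u+1-u\le 0$. This gives $\log\mathbb{E}[e^{tX}]\le p_*\mu\,\log\mathbb{E}[e^{tX_*}]$ with $X_*\sim\geo{p_*}$. Now the Chernoff optimisation is over a \emph{single} geometric MGF; the optimal $t$ satisfies $(1-p_*)e^t=1-p_*/\lambda$, and plugging in yields an exponent whose excess over $-p_*(\lambda-1-\log\lambda)$ has second $\lambda$-derivative $-p_*^2/\bigl(\lambda^2(\lambda-p_*)\bigr)<0$, hence is concave with a double zero at $\lambda=1$ and therefore nonpositive for all $\lambda\ge 1$. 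That delivers the lemma exactly.
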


We will also use McDiarmid's inequality. 
\begin{lemma}[{\cite[Corollary 2.27 \& Remark 2.28]{JansonBook}}]\label{lem:McDiarmid} Let $X_1, \dots, X_\ell$ be independent random variables, with $X_i $ taking values in a set $\Lambda_i$. Assume that a function $f:\Lambda_1 \times \cdots \Lambda_\ell \rightarrow \mathbb{R}$
satisfies the following Lipschitz condition for some numbers $c_i$, $i \in [\ell]$: 
\begin{description}
    \item[(L)] If two vectors $x,x' \in \prod_{i=1}^\ell \Lambda_{i=1}^\ell$ differ only in the $i$-th coordinate, then $|f(x)-f(x')|\leq c_i$.   
\end{description} 
Then, for any $\lambda > 0$, the random variable $Y = f(X_1,\dots,X_\ell)$   
satisfies 
\[\Pr{|Y - \Ex{Y}|> \lambda} \leq 2\exp\Bigg(- \frac{2\lambda^2}{\sum_{i=1}^\ell c_i^2}  \Bigg).   \]  

\end{lemma}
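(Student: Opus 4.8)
The plan is to prove \cref{lem:McDiarmid} (the bounded differences / McDiarmid inequality) by the classical route: expose the variables $X_1,\dots,X_\ell$ one at a time to form a Doob martingale, show each martingale increment has bounded conditional range, run the Azuma--Hoeffding exponential moment argument, optimise via a Chernoff step, and finally take a union bound to pass from a one-sided to a two-sided deviation bound.

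First I would fix $s\in\mathbb R$ (to be optimised later) and set $\mathcal F_i := \sigma(X_1,\dots,X_i)$ for $0\le i\le \ell$, with $\mathcal F_0$ trivial. Define the Doob martingale $Y_i := \Ex{Y\mid \mathcal F_i}$, so $Y_0=\E Y$, $Y_\ell = Y$, and write $D_i := Y_i - Y_{i-1}$; then $Y-\E Y = \sum_{i=1}^\ell D_i$ and $\Ex{D_i\mid\mathcal F_{i-1}}=0$. The crucial structural step is to check that, conditional on $\mathcal F_{i-1}$, the increment $D_i$ lies in an interval of length at most $c_i$. For fixed values $x_1,\dots,x_{i-1}$, put $g(x_i) := \Ex{f(x_1,\dots,x_{i-1},x_i,X_{i+1},\dots,X_\ell)}$ (using independence of the $X_j$); condition (L) says $f$ changes by at most $c_i$ when only its $i$-th coordinate is altered, hence $g$ has oscillation at most $c_i$. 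Since on the relevant event $Y_i = g(X_i)$ and $Y_{i-1}=\Ex{g(X_i)}$, we get $D_i\in[\inf g - \Ex{g(X_i)},\ \sup g - \Ex{g(X_i)}]$, an interval of length $\le c_i$. Making the conditioning fully rigorous (regular conditional distributions, Fubini) is the only part that takes genuine care, and I expect this bounded-range claim to be the main obstacle, though it is routine.

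Next I would run the exponential moment bound. By Hoeffding's lemma, any random variable $Z$ with conditional mean $0$ taking values in an interval of length $L$ satisfies $\Ex{e^{sZ}\mid\mathcal F_{i-1}}\le e^{s^2L^2/8}$; applied with $Z=D_i$ and $L=c_i$ this gives $\Ex{e^{sD_i}\mid\mathcal F_{i-1}}\le e^{s^2c_i^2/8}$. Peeling off the conditional expectations one index at a time via the tower property,
\[
  \Ex{e^{s(Y-\E Y)}}
  = \Ex{\, e^{s\sum_{i<\ell}D_i}\,\Ex{e^{sD_\ell}\mid\mathcal F_{\ell-1}}\,}
  \le e^{s^2 c_\ell^2/8}\,\Ex{e^{s\sum_{i<\ell}D_i}}
  \le \cdots \le \exp\Big(\tfrac{s^2}{8}\sum_{i=1}^{\ell}c_i^2\Big).
\]
A Chernoff step then gives $\Pr{Y-\E Y>\lambda}\le e^{-s\lambda}\exp\big(\tfrac{s^2}{8}\sum_i c_i^2\big)$ for $s>0$, and choosing $s = 4\lambda/\sum_i c_i^2$ yields $\Pr{Y-\E Y>\lambda}\le \exp\big(-2\lambda^2/\sum_i c_i^2\big)$.

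Finally, since $-f$ also satisfies condition (L) with the same constants $c_i$, the identical argument bounds $\Pr{-(Y-\E Y)>\lambda}$, and a union bound gives $\Pr{|Y-\E Y|>\lambda}\le 2\exp\big(-2\lambda^2/\sum_i c_i^2\big)$, which is the claim. Everything except the conditional bounded-range property of the $D_i$ is entirely standard; an alternative that sidesteps re-deriving it is to invoke Azuma's inequality for martingales with bounded-length conditional increments and proceed directly to the Chernoff optimisation.
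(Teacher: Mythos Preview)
Your proof is correct and follows the standard Doob martingale plus Azuma--Hoeffding route. Note, however, that the paper does not actually prove this lemma: it is stated in the appendix as a cited result from \cite{JansonBook} (Corollary 2.27 and Remark 2.28) and used as a black box, so there is no in-paper proof to compare against.
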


We will also use a concentration result for sums of sub-Gaussian  random variables \cite{Vershynin}. The sub-Gaussian norm $||X||_{\psi^2}$ of a random variable $X$ is  defined as follows,     
\begin{equation} \label{eq:subgaussnorm}
    ||X||_{\psi^2}:=\inf\{K >0 : \Ex{\exp(X^2/K^2) } \leq 2\}.
\end{equation}
We say that a random variable $X$ is sub-Gaussian if $||X||_{\psi^2}$ is bounded. 
We will use a variant of Hoeffding's inequality for sub-Gaussian random variables.

\begin{theorem}[{\cite[Theorem 2.7.3]{Vershynin}}]\label{thm:subGHoff}Let $X_1,\dots , X_N$ be independent sub-Gaussian random variables with zero mean. Then, there exists a universal constant $c>0$ such that for every $t \geq  0$
\[ 
    \Pr{\left|\sum_{i=1}^{N}X_i \right| \geq t   } \leq 2 \exp\left( \frac{-c t^2}{\sum_{i=1}^{N}||X_i||^2_{\psi^2} } \right). 
\]  
\end{theorem}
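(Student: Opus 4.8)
The plan is to run the classical Cram\'er--Chernoff (exponential moment) argument, the only non-trivial ingredient being that the sub-Gaussian norm controls the moment generating function. First I would recall the standard equivalence of sub-Gaussian properties (see e.g.\ \cite[Proposition 2.6.1]{Vershynin}): there is an absolute constant $K_0 > 0$ such that every \emph{zero-mean} random variable $Y$ with $||Y||_{\psi^2} < \infty$ satisfies
\[
    \Ex{\exp(\lambda Y)} \leq \exp\!\big(K_0 \lambda^2 ||Y||_{\psi^2}^2\big) \qquad \text{for all } \lambda \in \bR.
\]
To obtain this from the Orlicz-norm definition directly, one first bounds all moments of $Y$: from $\Ex{\exp(Y^2/||Y||_{\psi^2}^2)} \le 2$ and the elementary inequality $x^{p/2} \le (p/2)!\, e^{x}$ (valid for $x\ge 0$), taking expectations gives $\Ex{|Y|^p} \le 2\,(p/2)!\,||Y||_{\psi^2}^p \le (C||Y||_{\psi^2})^p\, p^{p/2}$ for an absolute constant $C$. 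One then Taylor-expands $\Ex{\exp(\lambda Y)} = 1 + \sum_{p\ge 2} \lambda^p\Ex{Y^p}/p!$, where the linear term vanishes because $\Ex{Y}=0$, and estimates the series using $p! \ge (p/e)^p$ (splitting over even and odd $p$ if convenient), arriving at a bound of the form $1 + O\!\big(\lambda^2 ||Y||_{\psi^2}^2\big) \le \exp\!\big(K_0\lambda^2||Y||_{\psi^2}^2\big)$.

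Given this input, the remainder is routine. Write $\sigma^2 := \sum_{i=1}^N ||X_i||_{\psi^2}^2$ and fix $t \geq 0$. Since the $X_i$ are independent with mean zero, for every $\lambda > 0$,
\[
    \Ex{\exp\!\Big(\lambda \sum_{i=1}^N X_i\Big)} = \prod_{i=1}^N \Ex{\exp(\lambda X_i)} \leq \exp\!\big(K_0 \lambda^2 \sigma^2\big),
\]
so by Markov's inequality $\Pr{\sum_{i=1}^N X_i \geq t} \leq \exp\!\big(K_0 \lambda^2 \sigma^2 - \lambda t\big)$. Optimising the right-hand side by taking $\lambda = t/(2K_0 \sigma^2)$ yields $\Pr{\sum_{i=1}^N X_i \geq t} \leq \exp\!\big(-t^2/(4K_0 \sigma^2)\big)$. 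Finally, the random variables $-X_1, \dots, -X_N$ are again independent, zero-mean, and have the same $\psi^2$-norms, so the identical bound applies to $\Pr{-\sum_{i=1}^N X_i \geq t}$; a union bound then gives
\[
    \Pr{\Big|\sum_{i=1}^N X_i\Big| \geq t} \leq 2 \exp\!\Big(\frac{-t^2}{4K_0 \sigma^2}\Big),
\]
which is the claimed inequality with $c := 1/(4K_0)$.

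The only genuine obstacle here is the first step: extracting a uniform quadratic bound on the moment generating function, with an absolute constant, from the Orlicz-norm definition of $||\cdot||_{\psi^2}$. Once that equivalence is in hand, the exponential-moment/Chernoff machinery, the choice of the optimal $\lambda$, and the passage to a two-sided estimate are entirely mechanical. In the present paper the theorem is used purely as a black box imported from \cite{Vershynin}, so in practice one simply cites it rather than reproducing this argument.
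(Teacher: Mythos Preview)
Your proof is correct and is essentially the standard argument found in \cite{Vershynin}. Note, however, that the paper does not give its own proof of this statement: it is stated in the appendix as a cited black-box result from \cite[Theorem 2.7.3]{Vershynin}, exactly as you observe in your final paragraph.
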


\paragraph{Lower Bound for Randomised Online Decisions.} The following proposition says that if are playing a game where you get to open one of $k$ boxes in each round, where a single uniformly sampled box contains a gold bar, then you cannot do better than just opening the first box each time. 
We need this for our lower bounds for online explorers. The proof of this is very basic, but we could not find it anywhere to cite, so we include it for completeness.

\begin{proposition}\label{Prop:JimmysPrinciple} Let $(b)_{t\geq 1}$ be i.i.d.~with uniform distribution over $[k]$. Let $S\subseteq \mathbb{N}$ and $A$ be any randomised algorithm which in step $t \geq 1$ has access to $(a_i,b_i)_{i=1}^{t-1}$ and outputs $a_t\in S$. Then $c(A,T) = \sum_{t=1}^{T} \mathbf{1}(a_t=b_t)$ is stochastically dominated by a random variable with distribution   $\operatorname{Bin}(T,1/k)$. Furthermore, if $S=[k]$ then $c(A,T)$ has distribution $\operatorname{Bin}(T,1/k)$.       
\end{proposition}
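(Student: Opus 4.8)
The plan is to prove the statement by an explicit coupling, reducing everything to a single-step conditional probability computation. First I would fix a probability space carrying the algorithm's internal randomness $\omega$ (independent of the sequence $(b_t)_{t\ge 1}$) and set $\mathcal{G}_t := \sigma\big(\omega, b_1,\dots,b_t\big)$. Since the algorithm chooses $a_t$ using only $\omega$ and $(a_i,b_i)_{i<t}$, and each $a_i$ with $i<t$ is itself $\mathcal{G}_{t-1}$-measurable by induction, the decision $a_t$ is $\mathcal{G}_{t-1}$-measurable; moreover $b_t$ is independent of $\mathcal{G}_{t-1}$ and uniform on $[k]$. Hence, writing $X_t := \mathbf{1}(a_t = b_t)$, we get
\[
    \Pr{X_t = 1 \mid \mathcal{G}_{t-1}} = \Pr{b_t = a_t \mid \mathcal{G}_{t-1}} = \tfrac{1}{k}\,\mathbf{1}(a_t \in [k]) \le \tfrac{1}{k} \quad\text{a.s.},
\]
with equality whenever $a_t \in [k]$, which always holds when $S = [k]$.

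Next I would build a dominating i.i.d.\ sequence. Introduce variables $Z_1, Z_2, \dots \sim \operatorname{Ber}(1/k)$, i.i.d.\ and independent of everything above, put $\mathcal{H}_t := \sigma(\mathcal{G}_t, Z_1, \dots, Z_t)$, and define $Y_t := X_t$ if $a_t \in [k]$ and $Y_t := Z_t$ otherwise. On the one hand $Y_t \ge X_t$ pointwise: if $a_t \in [k]$ this is an equality, while if $a_t \notin [k]$ then $a_t \neq b_t$ since $b_t \in [k]$, so $X_t = 0$. On the other hand, the event $\{a_t \in [k]\}$ is $\mathcal{G}_{t-1} \subseteq \mathcal{H}_{t-1}$-measurable and both $b_t$ and $Z_t$ are independent of $\mathcal{H}_{t-1}$, so a short case distinction gives $\Pr{Y_t = 1 \mid \mathcal{H}_{t-1}} = 1/k$ a.s. Since $Y_1, \dots, Y_{t-1}$ are $\mathcal{H}_{t-1}$-measurable, an induction on $T$ using the tower property shows $\Pr{Y_1 = y_1, \dots, Y_T = y_T} = \prod_{t=1}^T (1/k)^{y_t}(1-1/k)^{1-y_t}$; that is, $Y_1, \dots, Y_T$ are i.i.d.\ $\operatorname{Ber}(1/k)$, whence $\sum_{t=1}^T Y_t \sim \operatorname{Bin}(T, 1/k)$.

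Finally, $c(A,T) = \sum_{t=1}^T X_t \le \sum_{t=1}^T Y_t$ pointwise, which yields $\Pr{\operatorname{Bin}(T,1/k) \ge x} \ge \Pr{c(A,T) \ge x}$ for all real $x$, i.e.\ the claimed stochastic domination. When $S = [k]$ we always have $a_t \in [k]$, so $Y_t = X_t$ for every $t$ and therefore $c(A,T) = \sum_t Y_t$ has distribution $\operatorname{Bin}(T,1/k)$ exactly.

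The routine parts (the one-step conditional probability and the pointwise domination $Y_t \ge X_t$) are immediate; the one place requiring care is the independence bookkeeping: one must verify that $b_t$ is genuinely independent of the full history generated by the adversary's samples, the algorithm's decisions, and its coins, and that the constructed $Y_t$ are \emph{jointly} independent rather than merely marginally $\operatorname{Ber}(1/k)$ — which is exactly why I route everything through the filtration $\mathcal{H}_t$ and check that $\Pr{Y_t = 1 \mid \mathcal{H}_{t-1}}$ equals the constant $1/k$. That is the main (and only) obstacle. An alternative that sidesteps the joint-independence argument is to bound $\Ex{e^{\theta c(A,T)}} \le (1 + (e^\theta-1)/k)^T$ by the same conditioning and conclude via a Chernoff bound directly; this suffices for every application of the proposition in the paper, but proves slightly less than full stochastic domination.
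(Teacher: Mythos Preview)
Your proof is correct and follows essentially the same approach as the paper: both arguments hinge on the one-step identity $\Pr{a_t=b_t \mid \text{history}} = \tfrac{1}{k}\mathbf{1}(a_t\in[k])$, and both handle the general $S$ case by ``filling in'' the steps where $a_t\notin[k]$ to recover exact i.i.d.\ $\operatorname{Ber}(1/k)$ variables. The only cosmetic difference is that the paper does this by modifying the algorithm (replacing any $a_t\notin[k]$ with the fixed value $1$ to obtain a new algorithm $A'$ with $c(A,T)\le c(A',T)$ pointwise), whereas you introduce auxiliary independent $Z_t\sim\operatorname{Ber}(1/k)$ and couple $Y_t$ to $X_t$; your filtration bookkeeping via $\mathcal{H}_t$ is slightly more explicit about the joint-independence verification than the paper's direct computation, but the substance is identical.
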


\begin{proof} We begin by showing that for any algorithm $A$ taking values in a set $S $, there exists an algorithm $A'$ taking values in $[k]$ such that $c(A,T) \preceq c(A',T)$. We can consider each $a_t$ as a function  $a_t:((a_i,b_i)_{i=1}^{t-1},\omega_t)\mapsto S$, where $\omega_t\in\{0,1\}^{\mathbb{N}}$ is an independent $0/1$-random string. We define a new algorithm $A'$ by setting each output $a_t$ as follows 
\[
    a'_t =\begin{cases}
		a_t &\text{ if } a_t\in [k],\\
		1 &\text{ if } a_t\not\in [k].\\
	\end{cases}
\]    
Observe that if $a_t \in S\backslash [k]$ then, regardless of the value of $b_t$, we have $\mathbf{1}(a_t=b_t) =0$. It then follows that for every sequence $\big((a_i,b_i)_{i=1}^{t-1},\omega_t\big)_{t\in [T]} $  we have 
\[c(A,T)    =\sum_{t=1}^{T} \mathbf{1}(a_t=b_t, a_t \in [k])  = \sum_{t=1}^{T} \mathbf{1}(a'_t=b_t, a_t \in [k])  \leq c(A',T),    \]   
giving $c(A,T) \preceq c(A',T)$ as claimed. Thus, we can assume from now on that $S=[k]$. 

Now, for any such algorithm $A$ taking values in $[k]$, since $b_t$ is independent of $(a_i)_{i\in [t]}$ we have \[\Pr{a_t=b_t} = \sum_{c \in [k]} \Pr{a_t= b_t = c}= \sum_{c \in [k]} \Pr{b_t = c \mid a_t=c}\Pr{ a_t=c} = \sum_{c \in [k]} \frac{1}{k}\Pr{ a_t=c} = \frac{1}{k}.  \] Thus, for each $t\geq 0$, the random variable $X_t= \mathbf{1}(a_t=b_t) \sim \ber{1/k}$ is a Bernoulli random variable with probability $1/k$. 
    
    We now show that, for any $T\geq 1$, the random variables $\{X_t\}_{t \in [T]}$ are mutually independent. For $t\in [T]$,  $\mathbf{\alpha}   \in \{0,1\}^{t-1}$,  and $Q \subseteq [t-1]$, we define the event   $\mathcal{E}(Q,\mathbf{\alpha}) = \bigcap_{q\in Q}\{X_q=\alpha_q\}$.  Then,   
	\begin{align*}
		  \Pr{X_t =1 \mid \mathcal{E}(Q,\mathbf{\alpha})} &= \sum_{c \in [k]} \Pr{a_t= b_t = c  \mid \mathcal{E}(Q,\mathbf{\alpha})}\\
		&= \sum_{c \in [k]} \Pr{b_t = c   \mid a_t=c, \mathcal{E}(Q,\mathbf{\alpha})}\Pr{a_t= c  \mid     \mathcal{E}(Q,\mathbf{\alpha})}.
        \end{align*} 
        Now, since $b_t$ is independent of the sigma algebra generated by $\{ (a_i)_{i\in[t]}, (b_i)_{i\in [t-1]}\}$, which contains all events of the form $\{a_t=c\} \cap \mathcal{E}(Q,\alpha)$, and  $\Pr{a_t= \cdot   \mid \mathcal{E}(Q,\mathbf{\alpha})}$ is a probability distribution on~$[k]$, 
	\begin{equation*}
		  \Pr{X_t =1 \mid \mathcal{E}(Q,\mathbf{\alpha})}	= \sum_{c  \in [k]} \frac{1}{k} \cdot \Pr{a_t= c  \mid \mathcal{E}(Q,\mathbf{\alpha})} 	=  \frac{1}{k} = \Pr{X_t =1} , 
	\end{equation*}
    and similarly $\Pr{X_t =0 \mid \mathcal{E}(Q,\mathbf{\alpha})} = \Pr{X_t =0} $.  Thus, for any  $\alpha \in \{0,1\}^T$ and subset $I\subseteq [T]$ with largest element $i^*$ we have    
    \[ 
\begin{aligned}
\Pr{\cap_{i \in I}\{X_i=\alpha_i\} } &=  \Pr{X_{i^*}=\alpha_{i^*}\mid \cap_{i \in I\setminus \{i^*\}}\{X_i=\alpha_i\} } \cdot  \Pr{\cap_{i \in I\setminus \{i^*\}}\{X_i=\alpha_i\} } \\  
&  =  \Pr{X_{i^*}=\alpha_{i^*}  } \cdot  \Pr{\cap_{i \in I\setminus \{i^*\}}\{X_i=\alpha_i\} } =  \prod_{i \in I}\Pr{ X_i=\alpha_i  },  \end{aligned}\]by induction. Thus $c(A,T) = \sum_{t=1}^{T} X_t$ follows the Binomial distribution $\operatorname{Bin}(T,1/k)$.  
\end{proof}

\end{document}